\providecommand{\tabularnewline}{\\}
\theoremstyle{plain}
\newtheorem{thm}{\protect\theoremname}
\theoremstyle{definition}
\newtheorem{defn}{\protect\definitionname}
\theoremstyle{plain}
\newtheorem{cor}{\protect\corollaryname}
\theoremstyle{plain}
\newtheorem{prop}{\protect\propositionname}
\theoremstyle{plain}
\newtheorem{lem}{\protect\lemmaname}
\theoremstyle{remark}
\newtheorem{claim}{\protect\claimname}
\definecolor{clemson-orange}{RGB}{234,106,32}\definecolor{chicago-maroon}{RGB}{128,0,0}\definecolor{northwestern-purple}{RGB}{82,0,99}\definecolor{cornell-red}{RGB}{179,27,27}\definecolor{sauder-green}{RGB}{171,180,0}\definecolor{gray}{RGB}{192,192,192}\definecolor{lawngreen}{RGB}{0,250,154}\usepackage{graphicx}
\definecolor{Red}{rgb}{1,0,0}
\definecolor{Blue}{rgb}{0,0,1}
\definecolor{Green}{rgb}{0,1,0}
\definecolor{magenta}{rgb}{1,0,.6}
\definecolor{lightblue}{rgb}{0,.5,1}
\definecolor{lightpurple}{rgb}{.6,.4,1}
\definecolor{gold}{rgb}{.6,.5,0}
\definecolor{orange}{rgb}{1,0.4,0}
\definecolor{hotpink}{rgb}{1,0,0.5}
\definecolor{newcolor2}{rgb}{.5,.3,.5}
\definecolor{newcolor}{rgb}{0,.3,1}
\definecolor{newcolor3}{rgb}{1,0,.35}
\definecolor{darkgreen1}{rgb}{0, .35, 0}
\definecolor{darkgreen}{rgb}{0, .6, 0}
\definecolor{darkred}{rgb}{.75,0,0}
\definecolor{lightgrey}{rgb}{.7,.7,.7}
\renewenvironment{itemize}[1]{\begin{compactitem}#1}{\end{compactitem}}
\renewenvironment{enumerate}[1]{\begin{compactenum}#1}{\end{compactenum}}
\providecommand{\corollaryname}{Corollary}
\providecommand{\definitionname}{Definition}
\providecommand{\lemmaname}{Lemma}
\providecommand{\propositionname}{Proposition}
\providecommand{\remarkname}{Remark}
\providecommand{\theoremname}{Theorem}
\providecommand{\claimname}{Claim}
\providecommand{\claimname}{Claim}
\providecommand{\corollaryname}{Corollary}
\providecommand{\definitionname}{Definition}
\providecommand{\lemmaname}{Lemma}
\providecommand{\propositionname}{Proposition}
\providecommand{\theoremname}{Theorem}
\begin{document}
\global\long\def\pl{\underline{p}}%
\global\long\def\ph{\overline{p}}%
\global\long\def\pil{\underline{\pi}}%
\global\long\def\pih{\overline{\pi}}%
\global\long\def\Vh{\overline{V}}%
\global\long\def\Uh{\overline{U}}%

\title{\textbf{Keeping the Listener Engaged: a Dynamic Model of Bayesian
Persuasion}\thanks{We thank  {Emir Kamenica and four anonymous referees for many insightful and constructive suggestions. We are also grateful to }Martin Cripps, Jeff Ely, Faruk Gul,  Stephan
Lauermann, George Mailath, Meg Meyer, Sven Rady, Nikita Roketskiy,
Hamid Sabourian, Larry Samuelson, Sara Shahanaghi, and audiences in
various seminars and conferences for helpful comments and discussions.
Yeon-Koo Che is supported by National Science Foundation (SES-1851821);
he and Kyungmin Kim are supported by the Ministry of Education of
the Republic of Korea and the National Research Foundation of Korea
(NRF-2020S1A5A2A03043516). }}
\author{Yeon-Koo Che\qquad{}Kyungmin Kim\qquad{}Konrad Mierendorff\thanks{Che: Department of Economics, Columbia University (email: yeonkooche@gmail.com);
Kim: Department of Economics, Emory University (email: kyungmin.kim@emory.edu);
Mierendorff: Department of Economics, University College London.
Our dear friend and coauthor, Konrad Mierendorff, passed away in August 2021. All the ideas and results in this paper are collaborative work by three authors, but Che and Kim are responsible for any remaining errors.}}
\maketitle
\begin{abstract}
\noindent We consider a dynamic model of Bayesian persuasion in which
information takes time and is costly for the sender to generate and
for the receiver to process, and neither player can commit to their
future actions. Persuasion may totally collapse in a Markov perfect
equilibrium (MPE) of this game. However, for persuasion costs sufficiently small, a version of a folk theorem holds: outcomes that approximate \citet{kamenica/gentzkow:11}'s sender-optimal persuasion as well
as full revelation and everything in between are obtained in MPE,
as the cost vanishes.

\noindent \smallskip{}

\noindent \textbf{Keywords}: Bayesian persuasion, general Poisson
experiments, Markov perfect equilibria.\\
 \textbf{JEL Classification Numbers:} C72, C73, D83
\end{abstract}

\section{Introduction}

Persuasion is a quintessential form of communication in which one
individual (the sender) pitches an idea, a product, a political candidate,
a point of view, or a course of action, to another individual (the
receiver). Whether the receiver ultimately accepts that pitch---or
is ``persuaded''---depends on the underlying truth (the state of
the world) but importantly, also on the information the sender manages
to communicate. In remarkable elegance and generality, \citet[henceforth KG]{kamenica/gentzkow:11} show how the sender should communicate information in such a setting,
when she can perform \textit{any} (Blackwell) experiment \textit{instantaneously},
\textit{without any cost incurred} by her or by the receiver. This frictionlessness gives full commitment power to the sender, as she can publicly choose any experiment and reveal its outcome, all before the receiver can act.

In practice, however, persuasion is rarely frictionless. Imagine a
salesperson pitching a product to a potential buyer. The buyer may
have an interest in buying the product but requires some evidence
that it matches his needs. To convince the buyer, the salesperson
might demonstrate certain features of the product, or marshal customer
testimonies and sales records, any of which takes real time and effort.
Likewise, to process information, the buyer must pay attention, which
is costly.  Clearly, these features are present in other persuasion contexts, such as a prosecutor seeking to convince juries or a politician trying to persuade voters.

In this paper, we study the implications of these realistic frictions. Importantly, if information takes time to generate {but the receiver can act at any time}, the sender no longer \emph{automatically} enjoys full commitment power. Specifically, she cannot promise to the receiver what experiments she will perform in the future, effectively reducing her commitment
power to a current ``flow'' experiment. Given the lack of commitment
by the sender, the receiver may stop listening and take an action if he does not believe that the sender's future experiments are worth waiting for. The buyer in the example above may walk away at any time when he becomes sufficiently pessimistic about the product or about the prospect of the salesperson eventually persuading him.
We will examine \textit{to what extent} and \textit{in what manner}
the sender can persuade the receiver in this environment with limited
commitment. As we will demonstrate, the key challenge facing the sender
is to instill the belief that she is worth listening to, namely, to
\textit{keep the receiver engaged}.

We develop a dynamic version of the canonical persuasion model: the
state is binary, $L$ or $R$, and the receiver can take a binary
action, $\ell$ or $r$. The receiver prefers to match the state by
taking action $\ell$ in state $L$ and $r$ in state $R$, while
the sender prefers action $r$ regardless of the state. Time is continuous
and the horizon is infinite. At each point in time, unless the game
has ended, the sender may perform some ``flow'' experiment. In response,
the receiver either takes an action and ends the game, or simply waits
and continues the game. Both the sender's choice of experiment and
its outcome are publicly observable. Therefore, the two players always
share a common belief about the state.

The sender has a rich class of Poisson experiments at her disposal.
Specifically, we assume that at each instant the sender can generate
a collection of Poisson signals. The possible signals are flexible
in their \emph{directionalities}: a signal can be either good-news
(inducing a posterior above the current belief), or bad-news (inducing
a posterior below the current belief). In addition, the news can be of arbitrary
\emph{accuracy}: the sender can choose any target posterior, although
more accurate signals (with targets closer to $0$ or $1$) arrive
at a lower rate. Our model generalizes the existing Poisson models
in the literature which considered either a good-news or bad-news
Poisson experiment of given accuracy \citep[e.g.,][]{keller:05,keller2015breakdowns,CM2019}.

Any experiment, regardless of its accuracy, requires a flow cost $c>0$
(per unit of time) for the sender to perform and for the receiver
to process.  That the cost is the same for both players is a convenient normalization, with no material consequence (see Footnote \ref{fn:asymmetric_costs}). Our model of information allows for the flexibility and richness of \citet{kamenica/gentzkow:11}, but adds the friction that
information takes time to generate. This serves to isolate the effects
of the friction.

We may interpret the model in the canonical communication context, such as a salesperson pitching a product to a buyer.  The former is trying to persuade the latter that the product fits his  needs, an event denoted by $R$. Once inside the store, the buyer is deciding whether to listen to the pitch (wait), leave the store (action $\ell$), or  purchase the product (action $r$). We interpret the series of pitches made by the salesperson as experiments. A salesperson's pitches may include the types of product features demonstrated as well as her manner, tones, and body languages with which her messages are delivered. Hence, the pitches can reveal a lot about what she is ``intending'' to say, not just what she is saying, consistent with public observability of experiments assumed in our model. Meanwhile, whether the pitches succeed or not depends on the buyer's idiosyncratic needs, and is uncertain from the salesperson's perspective. It is also reasonable that an experienced salesperson could get feedback on her pitch directly or indirectly from the buyer's reactions, which would make the outcome of the experiment public.  As in our model, the key issue is whether the buyer believes the salesperson's pitches to be worth listening to. Our analysis will focus on this issue.

We study \emph{Markov perfect equilibria (MPE)} of this game, that
is, subgame perfect equilibrium strategy profiles that prescribe the
sender's flow experiment and the receiver's action ($\ell,r,$ or
``wait'') at each belief $p$---the probability that the state
is $R$. We are particularly interested in the equilibrium outcomes
when the frictions are sufficiently small (i.e., in the limit as the
flow cost $c$ converges to zero). In addition, we investigate the
\textit{persuasion dynamics} or the ``type of pitch'' the sender
uses to persuade the receiver in equilibria of this game.

\paragraph*{Is persuasion possible? If so, to what extent?}

Whether the sender can persuade the receiver depends on
whether the receiver finds her worth listening to, or more precisely,
on his belief that the sender will provide enough information to justify
his listening costs. This belief depends on the sender's future experimentation
strategy, which in turn rests on what the receiver will do if the
sender betrays her trust and reneges on her information provision.
The multitude of ways in which the players can coordinate on these
choices yields a folk-theorem-like result. There is an MPE in which no persuasion occurs. When the cost $c$ becomes arbitrarily small,
however, we also obtain a set of ``persuasion'' equilibria that
ranges from ones that approximate \citet{kamenica/gentzkow:11}'s
sender-optimal persuasion to ones that approximate full revelation; we show that any sender (receiver) payoff between these two extremes is attainable in the limit as $c$ tends to $0$.

In the ``persuasion failure'' equilibrium, the receiver is pessimistic
about the sender generating sufficient information, so he simply takes
an action without waiting for information. Facing this pessimism,
the sender becomes desperate and maximizes her chance of once-and-for-all
persuasion involving minimal information, which turns out to be the
sort of strategy that the receiver would not find worth waiting for,
justifying his pessimism.

In a persuasion equilibrium, by contrast, the receiver expects the
sender to deliver sufficient information to compensate his listening
costs. This optimism in turn motivates the sender to deliver on her
``promise'' of informative experimentation; if she reneges on her
experimentation, the ever optimistic receiver would simply wait for
experimentation to resume an instant later, instead of taking the
action that the sender would like him to take. In short, the receiver's
optimism fosters the sender's generosity in information provision,
which in turn justifies this optimism. As we will show, equilibria
with this ``virtuous cycle'' of beliefs can support  a wide range of outcomes from KG's optimal persuasion to full revelation, as the flow cost $c$ tends to 0.\footnote{The mechanism using a virtuous cycle of beliefs to support cooperative
behavior in a dynamic environment has been utilized in other economic
contexts. Among others, \citet{che2004dynamic} show how this mechanism
can be used to overcome the hold-up problem. In fact, the main tension
in our dynamic persuasion problem can be interpreted as a hold-up
problem: the receiver wants to avoid incurring listening costs if
the sender will behave opportunistically and not provide sufficient
information. However, the current paper differs in other crucial aspects;
in particular, the rich choice of information structures is unique
here and has no analogue in \citet{che2004dynamic}.}

\paragraph*{Persuasion dynamics.}

Our model informs us of what kind of pitch the sender should make at
each point in time, how long it takes for the sender to persuade the
receiver, if ever, and how long the receiver listens to the sender
before taking an action. The dynamics of the persuasion strategy adopted
in equilibrium unpacks rich behavioral implications that are absent
in the static persuasion model.

In our MPEs, the sender optimally makes use of the following three
strategies: (i) \textit{confidence-building}, (ii) \textit{confidence-spending},
and (iii) \textit{confidence-preserving}. The \textit{confidence-building}
strategy involves a bad-news Poisson experiment that induces the receiver's
belief (that the state is $R$) to either drift upward or jump to
zero.   Under this strategy, the belief moves upward for sure when the state is $R$ and quite likely even when the state is $L$; in fact, this strategy  minimizes
the probability of bad news, by insisting that the news be conclusive. The sender finds it optimal to use this strategy when the receiver's belief is already close to the persuasion target (i.e., the belief that will trigger him to choose $r$).

The \textit{confidence-spending} strategy involves a good-news Poisson
experiment that generates an upward jump to some target belief, either
one inducing the receiver to choose $r$, or at least one inducing
him to listen to the sender. Such a jump arises rarely, however, and
absent this jump, the receiver's belief drifts downward. In this sense,
this strategy is a risky one that ``spends'' the receiver's confidence
over time. This strategy is used when the receiver is already quite pessimistic about $R$, so that either the confidence-building strategy would take too long, or the receiver would simply not listen. In particular, it is used as a ``last ditch'' effort, when the sender is close
to giving up on persuasion or when the receiver is about to choose $\ell$.

The \textit{confidence-preserving} strategy combines the above two
strategies---namely, a good-news Poisson experiment inducing the
belief to jump to a persuasion target, and a bad-news Poisson experiment
inducing the belief to jump to zero. This strategy is effective if
the receiver is sufficiently skeptical relative to the persuasion
target so that the confidence-building strategy will take too long.
Confidence spending could expedite persuasion for a range of beliefs but would run down the receiver's confidence in the process. Hence, at some point
the sender finds it optimal to switch to the confidence-preserving
strategy, which prevents the receiver's belief from deteriorating
further. The belief where the sender switches to this strategy constitutes
an absorbing point of the belief dynamics; from then on, the belief
does not move, unless either a sudden persuasion breakthrough or breakdown
occurs.

The equilibrium strategy of the sender combines these three strategies
in different ways under different economic conditions, thereby exhibiting
rich and novel persuasion dynamics. Our characterization in Section
\ref{sec:persuasion_dynamics} describes precisely how the sender
uses them in different equilibria.

\paragraph*{Related literature.}

 This paper primarily contributes
to the Bayesian persuasion literature that began with \citet{kamenica/gentzkow:11},
by studying the problem in a dynamic environment. Several recent papers  also consider dynamic models \citep[e.g.,][]{brocas2007influence,kremer/mansour/perry:14,au:15,ely:17,renault/solan/vieille:17,che/horner:2018,Henry2017,ely/szydlowski:2017,bizotto/rudiger/vigier:16,orlov2020persuading,marinovic2020monitor}. Our focus is different from most of these papers since we consider gradual production of information \emph{and} assume that there is no commitment.\footnote{\citet{orlov2020persuading} characterize an equilibrium that resembles some aspects of our equilibrium in a model where the sender (agent) faces no constraint in the release of information. In particular, they show that the sender may ``pipet'' information---release information gradually---in a way that resembles our {confidence-building} ($R$-drifting) strategy. The resemblance is more apparent than fundamental, however. In their main model, the sender intrinsically prefers the receiver to delay exercise of a real option; that is, the delay of the receiver's action per se is desired by the sender. She can fully reveal the state instantaneously but chooses to delay release of information in order to incentivize the receiver to wait longer. In our model, the sender has no intrinsic preferences for delay and provides information only to persuade the receiver to take a particular final action.}

Two papers closest to ours in this regard are \citet{brocas2007influence}
and \citet{Henry2017}, who restrict the set of feasible experiments
so that information arrives gradually. The former considers a binary
signal in a discrete-time setting, and the latter employs a drift-diffusion
model in a continuous-time setting.\footnote{\citet{McClellan2017} and \citet{escudeludvig} also study dynamic persuasion in drift diffusion models. \citet{McClellan2017} characterizes the optimal dynamic approval mechanism under full commitment. \citet{escudeludvig} consider a sender dynamically optimizing against a receiver who chooses a series of actions myopically.} Unlike our model, the receiver
in their models cannot stop listening and take an action at any time:
he can move only after the sender stops experimenting \citep{brocas2007influence}
or applies for approval \citep{Henry2017}. This modeling difference
reflects interests in different economic problems/contexts; for example,
\citet{Henry2017} focus on regulatory approval, while we study persuasive
\emph{communication}. However,  the difference leads to very different persuasion outcomes: in their models, complete persuasion failure never occurs,
and there exists a unique equilibrium.\footnote{\label{fn:HO}\citet{Henry2017} consider three regimes that differ in the players' commitment power. Their informer-authority regime
corresponds to the sender-optimal dynamic outcome, in that the sender
stops as soon as the belief reaches the minimal point at which the
receiver is willing to take action $r$ (approves the project). It
is easy to show that in this case, if the receiver could reject/accept
the project unilaterally at any time, and discounted his future payoff
or incurred a flow cost as in our model, he would take an action immediately
without listening, and persuasion {would} fail completely. Their
``no-commitment'' regime is similar to our model, but with the crucial
difference that the sender does not have the option to ``pass,"
that is, to stop experimenting without abandoning the project. This
feature allows the receiver (e.g., a drug approver) to force the sender
to keep experimenting, resulting in the ``receiver-optimal'' persuasion as the unique equilibrium outcome. If ``passing'' were an available option as we assume in our model, multiple equilibria supported by ``virtuous cycles'' of beliefs would arise even in their drift-diffusion model, producing a range of persuasion outcomes and ultimately leading to the same kind of result as our Theorem \ref{thm:folk} (see Footnote \ref{fn:Brownian-Motion} below). Finally, their evaluator-authority case is obtained when the receiver can commit to an acceptance threshold.} Another important difference is that the sender in their models does not enjoy the richness and control of information structures: in both papers, the sender decides simply whether to continue or not, and has no influence over the type of information generated.

The receiver's problem in our paper involves a stopping problem, which has been studied extensively in the single agent context, beginning with \citet{wald:47} and \citet{arrow/blackwell/girshick:49}. In particular, \citet{nikandrova/pancs:15}, \citet{CM2019} and \citet{Mayskaya2016}
study an agent's stopping problem when she acquires information through Poisson experiments.\footnote{The Wald stopping problem has also been studied with drift-diffusion learning \citep[e.g.,][]{moscarini/smith:01,Ke2016,fudenberg:18},
and in a model that allows for general endogenous experimentation
\citep[see][]{Zhong2022}.} \citet{CM2019} introduced the general class of Poisson experiments
adopted in this paper. However, the generality is irrelevant in their
model, because unlike here, the decision maker optimally chooses only
between two conclusive experiments (i.e., never chooses a non-conclusive
experiment).

The  paper is organized as follows. Section \ref{sec:model} introduces
the model. Section \ref{sec:illustate} illustrates the main ideas of our equilibria. Sections \ref{sec:folk} and \ref{sec:persuasion_dynamics} characterize our MPE strategies and study their payoff implications. Section \ref{sec:conclusion} concludes.

\section{Model}\label{sec:model}

We consider a game in which a \emph{Sender} (``she'') wishes to
persuade a \emph{Receiver} (``he''). There is an unknown state $\omega$
which can be either $L$ (``left'') or $R$ (``right''). The receiver
ultimately takes a binary action $\ell$ or $r$, which yields the
following payoffs:
\begin{center}
\begin{tabular}{ccc}
\multicolumn{3}{c}{\textbf{Payoffs for the sender and the receiver}}\tabularnewline
\hline
states/actions  & $\ell$  & $r$ \tabularnewline
\hline
$L$  & $(0,u_{\ell}^{L})$  & $(v,u_{r}^{L})$ \tabularnewline
$R$  & $(0,u_{\ell}^{R})$  & $(v,{u_{r}^{R}})$ \tabularnewline
\hline
\end{tabular}
\par\end{center}

\noindent The receiver gets $u_{a}^{\omega}$ if he takes action $a\in\{\ell,r\}$
when the state is $\omega\in\{L,R\}$. The sender's payoff depends
only on the receiver's action: she gets $v$ if the receiver takes
$r$ and zero otherwise. We assume $u_{\ell}^{L}>\max\{u_{r}^{L},0\}$
and $u_{r}^{R}>\max\{u_{\ell}^{R},0\}$, so that the receiver prefers
to match the action with the state, and also $v>0$, so that the sender
prefers action $r$ to action $\ell$. Both players begin with a common
prior $p_{0}$ that the state is $R$, and use Bayes rule to update
their beliefs.

\begin{figure}
\raggedright{}\centering{}\beginpgfgraphicnamed{KGexplain}
    \begin{tikzpicture}[scale=1]

\draw[line width=0.5pt] (0, 4.3) -- (0,0) -- (6,0) -- (6,4.3);

\draw[dashed,line width=0.5pt] (0,0) -- (3,4) -- (6,4);

\draw[loosely dotted] (3,0)--(3,4);
\draw[loosely dotted] (1.5,0)--(1.5,2);
\draw[loosely dotted] (0,4)--(3,4);

\draw[dashdotted,line width=1pt] (0,0)--(6,4);

\fill (0,0) node[below] {\footnotesize{$0$}};
\fill (1.5,-0.1) node[below] {\footnotesize{$p_{0}$}};
\fill (3,0) node[below] {\footnotesize{$\hat p$}};
\fill (6,0) node[below] {\footnotesize{$1$}};

\fill (0,4) node[left] {\footnotesize{$v$}};

\draw[line width=0.5pt] (0,0) -- (3,0);
\draw[line width=0.5pt] (3,4) -- (6,4);

\fill (canvas cs:x=0cm,y=0cm) circle (2pt);
\fill (canvas cs:x=1.5cm,y=2cm) circle (2pt);
\fill (canvas cs:x=3cm,y=4cm) circle (2pt);
\fill (canvas cs:x=6cm,y=4cm) circle (2pt);

\draw (3,4.6) node {\textbf{Sender}};

\draw[xshift=7.5cm] (0, 4.3) -- (0,0) -- (6,0) -- (6,4.3);

\draw[xshift=7.5cm,line width=0.5pt] (0,3.5) -- (3,2) -- (6,4);

\draw[xshift=7.5cm,dashed,color=gray](0,0)--(3,2)--(6,0.5);

\draw[xshift=7.5cm] (3,4.6) node {\textbf{Receiver}};

\fill[xshift=7.5cm] (0,0) node[below] {\footnotesize{$0$}};
\fill[xshift=7.5cm] (1.5,-0.1) node[below] {\footnotesize{$p_{0}$}};
\fill[xshift=7.5cm] (3,0) node[below] {\footnotesize{$\hat p$}};
\fill[xshift=7.5cm] (6,0) node[below] {\footnotesize{$1$}};

\draw[xshift=7.5cm,loosely dotted] (3,0)--(3,2);
\draw[xshift=7.5cm,loosely dotted] (1.5,0)--(1.5,2.75)--(0,2.75);

\fill[xshift=7.5cm] (0,3.5) node[left] {\footnotesize{$u_{\ell}^{L}$}};
\fill[xshift=7.5cm] (0,0) node[left] {\footnotesize{$u_{r}^{L}$}};
\fill[xshift=7.5cm] (6,0.5) node[right] {\footnotesize{$u_{\ell}^{R}$}};
\fill[xshift=7.5cm] (6,4) node[right] {\footnotesize{$u_{r}^{R}$}};

\draw[xshift=7.5cm,dashdotted,line width=0.5pt] (0,3.5)--(6,4);

\fill[xshift=7.5cm] (canvas cs:x=0cm,y=3.5cm) circle (2pt);
\fill[xshift=7.5cm] (canvas cs:x=3cm,y=2cm) circle (2pt);
\fill[xshift=7.5cm] (canvas cs:x=6cm,y=4cm) circle (2pt);
\fill[xshift=7.5cm] (canvas cs:x=1.5cm,y=2.75cm) circle (2pt);

\end{tikzpicture}
\endpgfgraphicnamed
\caption{\label{fig:KG} Payoffs from static persuasion. Solid curves: payoffs without persuasion (information). Dashed curve: the sender's expected payoff in the KG solution. Dash-dotted curves: payoffs under a fully revealing experiment.}
\end{figure}
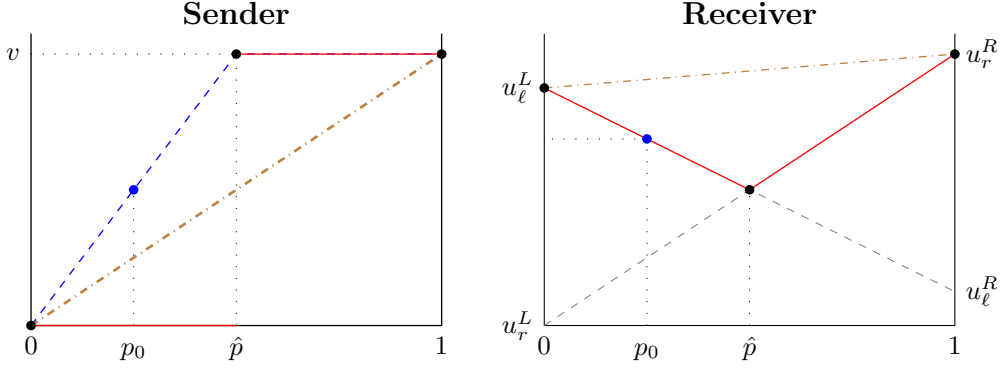

\paragraph*{KG Benchmark.} By now, it is well understood how the sender optimally persuades the
receiver if she can commit to an experiment without any restrictions.
For each $a\in\{\ell,r\}$, let $U_{a}(p)$ denote the receiver's
expected payoff when he takes action $a$ with belief $p$. In addition,
let $\hat{p}$ denote the belief at which the receiver is indifferent
between actions $\ell$ and $r$, that is, $U_{\ell}(\hat{p})=U_{r}(\hat{p})$.\footnote{Specifically, for each $p\in[0,1]$, $U_{\ell}(p):=pu_{\ell}^{R}+(1-p)u_{\ell}^{L}$
and $U_{r}(p):=pu_{r}^{R}+(1-p)u_{r}^{L}$. Therefore, $\hat{p}=\left(u_{\ell}^{L}-u_{r}^{L}\right)/\left(u_{r}^{R}-u_{\ell}^{R}+u_{\ell}^{L}-u_{r}^{L}\right)$,
which is well-defined in $(0,1)$ under our assumptions on the receiver's
payoffs.}

If the sender provides no information, then the receiver takes action
$r$ when $p_{0}\geq\hat{p}$. Therefore, persuasion is necessary
only when $p_{0}<\hat{p}$. In this case, the KG solution prescribes
an experiment that induces only two posteriors, $q_{-}=0$ and $q_{+}=\hat{p}$.
The former leads to action $\ell$, while the latter results in action
$r$. This experiment is optimal for the sender, because $\hat{p}$
is the minimum belief necessary to trigger action $r$, and setting
{$q_{-}=0$} maximizes the probability of generating $\hat{p}$,
and thus action $r$. The resulting payoff for the sender is $p_{0}v/\hat{p}$,
as given by the dashed line in the left panel of Figure \ref{fig:KG}.
The flip side is that the receiver enjoys no rents from persuasion;
his payoff is $\mathcal{U}(p_{0}):=\max\{U_{\ell}(p_{0}),U_{r}(p_{0})\}$, the
same as if no information were provided, as depicted in the right
panel of Figure \ref{fig:KG}.

\paragraph*{Dynamic model.}

We consider a dynamic version of the above Bayesian persuasion problem. Time flows continuously starting at 0. Unless the game has ended, at each point in time $t\ge0$, the sender may perform an experiment at a constant flow cost $c$ from a feasible set, which will be described precisely below, or \emph{pass}---not running any experiment and not incurring the flow cost $c$.\footnote{Passing enables the sender to stop experimenting at no cost. As will be seen, the experimentation always costs $c>0$ even at low intensity (informativeness). While this involves a form of discontinuity, it is largely for analytic convenience. Our results remain unchanged even if the cost is proportional to the intensity of the experiment \citep[see][]{Che2021}. One may also wonder what would happen if passing incurs the same cost $c$ as experimentation---a natural assumption if $c$ is interpreted as the waiting cost rather than the experimentation cost.  Our main results would still go through under this assumption, except for some details of the equilibrium characterization. Without the sender being able to freely stop experimenting, she would never give up on persuading, so the lower boundary of the experimentation region, denoted by $p_{\ast}$ later, is always determined by the receiver's incentives, as in Proposition \ref{prop:C2_hold}.} Just as it is costly for the sender to produce information, it is also costly for the receiver to process it. Specifically, if the sender experiments, then the receiver also pays the same flow cost and observes the experiment and its outcome. After that, he decides whether to take an irreversible action ($\ell$ or $r$), or to wait and listen to the information provided by the sender in the next instant. The former ends the game, while the latter lets the game continue.

There are two notable modeling assumptions. First, the receiver can
stop listening to the sender and take a game-ending action at any
point in time. This is the fundamental difference from KG, wherein
the receiver is allowed to take an action only after the sender finishes
her information provision. Second, the players' flow costs are assumed to be the same. This is, however, just a normalization
which allows us to directly compare the players' payoffs, and all
subsequent results can be reinterpreted as relative to each player's
individual flow cost.\footnote{\label{fn:asymmetric_costs}Suppose that the sender's cost is given
by $c_{s}$, while that of the receiver is $c_{r}$. Such a model
is equivalent to our normalized one in which $c_{r}^{\prime}=c_{s}^{\prime}=c_{r}$
and $v^{\prime}=v(c_{r}/c_{s})$. When solving the model for {a}
fixed set of parameters ($u_{a}^{\omega},v,c,\lambda$), this normalization
does not affect the results. If we let $c$ tend to $0$, we are implicitly
assuming that the sender's and receiver's (unnormalized) costs, $c_{s}$
and $c_{r}$, converge to zero at the same rate. See Footnote \ref{fn:relative_cost_discuss} for a relevant discussion.}

\paragraph*{Feasible experiments.}

We consider a general class of experiments whose informativeness per unit time is bounded in a proper way. Formally, we let $p_{t}$ denote the belief that $\omega=R$ at time $t$ and represent an experiment by a {\it regular} martingale process $\langle p_{t} \rangle$---i.e., a c\`{a}dl\`{a}g martingale process over $X:=[0,1]$ that is progressively measurable with respect to its natural filtration $\{\mathcal{F}_{t}\}$---with countably many discontinuities and a deterministic continuous path at each point in history. Its martingale property follows from the law of iterated expectations (or Bayes plausibility). We let $\mathcal{P}$  denote the set of all regular martingale processes.\footnote{The requirement of deterministic continuous path means that $\mathcal{P}$ does not include diffusion processes such as Brownian motion. But the class $\mathcal{P}$ encompasses a large class of jump (Poisson) processes. The implications of relaxing this requirement for our results (i.e., whether the sender would prefer a belief process failing this requirement to Poisson processes we allow in our model) remain an open question.}

For any $q\neq p_{-}:=\lim_{t'\uparrow t} p_{t'}$, let $\lambda^{\omega}(q,p_{t-}):=\lim_{dt\to 0} \mathbb{P}[p_{t}=q|p_{t-dt}, \omega]/{dt}$ denote the rate at which the belief changes from $p_{t-}$ to $q$ in state $\omega$. The set of feasible experiments is then defined as:
\begin{equation*}
    \mathcal{P}^{\ast}:=\left\{\langle p_{t} \rangle\in\mathcal{P}: \sum_{q\ne p_{t-}} |\lambda^R(q,p_{t-}) -\lambda^L(q,p_{t-})|\le \lambda, \text{ for all $t$ and $p_{t-}$}\right\}.
\end{equation*}
The set $\mathcal{P}^{\ast}$ includes all Poisson processes whose state-contingent jump rates $(\lambda^{L},\lambda^{R})$ satisfy $|\lambda^{L}-\lambda^{R}|\leq \lambda$ at each point in history; the feasible arrival rates are depicted by the shaded area in Figure \ref{fig:feasible-experiments}. It also includes all mixtures of those Poisson experiments.

\begin{figure}[htb]
 \raggedright{}\centering{}
\begin{tikzpicture}[scale=0.65]
		
 		\draw [<->,line width=0.5pt] (0,8)--(0,0)--(8,0);
 		\fill (0,0) node[below] {\footnotesize{$0$}};
 		\fill (0,8) node[left] {\footnotesize{$\lambda^{R}$}};
 		\fill (8,0) node[below]{\footnotesize{$\lambda^{L}$}};
		
 		\draw [line width=0.5pt,dotted] (0,0)--(8,8);
		
 		\fill (2.8,0) node [below] {\footnotesize{$\lambda$}};
 		\fill (0,3) node [left] {\footnotesize{$\lambda$}};

 		\fill[gray!50!white,fill opacity=0.5] (0,0)--(3,0)--(8,5)--(8,8)--(5,8)--(0,3)--(0,0);
		
 		\draw[line width=0.5pt,dashed] (0,3)--(5,8);

 		\draw[line width=0.5pt,dotted] (0,0)--(3.75,6.75);

         \draw[line width=0.5pt,dotted] (3.75,0)--(3.75,6.75)--(0,6.75);
         \fill (3.75,0) node[below] {\footnotesize{$\mu$}};
         \fill (0,6.75) node[left]{\footnotesize{$\lambda+\mu$}};

         \filldraw (0,3) circle [radius=3pt];	
		
 		\draw[fill=white] (3.75,6.75) circle (3pt);
		
 		\end{tikzpicture}
\caption{\label{fig:feasible-experiments} Arrival rates of feasible Poisson
experiments.}
\end{figure}
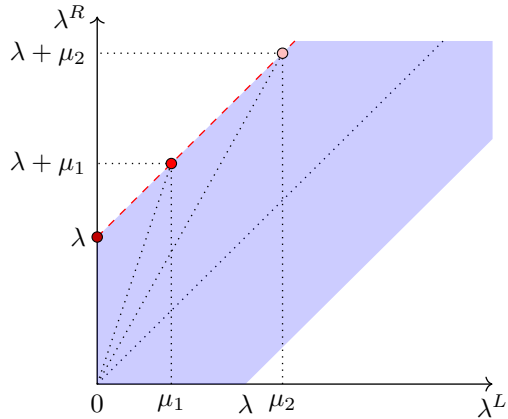
In fact, any information in  our class can be generated by ``diluting'' a conclusive Poisson signal arriving at rate  $\lambda$.  Consider  a conclusive signal that arrives in state $R$ at rate $\lambda$, depicted by a  black dot in Figure \ref{fig:feasible-experiments}. One can then add a white noise arriving in {\it both} states at some rate $\mu$ to this conclusive signal. The resulting signal, depicted in the figure by a  white dot, then arrives more frequently at rates $(\mu, \lambda + \mu)$ but is less precise, moving the belief only to posterior $q=\frac{p(\lambda+\mu)}{(1-p)\mu+p(\lambda+\mu)}(<1)$. The constant bound for the arrival rate differences means that the constraint on   flow information is independent of a prior, or {\it experimental}, as defined by \cite{denti2021experimental}; this stands in contrast to   other models such as {\it rational inattention} which assumes (prior-dependent) Shannon information cost or capacity.

\begin{lem} \label{lem:info}
An experiment $\langle p_t\rangle$ is feasible (i.e., $\langle p_t\rangle \in \mathcal{P}^{\ast}$) if and only if the following property holds at each point in history: there exists $\alpha:[0,1]\to[0,1]$ such that $\sum_{q\neq p}\alpha(q)\leq 1$;
\begin{itemize}
\item [(a)] for any $q\neq p$, the arrival rate of posterior belief $q$ given $p_{t-}=p$ is equal to
\begin{equation*}
    \alpha(q) \frac{ \lambda p(1-p)}{|q-p|}; \text{and}
\end{equation*}
\item [(b)] conditional on no jump, the belief  drifts according to
\begin{equation*}
     \dot{p}=-\left(\sum_{q>p}\alpha(q)-\sum_{q<p}\alpha(q)\right)\lambda p(1-p).
\end{equation*}
\end{itemize}
\end{lem}
\begin{proof}
See Appendix \ref{appendix:flow_bound}.
\end{proof}

Lemma \ref{lem:info} shows that a feasible flow experiment can be represented by the shares $\alpha$ of a unit ``capacity'' allocated to  Poisson experiments that trigger jumps to alternative posterior beliefs $q$, at rates $\alpha(q) \frac{\lambda p(1-p)}{|q-p|}$.  The jump rate in Part (a) simplifies to an expression familiar from the existing literature when the sender triggers a single jump with $\alpha(q)=1$  to  conclusive news  with either $q=0$ or $q=1$. For instance, conclusive $R$-evidence
($q=1$) is obtained at the rate of $\lambda p$, as is assumed in
``good'' news models \citep[see, e.g.,][]{keller:05}. Likewise,
conclusive $L$-evidence ($q=0$) is obtained at the rate of $\lambda(1-p)$,
as is assumed in ``bad'' news models \citep[see, e.g.,][]{keller2015breakdowns}.
Our model allows for such conclusive news, but it also allows for arbitrary non-conclusive news with $q\in(0,1)$, as well as any arbitrary mixture among such experiments. Further,  our information constraint captures the intuitive idea that more accurate information takes longer to generate. For example, assuming $q>p$, the arrival rate increases as the news becomes less precise ($q$ falls), and it approaches infinity
as the news becomes totally uninformative (i.e., as $q$ tends to $p$). Lastly, limited arrival rates  capture an important feature of our model that any meaningful persuasion takes time and requires delay.

Part (b) describes the law of motion governing the drift of beliefs when no jump occurs. Strikingly,  the drift rate depends only on the difference between the fractions of the capacity allocated to ``right'' versus ``left''
Poisson signals. That is, the rate does not depend on the precision
$q$ of the news in the individual experiments. The reason is
that the precision of news and its arrival rate offset each other,
leaving the drift rate unaffected.\footnote{Suppose $q>p$. This means that the sender has chosen $\nu^{R}=\lambda$ for the informative signal and $\mu\ge0$ for the noise. It is clear that $\mu$ does not affect the updating of the state since the noise
arrives at the same rate in both states.} This feature makes the analysis tractable while at the same time generalizing conclusive Poisson models in an intuitive way.

\begin{figure}
\raggedright{}\centering{}\beginpgfgraphicnamed{fig:feasibleexs}
\begin{tikzpicture}[scale=1.05]

			\node[left] at (-0.2,3) {$R$-drifting, targeting $0$:};
			\draw[|-|, thick] (0, 3) -- (8,3);
			\node[below] at (0, 2.9) {\footnotesize{$0$}};
			\node[below] at (8, 2.9) {\footnotesize{$1$}};

			\filldraw[fill=white] (0,3) circle [radius=3pt];
			\node[below] at (4, 2.9) {\footnotesize{$p$}};
			\filldraw[fill=white] (4,3) circle [radius=3pt];
			\draw[>={Stealth[length=6pt]},->>,thick] (4.1,3) -- (4.6,3);

			\draw[>={Stealth[length=6pt,width=4pt]},->,thick] (3.92,3.08) .. controls (2.75,3.7) and (1.25,3.7) .. (0.08,3.08);

			\node[left] at (-0.2,1.5) {$L$-drifting, targeting $q$:};
			
			\draw[|-|, thick] (0, 1.5) -- (8,1.5);
			\node[below] at (0, 1.4) {\footnotesize{$0$}};
			\node[below] at (8, 1.4) {\footnotesize{$1$}};

			\node[below] at (4, 1.4) {\footnotesize{$p$}};
			\filldraw[fill=white] (4,1.5) circle [radius=3pt];
			\node[below] at (7, 1.4) {\footnotesize{$q$}};
			\filldraw[fill=white] (7,1.5) circle [radius=3pt];
			\draw[>={Stealth[length=6pt]},->>,very thick] (3.9,1.5) -- (3.4,1.5);

			\draw[>={Stealth[length=6pt,width=4pt]},->,thick] (4.08,1.58) .. controls (5,2.1) and (6,2.1) .. (6.92,1.58);

			\node[left] at (-0.2,0) {Stationary, targeting $0$ and $q$:};
			
			\draw[|-|, thick] (0, 0) -- (8,0);
			\node[below] at (0, -0.1) {\footnotesize{$0$}};
			\node[below] at (8, -0.1) {\footnotesize{$1$}};

			\node[below] at (4, -0.1) {\footnotesize{$p$}};
			\filldraw[fill=white] (4,0) circle [radius=3pt];
			\node[below] at (7, -0.1) {\footnotesize{$q$}};
			\filldraw[fill=white] (7,0) circle [radius=3pt];
			
			\draw[>={Stealth[length=6pt,width=6pt]},->,thick] (4.08,0.08) .. controls (5,0.6) and (6,0.6) .. (6.92,0.08);

			\filldraw[fill=white] (0,0) circle [radius=3pt];

			\draw[>={Stealth[length=6pt,width=4pt]},->,thick] (3.92,0.08) .. controls (2.75,0.7) and (1.25,0.7) .. (0.08,0.08);

\end{tikzpicture}
\endpgfgraphicnamed\caption{\label{fig:feasible_exs} Three prominent feasible experiments.}
\end{figure}
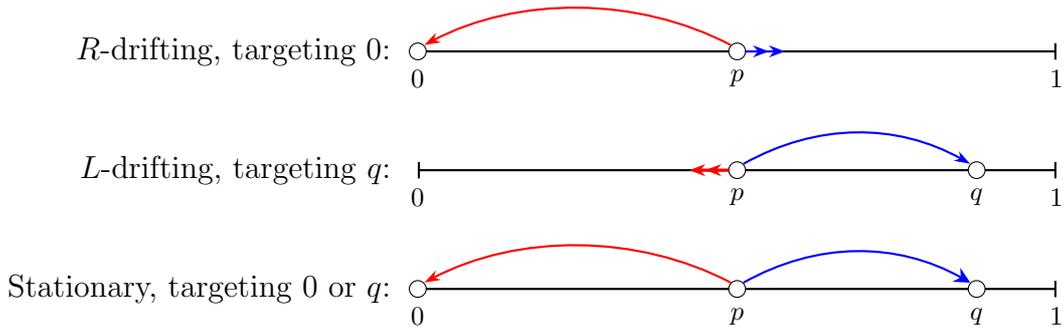

Among many feasible experiments, the following three, visualized in
Figure \ref{fig:feasible_exs}, will prove particularly relevant for
our purposes. They formalize the three modes of persuasion discussed
in the introduction.
\begin{itemize}
\item \textbf{$R$-drifting experiment} (confidence building):  $\alpha(0)=1$. The sender devotes all her capacity to a Poisson
experiment with (posterior) jump target $q=0$. In the absence of a jump, the posterior drifts to the right, at rate $\dot{p}=\lambda p(1-p)$.

\item \textbf{$L$-drifting experiment} (confidence spending):  $\alpha(q)=1$ for some $q>p$. The sender devotes all her capacity
to a Poisson experiment with jumps targeting some posterior $q>p$.
The precise jump target $q$ will be specified in our equilibrium
construction. In the absence of a jump, the posterior drifts to the
left, at rate $\dot{p}=-\lambda p(1-p)$.

\item \textbf{Stationary experiment} (confidence preserving): $\alpha(0)=\alpha(q)=1/2$ for some $q>p$. The sender assigns an equal share of her capacity to an experiment targeting $0$ and one targeting $q$. Absent jumps, the posterior remains unchanged.
\end{itemize}

\paragraph*{Solution concept.} We study (pure-strategy) Markov Perfect equilibria (MPE, hereafter) of this dynamic game in which both players' strategies depend only on the current belief $p$.\footnote{Naturally, this solution concept limits the use of (punishment) strategies depending on the payoff-irrelevant part of the histories, and serves to discipline strategies off the equilibrium path. For non-Markov equilibria, see our discussion in Section \ref{sec:conclusion}.} Formally, a profile of Markov strategies specifies for each $p\in[0,1]$, a flow experiment $\sigma^{S}(p)=(\alpha(q;p))_{q\in[0,1]}$ chosen by the sender, and an action $\sigma^{R}(p)\in\{\ell,r,\mbox{wait}\}$
chosen by the receiver. Given $\sigma=(\sigma^{S},\sigma^{R})$ and prior belief $p_{0}$, let $p_{t}$ denote the belief at time $t$ induced by the strategy profile and $\tau$ denote the stopping time at which the receiver takes action $\ell$ or $r$. Then, the sender's expected payoff is given by
\[
V^{\sigma}(p_{0})=v\,\mathbb{P}\left[\sigma^{R}(p_{\tau})=r\middle|p_{0}\right]-{c\mathbb{E}\left[\int_{0}^{\tau}\boldsymbol{1}_{\left\{\sum\alpha(q;p_{t})>0\right\} }dt\middle|p_{0}\right]},
\]
while the receiver's expected payoff is given by
\[
U^{\sigma}(p_{0})=\mathbb{E}\left[U_{\sigma^{R}(p_{\tau})}(p_{\tau})|p_{0}\right]-{c\mathbb{E}\left[\int_{0}^{\tau}\boldsymbol{1}_{\left\{\sum\alpha(q;p_{t})>0\right\} }dt\middle|p_{0}\right]}.
\]
A strategy profile $\sigma=(\sigma^{S},\sigma^{R})$ is {\it admissible} if the law of motion governing the belief evolution is well defined (see Appendix \ref{appendix:admissible} for detail) and the stopping time $\tau$ is also well defined. Let $\Sigma$ denote the set of all admissible strategy profiles.

\begin{defn}[Markov Perfect Equilibrium]\label{def:mpe}
A  strategy profile $\sigma=(\sigma^{S},\sigma^{R})\in \Sigma$
is a \emph{Markov perfect equilibrium} (MPE) if
\begin{enumerate}
\item[(i)] $V^{\sigma}(p)\ge V^{\hat{\sigma}}(p)$ for all $p\in[0,1]$ and   $\hat{\sigma}=(\hat{\sigma}^{S},\sigma^{R})\in \Sigma$,

\item[(ii)] $U^{\sigma}(p)\ge U^{\hat{\sigma}}(p)$ for all $p\in[0,1]$ and   $\hat{\sigma}=(\sigma^{S},\hat{\sigma}^{R})\in \Sigma$, and
\item[(iii)] for any $p$ such that the receiver stops (i.e., $\sigma^{R}(p)\in\{\ell,r\}$), \hfill{}(refinement)
\[
\sigma^{S}(p)\in\arg\max_{\alpha(\cdot;p)}\sum_{q}\alpha(q,p)\frac{\lambda p(1-p)}{\left|q-p\right|}\left(V^{\sigma}(q)-\boldsymbol{1}_{\left\{ \sigma^{R}(p)=r\right\} }v\right)-\boldsymbol{1}_{\left\{\sum\alpha(q;p_{t})>0\right\} }c.
\]
\end{enumerate}
\end{defn}

Whereas (i) and (ii) are obvious equilibrium requirements, (iii) imposes a restriction that captures the spirit of ``perfection'' in our continuous-time framework. To see its role clearly, suppose that the receiver would choose action $\ell$ unless the sender changes the belief significantly by running a flow experiment. In discrete time, the sender would simply choose a flow experiment that maximizes her expected payoff. In continuous time, however, the sender's strategy at such a point is inconsequential for her payoff; with probability one, the game would end with the receiver taking action $\ell$. With no further restriction on the sender's strategy, this continuous time peculiarity leads to severe but uninteresting equilibrium multiplicity (see Footnote \ref{fn:refinement_Thoerem1}). Property (iii) enables us to avoid the problem, by requiring the sender to choose a strategy that maximizes her \emph{instantaneous payoff normalized by $dt$} in the stopping region; it can be seen as selecting an MPE that is robust to a discrete-time approximation.

\section{Illustration: \emph{Persuading the Receiver to Listen}}

\label{sec:illustate}

We begin by illustrating the key issue facing the sender: \emph{persuading
the receiver to listen}. To this end, consider any prior $p_{0}<\hat{p}$
so that persuasion is not trivial and suppose that the sender repeatedly
chooses $R$-drifting experiments with jumps targeting $q=0$ until
the posterior either jumps to $0$ or drifts to $\hat{p}$, as depicted
on the horizontal axis in Figure \ref{fig:KG-Ldrifting}. This strategy
exactly replicates the KG solution (in the sense that it yields the
same probabilities of reaching the two posteriors, $0$ and $\hat{p}$),
provided that the receiver listens to the sender for a sufficiently
long time.

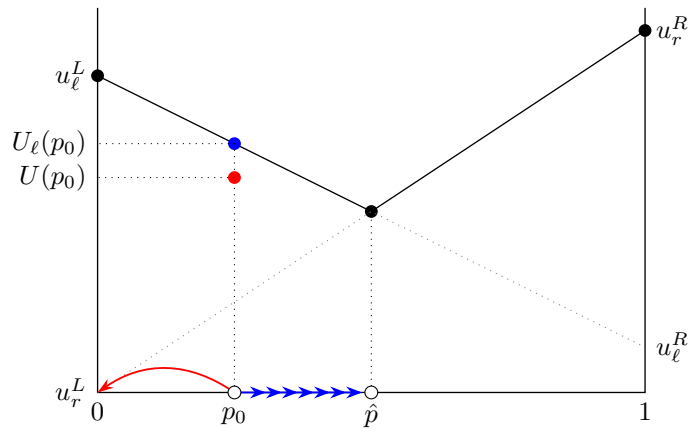
\begin{figure}[htb]
\centering{} \beginpgfgraphicnamed{LdrifingKG}
\begin{tikzpicture}[scale=0.6]
\draw[line width=0.5pt] (0, 8.5) -- (0,0) -- (12,0) -- (12,8.5);

\draw[line width=0.5pt] (0,7) -- (6,4) -- (12,8);

\fill (canvas cs:x=0cm,y=7cm) circle (4pt);
\fill (canvas cs:x=6cm,y=4cm) circle (4pt);
\fill (canvas cs:x=12cm,y=8cm) circle (4pt);
\fill (canvas cs:x=3cm,y=5.5cm) circle (4pt);

\draw[dotted,color=gray](0,0)--(6,4)--(12,1);

\draw[dotted] (6,0)--(6,4);
\draw[dotted] (3,0)--(3,5.5)--(0,5.5);

\fill (0,7) node[left] {\footnotesize{$u_{\ell}^{L}$}};
\fill (0,0) node[left] {\footnotesize{$u_{r}^{L}$}};
\fill (12,1) node[right] {\footnotesize{$u_{\ell}^{R}$}};
\fill (12,8) node[right] {\footnotesize{$u_{r}^{R}$}};

\fill (0,0) node[below] {\footnotesize{$0$}};
\fill (3,-0.1) node[below] {\footnotesize{$p_{0}$}};
\fill (6,0) node[below] {\footnotesize{$\hat p$}};
\fill (12,0) node[below] {\footnotesize{$1$}};

\draw[>={Stealth[length=6pt]},->>>>>>>,line width=0.7pt] (3,0) -- (5.8,0);

\draw[>={Stealth[length=6pt,width=4pt]},->,line width=0.7pt] (3.0,0) .. controls (2,0.7) and (1,0.7) .. (0,0);

\fill (canvas cs:x=3cm,y=4.75cm) circle (4pt);

\fill (0,5.5) node[left] {\footnotesize{$U_{\ell}(p_{0})$}};

\draw[dotted] (3,4.75)--(0,4.75);
\fill (0,4.75) node[left] {\footnotesize{$U(p_{0})$}};

\filldraw[fill=white] (3,0) circle [radius=4pt];

\filldraw[fill=white] (6,0) circle [radius=4pt];

\end{tikzpicture}
\endpgfgraphicnamed\caption{\label{fig:KG-Ldrifting} Replicating the KG outcome through $R$-drifting
experiments.}
\end{figure}

\emph{But will the receiver wait until the belief reaches $0$ or $\hat{p}$?} The answer is no. The KG experiment leaves no rents for the receiver without listening costs, and thus with listening costs the receiver will be strictly worse off than if he picks $\ell$ immediately. In Figure \ref{fig:KG-Ldrifting}, the receiver's expected gross payoff from the static KG experiment is $U_{\ell}(p_{0})$. Due to the listening costs, the receiver's expected payoff under the \emph{dynamic KG} strategy, denoted here by $U(p_{0})$, is strictly smaller than $U_{\ell}(p_{0})$. In other words, the dynamic strategy implementing the KG solution cannot persuade the receiver to wait and listen, so it does not permit any persuasion.\footnote{The KG outcome can also be replicated by other dynamic strategies. For instance, the sender could repeatedly choose a stationary strategy with jumps targeting $0$ and $\hat{p}$ until either jump occurs. However, this (and in fact, any other) strategy would not incentivize the receiver to listen, for the same reason as in the case of repeating $R$-drifting experiments.} Indeed, this problem leads to the existence of a no-persuasion MPE, regardless of the listening cost.

\begin{thm}[Persuasion Failure]
\label{thm:persuasion-failure} For any $c>0$, there exists an MPE
in which no persuasion occurs, that is, for any $p_{0}$, the receiver
immediately takes either action $\ell$ or $r$.
\end{thm}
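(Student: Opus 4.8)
The plan is to exhibit an explicit stationary (Markov) profile and check both players' incentives, with the receiver's constraint doing the real work. I would have the receiver take $r$ whenever $p\ge\hat{p}$ and $\ell$ whenever $p<\hat{p}$, never waiting, breaking the tie at $\hat{p}$ in favour of $r$. For the sender I would prescribe ``pass'' on $[\hat{p},1]$ and, on $[0,\hat{p})$, the confidence-spending ($L$-drifting) experiment with jump target $q=\hat{p}$ whenever its flow value $v\lambda p(1-p)/(\hat{p}-p)-c$ is nonnegative and ``pass'' otherwise; write $\pl$ for the resulting cutoff. On the path the receiver acts immediately, so no experiment is ever run and no persuasion occurs; the work is to rule out deviations.

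The heart of the argument is the receiver's optimality, which I would establish by a martingale argument. Consider the receiver contemplating a delay at some $p<\hat{p}$. Under the sender's strategy the belief either jumps up to $\hat{p}$ or drifts down, and in either case it stays in $[0,\hat{p}]$, a range on which $\mathcal{U}(p)=\max\{U_{\ell}(p),U_{r}(p)\}=U_{\ell}(p)$ and $U_{\ell}$ is affine. Since the belief is a martingale and $U_{\ell}$ is affine, $U_{\ell}(p_{t})$ is itself a martingale, so by optional stopping every stopping rule delivers the same gross (listening-cost-free) expected payoff $U_{\ell}(p)$ that the receiver secures by stopping at once. Infinitesimally, waiting an instant $dt$ while the sender targets $\hat{p}$ leaves the expected $\ell$-payoff unchanged---the expected belief movement is zero and $U_{\ell}$ is linear---and costs $c\,dt$ in listening, for a net change of $-c\,dt$. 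Because $c>0$, any positive delay is strictly costly, so taking $\ell$ immediately is optimal for $p<\hat{p}$ (strictly so wherever the sender experiments); for $p\ge\hat{p}$ the sender passes and no information arrives, so taking $r$ at once is optimal. This is precisely the ``no rents'' property of the dynamic KG-replicating strategy illustrated in Figure~\ref{fig:KG-Ldrifting}, here sustaining a self-fulfilling pessimism.

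For the sender, note that because the receiver acts at once at every belief, at any prior $p_{0}<\hat{p}$ every sender strategy gives payoff $0$---in continuous time a Poisson jump has probability zero over a single instant, so no deviation can be rewarded against a receiver who stops immediately---while at $p_{0}\ge\hat{p}$ she already attains her maximal payoff $v$. Hence no deviation is profitable. It remains to confirm that the prescribed experiment is the flow-payoff-maximizer on $[0,\hat{p})$: only a good-news jump to a posterior at least $\hat{p}$ can trigger $r$, the unconditional arrival rate $\lambda p(1-p)/|q-p|$ decreases in $q$ for $q>p$, so the optimal target is $q=\hat{p}$, and comparing its flow value with that of passing produces the cutoff $\pl$.

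The step I expect to be the main obstacle is the continuous-time timing that underlies both checks---squaring ``the receiver stops immediately'' with ``the sender could in principle induce an upward jump.'' I would discharge this using the admissible-strategy construction and the flow-payoff refinement of Section~\ref{sec:model}, formalized in Appendix~\ref{subsec:Continuous_Time_Game}: these ensure the game has a well-defined outcome, that a deviating sender facing an immediately-stopping receiver cannot be rewarded with positive probability, and that the refinement selects exactly the confidence-spending experiment above. With that scaffolding in place, all the economics is contained in the one-line martingale computation for the receiver: the confidence-spending strategy (like the dynamic KG replication of Figure~\ref{fig:KG-Ldrifting}) leaves the receiver no rents, so the listening cost $c>0$ makes any delay strictly unprofitable, and the receiver's pessimism is self-fulfilling.
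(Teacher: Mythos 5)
Your proposal matches the paper's proof essentially step for step: the same no-persuasion profile (receiver acts immediately, taking $r$ at and above $\hat{p}$, sender playing the $L$-drifting experiment with jump target $\hat{p}$ above a flow-indifference cutoff---the paper's $\hat{\pi}_{\ell L}$, your $\pl$---and passing below it), with the sender's behavior in the stopping region pinned down by the same flow-payoff refinement. Your optional-stopping/martingale phrasing of the receiver's side is just a more formal rendering of the paper's observation that the belief never enters the region where $r$ is strictly preferred, so the receiver's gross payoff is fixed at $U_{\ell}(p_{0})$ and the listening cost $c>0$ makes any delay strictly unprofitable.
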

\begin{proof}
Consider the following strategy profile: \textit{the receiver chooses
$\ell$ for $p<\hat{p}$ and $r$ for $p\ge\hat{p}$;} and \textit{the
sender chooses the $L$-drifting experiment with jump target $\hat{p}$
for all $p\in[\hat{\pi}_{\ell L},\hat{p})$ and passes for all $p\notin[\hat{\pi}_{\ell L},\hat{p})$},
where the cutoff $\hat{\pi}_{\ell L}$ is the belief at which the
sender is indifferent between the $L$-drifting experiment and stopping (followed by $\ell$).\footnote{Specifically, $\hat{\pi}_{\ell L}$ equates the sender's flow cost
$c$ to the instantaneous benefit from the $L$-drifting experiment:
\[
c=\frac{\lambda\hat{\pi}_{\ell L}(1-\hat{\pi}_{\ell L})}{\hat{p}-\hat{\pi}_{\ell L}}v,
\]
where the right-hand side is the sender's benefit $v$ from persuasion
multiplied by the rate at which the rightward jump to $\hat{p}$ occurs
(under the $L$-drifting experiment) at belief $\hat{\pi}_{\ell L}$. Solving the
equation yields
\[
\hat{\pi}_{\ell L}=\frac{1}{2}+\frac{c}{2\lambda v}-\sqrt{\left(\frac{1}{2}+\frac{c}{2\lambda v}\right)^{2}-\frac{c\hat{p}}{\lambda v}}.
\]
}

In order to show that this strategy profile is indeed an equilibrium,
first consider the receiver's incentives given the sender's strategy.
If $p\not\in[\hat{\pi}_{\ell L},\hat{p})$, then the sender never
provides information, so the receiver has no incentive to wait, and
will take an action immediately. If $p\in[\hat{\pi}_{\ell L},\hat{p})$,
then the sender never moves the belief into the region where the receiver
\emph{strictly} prefers to take action $r$ (i.e., strictly above
$\hat{p}$). This implies that the receiver's expected payoff is equal
to $U_{\ell}(p_{0})$ minus any listening cost she may incur. Therefore,
again, it is optimal for the receiver to take an action immediately.

Now consider the sender's incentives given the receiver's strategy.
If $p\geq\hat{p}$, then it is trivially optimal for the sender to
pass. Now suppose that $p<\hat{p}$. Our refinement, (iii) in Definition \ref{def:mpe}, requires that the sender choose a flow experiment that maximizes her instantaneous payoff, which is given by\footnote{The objective function follows from the fact that under the given strategy profile, the sender's value function is $V(p)=v$ if $p\geq\hat{p}$ and $V(p)=0$ otherwise; and when the target posterior is $q$, a Poisson jump occurs at rate $\lambda p(1-p)/|q-p|$.}
\[
\max_{\alpha(\cdot;p)}\sum_{q\neq p}\alpha(q;p)\lambda\frac{p(1-p)}{|q-p|}\mathbf{1}_{\{q\geq\hat{p}\}}v-\boldsymbol{1}_{\left\{\sum\alpha(q;p)>0\right\} }c\text{ subject to }\sum_{q\neq p}\alpha(q;p)\leq1.
\]
If the sender chooses any nontrivial experiment, its jump target must be $q=\hat{p}$. {Hence the sender's best response is either to maximize the jump rate to $\hat{p}$ (i.e., $\alpha(\hat p;p)=1$) or to pass.} The former is optimal if and only if $\frac{\lambda p(1-p)}{\hat{p}-p}v\ge c$, or equivalently $p\ge\hat{\pi}_{\ell L}$.\footnote{\label{fn:refinement_Thoerem1} Absent (iii) in Definition \ref{def:mpe}, there are many additional equilibria in which, in the stopping region, the sender may simply refuse to experiment or adopt an arbitrary Poisson experiment with jumps targeting beliefs other than $\hat p$ within the same stopping region. None of these alternative equilibria survive in the corresponding discrete-time setting. Our refinement allows us to select the continuous-time limit of the unique discrete-time no-persuasion equilibrium, and Theorem \ref{thm:persuasion-failure} holds \emph{despite} this refinement.}
\end{proof}

The no-persuasion equilibrium constructed in the proof showcases a
total collapse of trust between the two players. The receiver does
not trust the sender to convey valuable information (i.e., to choose
an experiment targeting $q>\hat{p}$), so he refuses to listen to
her. This attitude makes the sender desperate for a quick breakthrough;
she tries to achieve persuasion by targeting just $\hat{p}$, which
is indeed not enough for the receiver to be willing to wait.

\emph{Can trust be restored? In other words, can the sender ever persuade
the receiver to listen to her?} She certainly can, if she can commit
to a dynamic strategy, that is, if she can credibly promise to provide
more information in the future. Consider the following modification
of the dynamic KG strategy discussed above: the sender repeatedly
chooses $R$-drifting experiments with jumps targeting zero, until
either the jump occurs or the belief reaches $p^{*}>\hat{p}$. If
the receiver waits until the belief either jumps to $0$ or reaches
$p^{\ast}$, then her expected payoff is equal to\footnote{\label{fn:R_expected_waiting}To understand this explicit solution, first notice that under the prescribed strategy profile, the receiver takes action $\ell$ when
$p$ jumps to $0$, which occurs with probability $(p^{\ast}-p)/p^{\ast}$, and action $r$ when $p$ reaches $p^{*}$, which occurs with probability $p/p^{\ast}$. The last term captures the total expected listening cost. The length of time $\tau$ it takes for $p$ to reach $p^{\ast}$ absent jumps is derived as follows:
\[
p^{\ast}=\frac{p}{p+(1-p)e^{-\lambda\tau}}\Leftrightarrow\tau=\frac{1}{\lambda}\log\left(\frac{p^{\ast}}{1-p^{\ast}}\frac{1-p}{p}\right).
\]
Hence, the total listening cost is equal to
\[
(1-p)\int_{0}^{\tau}ctd\left(1-e^{-\lambda t}\right)+\left(p+(1-p)e^{-\lambda\tau}\right)c\tau=\left(p\log\left(\frac{p^{\ast}}{1-p^{\ast}}\frac{1-p}{p}\right)+1-\frac{p}{p^{\ast}}\right)\frac{c}{\lambda}.
\]
}
\[
U_{R}(p)=\frac{p^{\ast}-p}{p^{\ast}}u_{\ell}^{L}+\frac{p}{p^{\ast}}U_{r}(p^{\ast})-\left(p\log\left(\frac{p^{\ast}}{1-p^{\ast}}\frac{1-p}{p}\right)+1-\frac{p}{p^{\ast}}\right)\frac{c}{\lambda}.
\]
Importantly, if $p^{\ast}$ is sufficiently large relative to $c$, then $U_{R}(p)$ (the dashed curve in Figure \ref{fig:R-persuasion})
stays above $\max\{U_{\ell}(p),U_{r}(p)\}$ (the solid kinked curve) while $p$ drifts toward $p^{\ast}$, so the receiver prefers to wait. Intuitively, unlike in the KG solution, this ``more generous'' persuasion
scheme promises the receiver enough rents that make it worth listening
to.
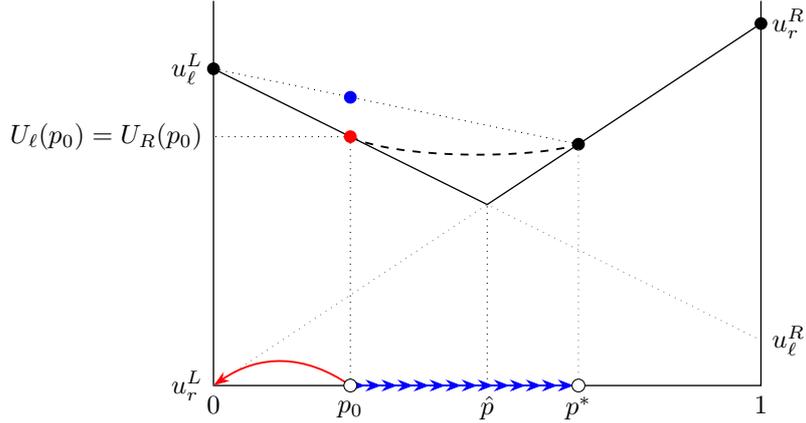
\begin{figure}
\centering{}\beginpgfgraphicnamed{Rdrifingcommitment}
\begin{tikzpicture}[scale=0.6]
\draw[line width=0.5pt] (0, 8.5) -- (0,0) -- (12,0) -- (12,8.5);

\draw[line width=0.5pt] (0,7) -- (6,4) -- (12,8);

\fill (canvas cs:x=0cm,y=7cm) circle (4pt);
\fill (canvas cs:x=12cm,y=8cm) circle (4pt);
\fill (canvas cs:x=3cm,y=6.37cm) circle (4pt);

\draw[dotted,color=gray](0,0)--(6,4)--(12,1);

\draw[dotted] (6,0)--(6,4);
\draw[dotted] (3,0)--(3,5.5)--(0,5.5);

\fill (0,7) node[left] {\footnotesize{$u_{\ell}^{L}$}};
\fill (0,0) node[left] {\footnotesize{$u_{r}^{L}$}};
\fill (12,1) node[right] {\footnotesize{$u_{\ell}^{R}$}};
\fill (12,8) node[right] {\footnotesize{$u_{r}^{R}$}};

\fill (0,0) node[below] {\footnotesize{$0$}};
\fill (3,-0.1) node[below] {\footnotesize{$p_{0}$}};
\fill (6,0) node[below] {\footnotesize{$\hat p$}};
\fill (12,0) node[below] {\footnotesize{$1$}};
\fill (8,0) node[below] {\footnotesize{$p^{\ast}$}};

\draw[>={Stealth[length=6pt]},->>>>>>>>>>>>>>,line width=0.7pt] (3,0) -- (7.9,0);

\draw[>={Stealth[length=6pt,width=4pt]},->,line width=0.7pt] (3.0,0) .. controls (2,0.7) and (1,0.7) .. (0,0);

\draw[dashed,line width=0.7pt,color=black] (3,5.5) .. controls (4,5) and (7,5)  .. (8,5.333);
\fill (canvas cs:x=8cm,y=5.333cm) circle (4pt);
\fill (canvas cs:x=3cm,y=5.5cm) circle (4pt);

\draw[dotted,color=gray](8,0)--(8,5.333);
\draw[dotted] (0,7)--(8,5.333);

\fill (0,5.5) node[left] {\footnotesize{$U_{\ell}(p_{0})=U_{R}(p_{0})$}};

\filldraw[fill=white] (3,0) circle [radius=4pt];

\filldraw[fill=white] (8,0) circle [radius=4pt];

\end{tikzpicture}

\endpgfgraphicnamed \caption{\label{fig:R-persuasion} Persuasive $R$-drifting experiments}
\end{figure}

If $c$ is sufficiently small, the required belief target $p^{*}$
need not exceed $\hat{p}$ by much. In fact, $p^{*}$ can be chosen
to converge to $\hat{p}$ as $c\to0$. In this fashion, a dynamic
persuasion strategy can be constructed to approximate the KG solution when $c$ is sufficiently small.

At first glance, this strategy seems unlikely to work without the
sender's commitment power. \emph{How can she credibly continue her
experiment even after the posterior has risen past $\hat{p}$? Why
not simply stop at the posterior $\hat{p}$---the belief that should
have convinced the receiver to choose $r$?} Surprisingly, however,
the strategy works even without commitment. This is because the equilibrium
beliefs generated by the Markov strategies themselves can provide
a sufficient incentive for the sender to continue beyond $\hat{p}$.
We already argued that, with a suitably chosen $p^{*}>\hat{p}$, the
receiver is incentivized to wait past $\hat{p}$, due to the ``optimistic''
equilibrium belief that the sender will continue to experiment until
a much higher belief $p^{*}$ is reached. Crucially, this optimism
in turn incentivizes the sender to carry out her strategy:\footnote{We will show in Section \ref{subsec:Equilibrium-Characterization}
that under certain conditions, using $R$-drifting experiments is
not just better than passing but also the optimal strategy (best response),
given that the receiver waits. Here, we illustrate the possibility
of persuasion for this case. The logic extends to other cases where
the sender optimally uses different experiments to persuade the receiver.} were she to deviate and, say, pass at $q=\hat{p}$, the receiver
would simply wait (instead of choosing $r$), believing that the sender
will shortly resume her $R$-drifting experiments after the ``unexpected'' pause. Given this response, the sender cannot gain from deviating:
she cannot convince the receiver to ``prematurely'' choose $r$.
To summarize, the sender's strategy instills optimism in the receiver
that makes him wait and listen, and this optimism, or the \emph{power of beliefs}, in turn incentivizes the sender to carry out the strategy.

The above power-of-beliefs logic extends beyond the Poisson model we employ here,\footnote{\label{fn:Brownian-Motion}Consider \citet{Henry2017}'s model in which the belief, as expressed by the log likelihood ratio $s=\ln(p/(1-p))$, follows a Brownian motion with a drift given by the state. In keeping with our model, suppose at each point in time the sender either experiments or passes, and the receiver chooses $\ell,r$, or ``wait,'' with the flow cost $c$ incurred on both sides if the sender experiments and the receiver waits. As noted in Footnote \ref{fn:HO}, this model is similar to \citet{Henry2017}'s no-commitment regime, except that our sender has the option to pass without ending the game and the receiver incurs a flow cost. A simple MPE is then characterized by two stopping bounds, $s_{*}\le\hat{s}:=\ln(\hat{p}/(1-\hat{p}))$
and $s^{*}\ge\hat{s}$, such that the sender experiments and the receiver waits if and only if $s\in(s_{*},s^{*})$. Our ``power of beliefs'' argument would imply that a range of persuasion targets $s^{*}$ are
supported as MPE for $c>0$ sufficiently low, and that range would span the entire $(\hat{s},\infty)$ as $c\to0$.} but it does depend on subtle details of the model. For example, consider a variation of the model in which the sender becomes unable to provide further information at some (Poisson distributed) random time. If the event is also observable to the receiver, then the above logic applies unchanged. If it is unobservable to the receiver, however, the logic no longer holds: no matter how unlikely the event is, the sender will stop providing information as soon as the belief rises above $\hat p$, unraveling any persuasion equilibrium. Likewise, with a deadline at which the receiver should take an action, the power-of-beliefs logic survives if the arrival of the deadline is stochastic but fails if the deadline is deterministic. See Section \ref{sec:conclusion} for discussions on a few other relevant features.

\section{Persuasion Equilibria}\label{sec:folk}

The equilibrium logic outlined in the previous section applies not
just to strategy profiles that approximate the KG solution, but also
to other strategy profiles with a persuasion target $p^{\ast}\in(\hat{p},1)$.
Building upon this observation, we establish a folk-theorem-like result: \emph{any sender (receiver) payoff between the KG solution and full revelation} can be supported as an MPE payoff in the limit as $c$ tends to $0$.

\begin{thm}
\label{thm:folk} Fix any prior $p_{0}\in(0,1)$.
\begin{itemize}
\item[(a)] For any sender payoff $V\in\left(p_{0}v,\min\{p_{0}/\hat{p},1\}v\right)$,
if $c$ is sufficiently small, there exists an MPE in which the sender obtains $V$.
\item[(b)] For any receiver payoff $U\in\left(\mathcal{U}(p_{0}),p_{0}u_{r}^{R}+(1-p_{0})u_{\ell}^{L}\right)$,
if $c$ is sufficiently small, there exists an MPE in which the receiver
achieves $U$.
\end{itemize}
\end{thm}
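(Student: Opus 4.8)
The plan is to build a one-parameter family of MPEs indexed by a target belief $p^{*}\in(\max\{\hat p,p_{0}\},1)$, each obtained from the persuasive $R$-drifting construction of Section~\ref{sec:illustate}, and then let $p^{*}$ sweep this interval so that the induced payoffs trace out the two desired ranges. For a fixed $p^{*}$ I would specify: the sender runs the $R$-drifting experiment (target $q=0$) at every belief in the waiting region and passes once $p\ge p^{*}$; the receiver waits for $p\in[p_{0},p^{*})$, takes $r$ for $p\ge p^{*}$, and takes $\ell$ when the belief jumps to $0$. Beliefs the sender could push below $p_{0}$ are off path; there I would assign the receiver a stopping action and the sender a flow-payoff maximizer, chosen so that the whole profile is subgame perfect.

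The first block of steps verifies that this profile is an MPE once $c$ is small. For the receiver, the on-path continuation value is the function $U_{R}(\cdot)$ displayed just before Theorem~\ref{thm:folk}; as $c\to0$ it converges to the chord from $(0,u_{\ell}^{L})$ to $(p^{*},U_{r}(p^{*}))$, which lies strictly above $\mathcal U(p)=\max\{U_{\ell}(p),U_{r}(p)\}$ for every $p\in(0,p^{*})$ because $U_{r}(p^{*})>U_{\ell}(p^{*})$ while the chord and $U_\ell$ agree at $p=0$. Hence for $c$ small the receiver strictly prefers waiting on $[p_{0},p^{*})$, except possibly near $p^{*}$, where the binding local condition is exactly $p^{*}\le\overline{p}$ with $\overline{p}\to1$ as $c\to0$ (footnote~\ref{fn:pbar}); since $p^{*}<1$ is fixed first, this holds once $c$ is small. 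For the sender I would first note that gross revenue is bounded by $\tfrac{p_{0}}{p^{*}}v$ for \emph{any} experiment that keeps the belief in $[0,p^{*}]$: by optional stopping $\mathbb E[\text{absorbed belief}]=p_{0}$, so the probability of an absorbing belief $\ge p^{*}$ cannot exceed $p_{0}/p^{*}$, and $R$-drifting attains it. The ``power of beliefs'' then rules out passing (the receiver keeps waiting and no premature $r$ is triggered), so no deviation raises revenue above the $R$-drifting level.

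The second block is the spanning argument for both parts of the statement. In the limit $c\to0$ the sender's payoff is $\tfrac{p_{0}}{p^{*}}v$ and the receiver's is $\tfrac{p^{*}-p_{0}}{p^{*}}u_{\ell}^{L}+\tfrac{p_{0}}{p^{*}}U_{r}(p^{*})$; both are continuous and strictly monotone in $p^{*}$, the former decreasing from the KG value $\min\{p_{0}/\hat p,1\}v$ down to the full-revelation value $p_{0}v$, the latter increasing from $\mathcal U(p_{0})$ up to $p_{0}u_{r}^{R}+(1-p_{0})u_{\ell}^{L}$. For finite small $c$ the realized payoffs differ from these limits by an $O(c)$ cost term and stay continuous and (for $c$ small) monotone in $p^{*}$, so for any target $V$ or $U$ in the stated open interval the intermediate value theorem supplies a $p^{*}$ whose equilibrium delivers exactly that payoff, provided $c$ is small enough that the target lies in the realized range. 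Fixing the target first and only then sending $c\to0$ keeps the interior $p^{*}$ bounded away from $1$ and the per-episode listening cost bounded, so the construction is internally consistent.

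The step I expect to be the real obstacle is the sender's \emph{global} best response: ruling out deviations to other experiments (jumping directly toward or beyond $p^{*}$, or an $L$-drifting/stationary experiment) rather than merely to passing. Because the candidate value function is linear to leading order, all revenue-maximal experiments are payoff-equivalent at $c=0$, so the comparison is decided entirely by the $O(c)$ cost terms and by the receiver's Markovian response at off-path beliefs. Establishing that the specified (upward-drifting) experiment is cost-minimal and hence a best response therefore requires the Hamilton--Jacobi--Bellman/value-function characterization of Section~\ref{sec:persuasion_dynamics}, together with a careful choice of the receiver's actions at beliefs below $p_{0}$ so that downward deviations by the sender are deterred rather than rewarded.
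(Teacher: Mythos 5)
Your spanning arithmetic (second block) matches the paper's: in the limit the sender gets $\tfrac{p_{0}}{p^{*}}v$ and the receiver gets $\tfrac{p^{*}-p_{0}}{p^{*}}u_{\ell}^{L}+\tfrac{p_{0}}{p^{*}}U_{r}(p^{*})$, and sweeping $p^{*}$ over $(\max\{\hat p,p_{0}\},1)$ traces out exactly the stated intervals (this is Corollary \ref{prop:payoff_vectors}). But the equilibrium construction you feed into it has a genuine gap at the lower boundary. You place the bottom of the waiting region at $p_{0}$ and ``assign'' the receiver a stopping action below it. An MPE must be subgame perfect from \emph{every} belief, and the refinement forces the sender, at stopping-region beliefs just below the boundary, to choose the flow-payoff-maximizing experiment; since your on-path construction gives $U(p_{0})=U_{R}(p_{0})>U_{\ell}(p_{0})$ strictly (that strict rent is precisely what makes the receiver wait on path), the sender's flow maximizer just below $p_{0}$ is an $L$-drifting experiment targeting the boundary at an arrival rate that explodes as $p\uparrow p_{0}$, so the receiver strictly prefers waiting there and your assigned stopping action is not a best response. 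This is the content of Lemma \ref{lem:two_equilibrium_classes}: the lower bound $p_{*}$ cannot be chosen freely; it is pinned down either by the receiver's indifference $U(p_{*})=U_{\ell}(p_{*})$ (when \eqref{cond:low_receiver_rent} holds, giving $p_{*}=\phi_{\ell R}$) or by the sender's participation constraint $V(p_{*})=0$ (giving $p_{*}=\pi_{\ell L}$), and the paper then proves $p_{*}\rightarrow0$ as $c\rightarrow0$. So the equilibrium necessarily has a waiting region extending far below $p_{0}$, with nontrivial behavior there (Propositions \ref{prop:C2_hold}(b) and \ref{prop:C2_fail}(b)), not a stopping region starting at $p_{0}$.

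The second gap is the one you yourself flag but do not close, and it is fatal for part of the payoff range rather than a technicality: pure $R$-drifting throughout the waiting region is \emph{not} a sender best response when $p^{*}>\eta\approx0.943$. By Lemma \ref{lem:magic_number}, for such $p^{*}$ the stationary value $V_{S}$ strictly exceeds $V_{R}$ on an interval around $\xi_{1}$, so the sender deviates; the actual equilibrium (Propositions \ref{prop:C2_hold}(a.ii) and \ref{prop:C2_fail}(a.ii)) splices $L$-drifting, a stationary absorbing point at $\xi$, and $R$-drifting, and verifying optimality requires the crossing lemma, the cutoff constructions, and the unimprovability/viscosity-solution lemmas (Lemmas \ref{lem:unimprovability} and \ref{lem:unimprovability_at_kinks}) because the value function has kinks. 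Since the folk theorem needs $p^{*}$ arbitrarily close to $1$ to reach sender payoffs near $p_{0}v$ and receiver payoffs near full revelation, your construction as written only delivers the portion of the statement with $p^{*}\le\eta$; your correct observation that the martingale/optional-stopping bound caps revenue at $\tfrac{p_{0}}{p^{*}}v$ does not decide the comparison, because all candidate strategies are revenue-equivalent to leading order and the best response is determined by the $O(c)$ cost terms — exactly the HJB analysis you defer.
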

The proof of Theorem \ref{thm:folk} follows from the equilibrium
constructions of Propositions \ref{prop:C2_hold} and \ref{prop:C2_fail}
in Section \ref{subsec:Equilibrium-Characterization}. The main argument
for the proof is outlined below.

Figure \ref{fig:folk2} depicts how the set of implementable payoffs
for each player varies according to $p_{0}$ in the limit as $c$
tends to $0$. Theorem \ref{thm:folk} states that any payoffs in
the   shaded areas can be implemented in an MPE, provided
that $c$ is sufficiently small. In the left panel, the upper bound
for the sender's payoff is given by the KG-optimal payoff $\min\{p_{0}/\hat{p},1\}v$,
and the lower bound is given by the sender's payoff from full revelation
$p_{0}v$. For the receiver, by contrast, full revelation defines
the upper bound $p_{0}u_{r}^{R}+(1-p_{0})u_{\ell}^{L}$, whereas the
KG payoff, which leaves no rent for the receiver, is given by $\mathcal{U}(p_{0})$.

\begin{figure}
\beginpgfgraphicnamed{folktheorem2} \centering{}
\begin{tikzpicture}[scale=1]

\draw[line width=0.5pt] (0, 4.3) -- (0,0) -- (6,0) -- (6,4.3);
\fill[pattern=dots,line width=0.5pt] (0,0) -- (6,4)--(3,0) -- (0,0);
\draw[line width=0.5pt] (0,0) -- (6,4)--(3,0) -- (0,0);
\fill[color=gray,fill opacity=0.3,draw=black,line width=0.2pt] (0,0) -- (3,4) -- (6,4)--(0,0);

\draw[line width=1pt] (0,0) -- (3,0);
\draw[line width=1pt] (3,4) -- (6,4);

\draw[dotted] (3,0)--(3,4);
\draw[dotted] (1.5,0)--(1.5,2);
\draw[dotted] (0,4)--(3,4);
\draw[dotted] (0,2)--(1.5,2);
\draw[dotted] (0,1)--(1.5,1);
\draw[dashed,line width=0.5pt] (0,0)--(4.5,4);
\draw[dotted] (4.5,0)--(4.5,4);

\fill (0,0) node[below] {\footnotesize{$0$}};
\fill (1.5,-0.1) node[below] {\footnotesize{$p_{0}$}};
\fill (3,0) node[below] {\footnotesize{$\hat p$}};
\fill (4.5,0) node[below] {\footnotesize{$p^{\ast}$}};
\fill (6,0) node[below] {\footnotesize{$1$}};

\fill (0,4) node[left] {\footnotesize{$v$}};
\fill (0,2) node[left] {\footnotesize{$\frac{p_{0}}{\hat{p}}v$}};
\fill (0,1) node[left] {\footnotesize{$p_{0}v$}};

\draw (3,4.6) node {\textbf{Sender}};

\draw[xshift=7.5cm] (0, 4.3) -- (0,0) -- (6,0) -- (6,4.3);

\fill[xshift=7.5cm,color=gray,fill opacity=0.3,draw=black,line width=0.2pt] (0,3.5) -- (3,2) -- (6,4)--(0,3.5);

\draw[xshift=7.5cm,dashed,color=gray](0,0)--(3,2)--(6,0.5);

\draw[xshift=7.5cm] (3,4.6) node {\textbf{Receiver}};

\draw[xshift=7.5cm,line width=1pt] (0,3.5)--(3,2)--(6,4);

\fill[xshift=7.5cm] (0,0) node[below] {\footnotesize{$0$}};
\fill[xshift=7.5cm] (1.5,-0.1) node[below] {\footnotesize{$p_{0}$}};
\fill[xshift=7.5cm] (3,0) node[below] {\footnotesize{$\hat p$}};
\fill[xshift=7.5cm] (4.5,0) node[below] {\footnotesize{$p^{\ast}$}};
\fill[xshift=7.5cm] (6,0) node[below] {\footnotesize{$1$}};

\draw[xshift=7.5cm,dotted] (3,0)--(3,2);
\draw[xshift=7.5cm,dotted] (1.5,2.75)--(0,2.75);
\draw[xshift=7.5cm,dotted] (1.5,0)--(1.5,3.625);
\draw[xshift=7.5cm,dotted] (4.5,0)--(4.5,3);
\draw[xshift=7.5cm,dashed,line width=0.5pt] (0,3.5)--(4.5,3);

\fill[xshift=7.5cm] (0,2.75) node[left] {\footnotesize{$\mathcal{U}(p_{0})$}};
\fill[xshift=7.5cm] (0,3.5) node[left] {\footnotesize{$u_{\ell}^{L}$}};
\fill[xshift=7.5cm] (0,0) node[left] {\footnotesize{$u_{r}^{L}$}};
\fill[xshift=7.5cm] (6,0.5) node[right] {\footnotesize{$u_{\ell}^{R}$}};
\fill[xshift=7.5cm] (6,4) node[right] {\footnotesize{$u_{r}^{R}$}};

\end{tikzpicture}

\endpgfgraphicnamed \caption{\label{fig:folk2} Implementable payoff set for each player at each
$p_{0}$.}
\end{figure}
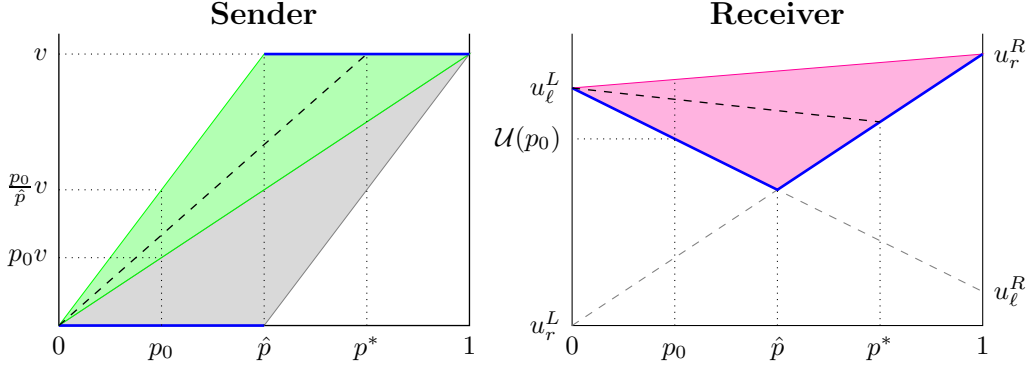

Note that Theorem \ref{thm:folk} is silent about payoffs in the
 dotted region. In the static KG environment, these payoffs can be
achieved by the (sender-pessimal) experiment that splits the prior
$p$ into two posteriors, $1$ and $q\in[0,\hat{p}]$. The following
theorem shows that the sender's payoffs in this region cannot be supported
as an MPE payoff for a sufficiently small $c>0$ (even without invoking
our refinement).
\begin{thm}
\label{thm:infeasible_MPE_payoffs} If $p_{0}\leq\hat{p}$, then the
sender's payoff in any MPE is either equal to $0$ or at least $p_{0}v-2c/\lambda$.
If $p_{0}>\hat{p}$, then the sender's payoff in any MPE is at least
$p_{0}v-2c/\lambda$.
\end{thm}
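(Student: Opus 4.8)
The plan is to bound the sender's equilibrium value $V(p_0)$ from below by the payoff of a single explicit deviation: the \emph{fully revealing stationary experiment} that splits the belief to the two extremes. Concretely, I would have the sender deviate to $\alpha_1=\alpha_2=1/2$ with $q_1=0$ and $q_2=1$. By \eqref{eq:drift_rate} this experiment has zero drift, so absent a jump the belief stays at $p_0$; by \eqref{eq:unconditional_arrival_rate} conclusive $L$-evidence arrives at rate $\tfrac12\lambda(1-p_0)$ and conclusive $R$-evidence at rate $\tfrac12\lambda p_0$, for a total jump rate of $\tfrac12\lambda$. Hence the first jump arrives after expected time $2/\lambda$, and conditional on a jump it lands at $1$ with probability $p_0$ and at $0$ with probability $1-p_0$, as the martingale property requires. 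The expected flow cost incurred until the jump is therefore $c\cdot(2/\lambda)=2c/\lambda$.

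Before invoking this, I would record two boundary facts that hold in every MPE, namely $V(1)=v$ and $V(0)=0$. At $p=1$ all arrival rates in \eqref{eq:unconditional_arrival_rate} vanish, so the belief is frozen, and since $U_r(1)=u_r^R>u_\ell^R=U_\ell(1)$ while waiting can neither move the belief nor (there is no discounting) raise the receiver's terminal payoff, the receiver eventually takes $r$ and the sender collects $v$; symmetrically the receiver takes $\ell$ at $p=0$. The key structural point is that a Markov strategy for the receiver depends only on the current belief and not on the sender's current experiment, so the deviation is self-sustaining whenever the receiver \emph{waits} at $p_0$ on the equilibrium path: the stationary experiment keeps the belief pinned at $p_0$, the receiver's prescribed action there remains ``wait'' until a jump occurs, and the jump then delivers the absorbing continuation values just computed. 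Therefore, if the receiver waits at $p_0$,
\[
V(p_0)\ \ge\ p_0\,V(1)+(1-p_0)\,V(0)-\frac{2c}{\lambda}\ =\ p_0 v-\frac{2c}{\lambda}.
\]

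It then remains to case on the receiver's prescribed action at $p_0$. If $p_0<\hat p$ then $U_r(p_0)<U_\ell(p_0)$, so taking $r$ immediately is strictly dominated by taking $\ell$; the receiver therefore either takes $\ell$ at $p_0$, in which case the game ends at once and $V(p_0)=0$, or waits, in which case the displayed bound gives $V(p_0)\ge p_0 v-2c/\lambda$. This is precisely the claimed dichotomy. If $p_0>\hat p$ then $\ell$ is strictly dominated by $r$, so the receiver either takes $r$ immediately, giving $V(p_0)=v\ge p_0 v-2c/\lambda$, or waits, giving the same bound.

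The step I expect to be most delicate is the knife-edge $p_0=\hat p$ with a receiver who breaks indifference toward $\ell$, since there the stationary deviation is blocked by an immediate move at $p_0$. To cover it I would switch to an $R$-drifting deviation run for a vanishing length of time $\delta$: the belief drifts strictly above $\hat p$ while the only competing event, a conclusive jump to $0$, has probability of order $\delta$, so as $\delta\to0$ the sender secures a payoff approaching $v$ (provided the receiver takes $r$ just above $\hat p$), which forces $V(\hat p)\ge p_0 v-2c/\lambda$ and rules out $V(\hat p)=0$. More broadly, the real content of the argument lies in justifying that the receiver's fixed Markov response keeps the deviation alive, so that the belief genuinely remains at $p_0$, and in pinning down the continuation values at the reachable absorbing beliefs $0$ and $1$; the arrival-rate and cost computations are then immediate from \eqref{eq:unconditional_arrival_rate}--\eqref{eq:drift_rate}.
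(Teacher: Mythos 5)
Your main argument coincides with the paper's proof: the paper also bounds the sender's MPE payoff by the deviation to the stationary, fully revealing experiment ($\alpha_{1}=\alpha_{2}=1/2$, $q_{1}=0$, $q_{2}=1$), which pins the belief at $p_{0}$, generates a conclusive jump at rate $\lambda/2$ in each state (expected delay $2/\lambda$, expected cost $2c/\lambda$), and pays $v$ exactly in state $R$, i.e.\ with probability $p_{0}$; the case analysis on whether the receiver waits or stops at $p_{0}$, and the observation that a stop at $p_{0}<\hat{p}$ must be action $\ell$ (payoff $0$) while a stop at $p_{0}>\hat{p}$ must be $r$ (payoff $v$), is exactly the paper's. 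Your rate and cost computations are correct, and your justification that the deviation is ``self-sustaining'' because the receiver's Markov strategy conditions only on the belief is the right (and in the paper implicit) supporting point, as is pinning down $V(0)=0$ and $V(1)=v$ (though ``eventually takes $r$'' should read: $\sigma^{R}(1)=w$ means the receiver \emph{never} stops, since the belief is frozen at $1$, so waiting there is not a best response).

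The one genuine flaw is your treatment of the knife edge $p_{0}=\hat{p}$ with a receiver who stops with $\ell$ at $\hat{p}$. Your proposed fix\textemdash run an $R$-drifting experiment for a vanishing time $\delta$ so the belief drifts above $\hat{p}$\textemdash cannot be executed: in the continuous-time game the receiver's Markov strategy acts on the \emph{current} belief, so if $\sigma^{R}(\hat{p})=\ell$ the game ends at $t=0$ regardless of which experiment the sender selects; no sender deviation can postpone the receiver's stop, the belief never drifts, and the deviation yields $0$, not a payoff approaching $v$. What actually disposes of this case is not a deviation-payoff bound but an equilibrium-\emph{existence} argument, which the paper gives in footnote \ref{fn:dont_wait_at_phat}: in any MPE the receiver must take $r$, not $\ell$, at $\hat{p}$ whenever $\hat{p}$ is outside the waiting region, because otherwise the sender has no best response\textemdash at beliefs $p<\hat{p}$ the flow-payoff maximization required by the refinement, $\sup_{q}\frac{\lambda p(1-p)}{q-p}\mathbf{1}_{\{\sigma^{R}(q)=r\}}v-c$, is approached as $q\downarrow\hat{p}$ but attained by no feasible jump target, since a jump exactly to $\hat{p}$ triggers $\ell$. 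With $\sigma^{R}(\hat{p})=\ell$ thereby eliminated from the set of MPEs, a stop at $\hat{p}$ gives $V(\hat{p})=v$ and your stationary-deviation bound covers the waiting case, completing the theorem.
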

\begin{proof}
Fix $p_{0}\leq\hat{p}$, and consider any MPE. If the receiver's strategy
is to wait at $p_{0}$, then the sender can always adopt the stationary
strategy with jump targets $0$ and $1$, which will guarantee her
a payoff of $p_{0}v-2c/\lambda$.\footnote{\label{fn:S_expected_waiting}In order to understand this payoff, notice that the strategy fully
reveals the state, and thus the sender gets $v$ only in state $R$.
In addition, in each state, a Poisson jump occurs at rate $\lambda/2$,
and thus the expected waiting time equals $2/\lambda$, which is multiplied
by $c$ to obtain the expected cost.} If the receiver's strategy is to stop at $p_{0}$, then the receiver
takes action $\ell$ immediately, in which case the sender's payoff
is equal to $0$. Therefore, the sender's expected payoff is either
equal to $0$ or above $p_{0}v-2c/\lambda$.

Now suppose $p_{0}>\hat{p}$, and consider any MPE. As above, if $p_{0}$
belongs to the waiting region, then the sender's payoff must be at least $p_{0}v-2c/\lambda$. If $p$ belongs to the stopping region, then the sender's payoff is equal to $v$. In either case, the sender's
payoff is at least $p_{0}v-2c/\lambda$.
\end{proof}
We prove  {Theorem \ref{thm:folk}} by constructing MPEs with a particularly
simple structure:
\begin{defn}
\label{def:simple_MPE} A Markov perfect equilibrium is a \emph{simple
MPE} (henceforth, SMPE) if there exist $p_{\ast}\in(0,\hat{p})$ and
$p^{\ast}\in(\hat{p},1)$ such that the receiver chooses action $\ell$
if $p<p_{\ast}$, waits if $p\in(p_{\ast},p^{\ast})$, and chooses
action $r$ if $p\geq p^{\ast}$.\footnote{We do not restrict the receiver's decision at the lower bound $p_{\ast}$,
so that the waiting region can be either $(p_{*},p^{*})$ or $[p_{*},p^{*})$.
Requiring $W=(p_{*},p^{*})$ can lead to non-existence of an SMPE in Proposition \ref{prop:C2_hold}. Requiring $W=[p_{*},p^{*})$
can lead to non-admissibility of the sender's best response in Proposition
\ref{prop:C2_fail}.}
\end{defn}
In other words, in an SMPE, the receiver waits for more information if $p\in W$ and takes an action, $\ell$ or $r$, otherwise, where $W=(p_{*},p^{*})$ or $W=[p_{*},p^{*})$ denotes the \emph{waiting region:}
\[
\underset{\hspace{-0.35cm}p=0}{|}\overbrace{\text{------------------------}}^{\ell}{\scriptstyle p_{*}}\overbrace{\text{------------------------}}^{\text{wait}}{\scriptstyle {p}^{*}}\overbrace{\text{------------------}}^{r}\underset{1}{|}
\]
While this is the most natural equilibrium structure, we do not exclude
possible MPEs that violate this structure. Whether such non-simple
MPEs exist or not is irrelevant for our results. While we construct
SMPEs to establish  {Theorem \ref{thm:folk}}, Theorem \ref{thm:infeasible_MPE_payoffs}
is valid for \emph{all} MPEs. Finally, we continue to require our
refinement with SMPEs.

To prove  {Theorem \ref{thm:folk}}, we begin by fixing $p^{\ast}\in(\hat{p},1)$.
Then, for each $c$ sufficiently small, we identify a unique value
of $p_{\ast}$ for which an SMPE can be constructed. We then show
that as $c\rightarrow0$, $p_{\ast}$ approaches $0$ as well (see
Propositions \ref{prop:C2_hold} and \ref{prop:C2_fail} in Section
\ref{subsec:Equilibrium-Characterization}). This implies that given
$p^{\ast}$, the limit SMPE spans the sender's payoffs on the line
segment that connects $(0,0)$ and $(p^{\ast},v)$---the dashed line
in the left panel of Figure \ref{fig:folk2}---and the receiver's
payoffs on the line segment that connects $(0,u_{\ell}^{L})$ and
$(p^{\ast},U_{r}(p^{\ast}))$ in the right panel. By varying $p^{\ast}$
from $\hat{p}$ to $1$, we can cover the entire shaded areas in Figure
\ref{fig:folk2}. Note that with this construction and the uniqueness
claims in Propositions \ref{prop:C2_hold} and \ref{prop:C2_fail},
we also obtain a characterization of feasible \emph{payoff vectors}
$(V,U)$ for the sender and receiver that can arise in an SMPE in
the limit as $c$ tends to $0$. We state this in the following corollary.
\begin{cor}
\label{prop:payoff_vectors}For any prior $p_{0}\in[0,1]$, in the
limit as $c$ tends to $0$, the set of SMPE payoff vectors $(V,U)$
is given by
\[
\left\{ (V,U)\middle|\exists p^{*}\in\left[\max\left\{ p_{0},\hat{p}\right\} ,1\right]:\:V=\frac{p_{0}}{p^{*}}v,\:U=\frac{p_{0}}{p^{*}}U_{r}(p^{*})+\frac{p^{*}-p_{0}}{p^{*}}u_{\ell}^{L}\right\} ,
\]
with the addition of the no-persuasion payoff vector $(0,U(p_{0}))$
for $p_{0}<\hat{p}$.
\end{cor}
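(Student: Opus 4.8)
The plan is to read the limit payoffs directly off the one–parameter family of SMPEs built in the proof of Theorem~\ref{thm:folk}, and then to argue that this family exhausts all SMPE limit payoffs. Fix a target $p^{*}\in(\hat{p},1)$. By the construction behind Theorem~\ref{thm:folk}, for all sufficiently small $c$ there is an SMPE with waiting region $(p_{*},p^{*})$ on which the sender runs $R$-drifting experiments, and $p_{*}\to 0$ as $c\to 0$; hence for every $p_{0}\in(0,1)$ we have $p_{0}\in(p_{*},p^{*})$ once $c$ is small. Starting from such a $p_{0}$, the belief is a bounded martingale absorbed either at $p^{*}$ (triggering $r$) or at $0$ (triggering $\ell$ after conclusive $L$-news), so optional stopping gives $\Pr[\text{reach }p^{*}]=p_{0}/p^{*}$ and $\Pr[\text{jump to }0]=(p^{*}-p_{0})/p^{*}$. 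The sender then collects $v$ in the first event and $0$ in the second, while the receiver collects $U_{r}(p^{*})$ and $u_{\ell}^{L}$; subtracting the flow costs, which are $O(c)$ because the expected experimentation time is finite for fixed $p^{*}<1$ (this is exactly the cost term in the $U_{R}(p_{0})$ display of Section~\ref{sec:illustate}), and letting $c\to 0$ yields
\[
V=\frac{p_{0}}{p^{*}}v,\qquad U=\frac{p_{0}}{p^{*}}U_{r}(p^{*})+\frac{p^{*}-p_{0}}{p^{*}}u_{\ell}^{L}.
\]
This shows every vector in the stated set with $p^{*}\in(\hat{p},1)$ is a limit of SMPE payoffs.

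Next I would pin down the admissible range of $p^{*}$ and the two endpoints. For $p_{0}<\hat{p}$ an SMPE requires $p^{*}>\hat{p}$; as $p^{*}\downarrow\hat{p}$ the vector converges to the KG point $\bigl((p_{0}/\hat{p})v,\,\mathcal{U}(p_{0})\bigr)$ — one checks $U$ collapses to $U_{\ell}(p_{0})=\mathcal{U}(p_{0})$ using $U_{r}(\hat{p})=U_{\ell}(\hat{p})$ — while an iterated limit (first $c\to 0$ for each fixed $p^{*}<1$, then $p^{*}\uparrow 1$) yields the full-revelation point $\bigl(p_{0}v,\,p_{0}u_{r}^{R}+(1-p_{0})u_{\ell}^{L}\bigr)$; taking the closure gives $p^{*}\in[\hat{p},1]$. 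For $p_{0}\ge\hat{p}$ the receiver already takes $r$ at $p_{0}$, which is the degenerate case $p^{*}=p_{0}$ with $V=v$, and since no SMPE can trigger $r$ below $\hat{p}$, the smallest admissible target is $\max\{p_{0},\hat{p}\}$, giving $p^{*}\in[\max\{p_{0},\hat{p}\},1]$. Finally, for $p_{0}<\hat{p}$ the persuasion-failure equilibrium of Theorem~\ref{thm:persuasion-failure} delivers the extra vector $(0,\mathcal{U}(p_{0}))$, which lies genuinely outside the $R$-drifting family because its sender payoff is $0$ whereas the family's infimum sender payoff (at $p^{*}=\hat{p}$) is $(p_{0}/\hat{p})v>0$.

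The reverse inclusion — that no SMPE produces a limit payoff off this curve — is the step needing the most care, and I would argue it via two optimality facts holding in the limit. First, the sender never targets a posterior strictly above $p^{*}$: $p^{*}$ is the minimal belief triggering $r$, and by optional stopping any higher target is reached with strictly smaller probability while yielding the same reward $v$; hence whenever $r$ is eventually played the terminating belief is exactly $p^{*}$. Second, the downside outcome is revelation of $L$ at belief $0$: this is immediate for confidence-building and confidence-preserving experiments (which jump to $0$), and for confidence-spending, where the belief drifts down to $p_{*}$ instead, it follows from $p_{*}\to 0$. Granting these, optional stopping again forces $\Pr[r]=p_{0}/p^{*}$ in the limit, so both players' payoffs depend on $p^{*}$ alone and lie on the stated curve, with the immediate-action cases reproducing the endpoints and the no-persuasion vector. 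The crux — and the main obstacle — is establishing $p_{*}\to 0$ and the no-overshoot property for \emph{every} SMPE, not merely the constructed ones; both follow from the players' indifference conditions at the boundaries, but verifying them cleanly without first invoking the full equilibrium characterization of Section~\ref{sec:persuasion_dynamics} is the delicate point.
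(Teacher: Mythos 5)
Your sufficiency direction is sound and is essentially the paper's own route: the corollary is obtained there as a direct byproduct of Propositions \ref{prop:C2_hold} and \ref{prop:C2_fail}, which for each fixed $p^{*}$ construct the unique SMPE with upper bound $p^{*}$, show $p_{*}\rightarrow0$ as $c\rightarrow0$, and hence yield exactly the absorption probability $p_{0}/p^{*}$ and vanishing cost terms you compute. (One small inaccuracy: for $p^{*}>\eta$, or when \eqref{cond:low_receiver_rent} fails, the constructed SMPE is not pure $R$-drifting but involves $L$-drifting and stationary phases; your martingale computation survives only because, starting from a fixed $p_{0}$ and for $c$ small, the terminal beliefs are still $\{0,p^{*}\}$. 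Also, the endpoints $p^{*}=\max\{p_{0},\hat{p}\}$ and $p^{*}=1$ are not attained by any SMPE, so strictly they require a diagonal sequence $p^{*}(c)$ rather than your iterated limits, though this is a minor point the paper also treats informally.)

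The genuine gap is the necessity direction, and you have flagged it yourself: nothing in your argument establishes, for an \emph{arbitrary} SMPE, that (i) $p_{*}\rightarrow0$, (ii) downward jumps target only $0$, and (iii) the game never ends at a lower belief bounded away from $0$. These do not ``follow from the players' indifference conditions at the boundaries'' as you assert. Which constraint binds at $p_{*}$ depends on Condition \eqref{cond:low_receiver_rent}: when it holds, the receiver is indifferent at $p_{*}$ while $V(p_{*})>0$; when it fails, the sender's participation constraint binds, $V(p_{*})=0$, while the receiver would still wait. Deducing $p_{*}\rightarrow0$ from the receiver's indifference $U(p_{*})=U_{\ell}(p_{*})$ requires knowing that the receiver's value near $p_{*}$ converges to $\frac{p}{p^{*}}U_{r}(p^{*})+\frac{p^{*}-p}{p^{*}}u_{\ell}^{L}$, i.e., knowing what the sender plays near $p_{*}$ \textemdash{} precisely the best-response characterization you hoped to avoid. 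Similarly, ruling out interior downward jump targets $q_{2}\in(0,p)$ needs convexity of the equilibrium value function (the hypothesis of Proposition \ref{prop:persuasion_modes}.b), which the paper explicitly states it cannot prove directly for an arbitrary equilibrium; it is obtained only constructively, by guessing the value function and verifying it is a viscosity solution of the HJB equation (Lemmas \ref{lem:unimprovability} and \ref{lem:unimprovability_at_kinks}), then invoking uniqueness of viscosity solutions. The paper closes your gap with Lemma \ref{lem:Sender's_best_response}, which characterizes the sender's best response for \emph{every} admissible lower bound $p_{*}\in[\pi_{\ell L},\hat{p})$, not merely the equilibrium one, together with Lemmas \ref{lem:two_equilibrium_classes}, \ref{lem:Unique_eqm_class_patient_sender} and \ref{lem:Unique_eqm_class_impatient_sender}, which pin down $p_{*}$ (either $\phi_{\ell R}$ or $\pi_{\ell L}$, both tending to $0$). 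Without that machinery, your reverse inclusion remains an unproved claim, so the proposal as written delivers only one inclusion, not the equality asserted by the corollary.
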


\section{Persuasion Dynamics}

\label{sec:persuasion_dynamics}

In this section, we provide a full description of SMPE strategy profiles
and illustrate the resulting equilibrium persuasion dynamics. We first
explain why the sender optimally uses the three modes of persuasion
discussed in the Introduction and Section \ref{sec:model}. Then,
using them as building blocks, we construct full SMPE strategy profiles.

\subsection{Modes of Persuasion}

\label{subsec:modes_persuasion}

Fix an SMPE with two threshold beliefs $p_{*}$ and $p^{*}$, where
$p_{*}<\hat{p}<p^{*}$. We investigate the sender's optimal persuasion/experimentation behavior at any belief $p\in(0,1)$ in that equilibrium.

Suppose that the sender runs a flow experiment that targets $q\neq p$ when the current belief is $p$. Then, by Lemma \ref{lem:info}, the belief jumps to $q$ at rate $\lambda p(1-p)/|q-p|$ and, absent jumps, moves continuously according to $\dot{p}=-\text{sgn}(q-p)\lambda p(1-p)$, where $\text{sgn}(x)$ denotes the signum function. Therefore, her flow benefit is given by
\[
v(p;q):=\lambda\frac{p(1-p)}{|q-p|}(V(q)-V(p))-\text{sgn}(q-p)\lambda p(1-p)V^{\prime}(p),
\]
where $V(\cdot)$ is the sender's value
of playing the candidate equilibrium strategy.\footnote{Note that the sender's value function may not be everywhere differentiable.
We ignore this here to give a simplified argument illustrating the
properties of the optimal strategy for the sender. The formal proofs
can be found in Appendix \ref{sec:Proofs-of-Propositions}.} Specifically, for $q>p$, the flow benefit consists of the value
increase from a breakthrough which arises at rate $\lambda\frac{p(1-p)}{|q-p|}$
(the first term) and the decay of value in its absence (the second
term). For $q<p$, the first term captures the value decrease
from a breakdown, while the second term represents the gradual appreciation
in its absence.

At each point in time, the sender can choose any countable mixture
over experiments. Therefore, at each $p$, her flow benefit from \emph{optimal}
persuasion is equal to
\begin{equation}
v(p):=\max_{\alpha(\cdot;p)}\sum_{q}\alpha(q;p)v(p;q)\text{ subject to }\sum_{q}\alpha(q;p)\le1.\label{eq:HJB_general_main}
\end{equation}
The function $v(p)$ represents the gross flow value from experimentation.
It plays an important role in characterizing the sender's strategy
in the stopping region as well as in the waiting region. If $p\geq p^{\ast}$,
then the receiver takes action $r$ immediately, and thus $V(p)=v$
for all $p\geq p^{\ast}$. It follows that $v(p)=0<c$, so it is optimal
for the sender to pass, which is intuitive. If $p<p_{\ast}$ then
the sender has only one instant to persuade the receiver, and therefore
she experiments only when $v(p)\geq c$: if $v(p)<c$, persuasion
is so unlikely that she prefers to pass, or more intuitively, gives
up on persuasion.

In the waiting region $p\in(p_{*},p^{*})$, the sender must have an
incentive to experiment, which suggests that $v(p)\geq c$.\footnote{Suppose that $v(p)<c$. Then, the sender strictly prefers passing
forever to conducting any experiment at $p$ followed by the optimal
continuation. This implies that the value function must be $V(p)=0$---the
value of passing forever. Hence, we must have $v(p)\ge c$ whenever
$V(p)>0$, which holds if $p\in W$.} In particular, when the sender's equilibrium strategy involves experimentation,
her value function is characterized by the Hamilton-Jacobi-Bellman
(HJB) equation, which means that $V(p)$ is adjusted so that $v(p)=c$
holds.

The following proposition simplifies the potentially daunting task
of characterizing the sender's optimal experiment at each belief in
\eqref{eq:HJB_general_main}, to searching among a small subset of feasible experiments.
\begin{prop}
\label{prop:persuasion_modes} Consider an SMPE where the receiver's
strategy is given by $p_{*}<\hat{p}<p^{*}$.
\begin{enumerate}
\item[(a)] For all $p\in(0,1)$, there exists a best response that  {involves at most two distinct Poisson jumps, one to $q_{1}(>p)$ at rate $\alpha_{1}:=\alpha(q_{1};p)$ and the other to $q_{2}(<p)$ at rate $\alpha_{2}:=\alpha(q_{2};p)$}.
\item[(b)] Suppose that $V(\cdot)$ is nonnegative, increasing, and strictly
convex over $(p_{\ast},p^{\ast}]$, and $V(p_{\ast})/p_{\ast}\leq V^{\prime}(p_{\ast})$.
Then, the best response in part (a) has
\begin{enumerate}
\item for $p\in(p_{\ast},p^{\ast})$, $\alpha_{1}+\alpha_{2}=1$ with $q_{1}=p^{\ast}$
and $q_{2}=0$;
\item for $p<p_{\ast}$, either the sender passes, or $\alpha_{1}=1$ and $q_{1}=p_{\ast}$ or $q_{1}=p^{\ast}$;
\item for $p>p^{*}$, the sender passes.
\end{enumerate}
\end{enumerate}
\end{prop}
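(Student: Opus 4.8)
The plan is to reduce the infinite-dimensional choice of an information structure to a one-dimensional comparison of jump targets, and then use the convexity hypotheses to pin down the optimal targets. For part (a), I would first rewrite the flow objective \eqref{eq:HJB_general_main} so that its linearity is transparent. Factoring out $\lambda p(1-p)$, define the per-target efficiency $e(q):=v(p;q)/[\lambda p(1-p)]$, so that $e(q)=\frac{V(q)-V(p)}{q-p}-V'(p)$ for $q>p$ and $e(q)=V'(p)-\frac{V(p)-V(q)}{p-q}$ for $q<p$. The crucial observation is that the drift contribution $-\text{sgn}(q-p)\lambda p(1-p)V'(p)$ is already built into $v(p;q)$, so that the total flow value is exactly the linear form $\sum_i\alpha_i v(p;q_i)$ subject to $\sum_i\alpha_i=1$. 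I would then apply a pooling argument separately in the two ``directions'': any feasible structure is weakly improved by replacing all up-targets ($q_i>p$) with a single most efficient up-target $\bar q_+$ maximizing $v(p;\cdot)$ over $q>p$, carrying their combined weight, and likewise collapsing all down-targets onto a best $\bar q_-$. Since $v(p;\cdot)$ is continuous on each side and tends to $0$ as $q\to p$, each maximizer is attained by compactness, so the improved structure uses at most one up- and one down-experiment. (Where $V$ fails to be differentiable at $p$, the drift term must be read with the appropriate one-sided derivative; this is what the rigorous treatment in Appendix \ref{sec:Proofs-of-Propositions} handles, but it does not affect the counting.)

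For part (b)(i), with $p\in(p_*,p^*)$, I would identify $\bar q_+$ and $\bar q_-$. The best up-target is $p^*$: strict convexity makes the secant slope $q\mapsto\frac{V(q)-V(p)}{q-p}$ increasing on $(p,p^*]$, while for $q>p^*$ the receiver takes $r$, so $V$ is flat and this slope decreases; hence it peaks at $q=p^*$. The best down-target is $0$: here I would minimize the secant slope $\frac{V(p)-V(q)}{p-q}$ over $q<p$, the delicate point being that $V$ drops to $0$ on $[0,p_*)$, so the comparison crosses the downward jump of $V$ at $p_*$. I would show $q\mapsto V(q)/q$ is nondecreasing on $(p_*,p^*]$ by noting that $g(q):=qV'(q)-V(q)$ satisfies $g'(q)=qV''(q)>0$ and $g(p_*)=p_*V'(p_*)-V(p_*)\ge0$, which is precisely the hypothesis $V(p_*)/p_*\le V'(p_*)$. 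Monotonicity of $V(q)/q$ then yields $\frac{V(p)}{p}\le\frac{V(p)-V(q)}{p-q}$ for every interior $q\in(0,p)$, so $q=0$ minimizes the down-secant slope and is optimal. Finally, strict convexity gives $v(p;p^*)>0$, so experimenting strictly dominates passing and the budget binds, giving $\alpha_1+\alpha_2=1$ with support contained in $\{0,p^*\}$.

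For part (b)(ii), with $p<p_*$ the receiver takes $\ell$, so $V\equiv0$ on $[0,p_*)$ and hence $V(p)=V'(p)=0$. Then $v(p;q)=\lambda\frac{p(1-p)}{|q-p|}V(q)$, which vanishes for every down-target (where $V(q)=0$), so no down-experiment helps and at most a single up-experiment is used. It remains to maximize $h(q):=V(q)/(q-p)$ over $q>p$, the slope of the ray from $(p,0)$ to the graph of $V$. I would show $h$ attains its maximum only at an endpoint of $[p_*,p^*]$: writing $N(q):=V'(q)(q-p)-V(q)$ for the numerator of $h'$, we have $N'(q)=V''(q)(q-p)>0$, so $N$ changes sign at most once, from negative to positive; thus any interior critical point of $h$ is a minimum and the maximum lies at $q=p_*$ or $q=p^*$. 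This gives $\alpha_1=1$ with $q_1\in\{p_*,p^*\}$, unless even the best target has flow value below $c$, in which case the sender passes.

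I expect the main obstacle to be the down-target analysis in part (b)(i): because $V$ is discontinuous at the lower boundary $p_*$ (dropping to $0$), one cannot simply differentiate through, and the optimality of targeting $0$ rather than some point near $p_*$ rests entirely on converting the hypothesis $V(p_*)/p_*\le V'(p_*)$ into monotonicity of $V(q)/q$. A secondary but pervasive nuisance is that $V$ need not be differentiable at every interior belief, so both the drift term and the secant-slope arguments must be restated with one-sided derivatives and sub/superdifferentials in the formal version.
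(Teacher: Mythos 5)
Your proposal is correct and follows essentially the same route as the paper's own proof: linearity of the objective in the weights for part (a), convexity of $V$ to pin the up-target at $p^{\ast}$ (and at $\{p_{\ast},p^{\ast}\}$ when $p<p_{\ast}$), and the hypothesis $V(p_{\ast})/p_{\ast}\leq V^{\prime}(p_{\ast})$ to rule out down-jumps to $p_{\ast}$ — your monotonicity of $q\mapsto V(q)/q$ via $g(q)=qV^{\prime}(q)-V(q)$ is just a repackaging of the paper's secant-slope chain $\frac{V(p)-V(p_{\ast})}{p-p_{\ast}}\geq V^{\prime}(p_{\ast})>\frac{V(p_{\ast})}{p_{\ast}}\Leftrightarrow\frac{V(p)}{p}>\frac{V(p_{\ast})}{p_{\ast}}$. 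The minor caveats (twice-differentiability in the $g^{\prime}$ computation, one-sided derivatives at kinks, and attainment of the maximizer where $V$ jumps at $p_{\ast}$) you flag yourself, and they are handled at the same level of rigor as in the paper's argument.
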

For part (a) of Proposition \ref{prop:persuasion_modes}, notice that
the right-hand side in equation \eqref{eq:HJB_general_main} is linear
in each $\alpha(q;p)$ and the constraint $\sum_{q}\alpha(q;p)\le1$
is also linear. Therefore, by the standard linear programming logic,
there exists a solution that makes use of at most two experiments,
one below $p$ and the other above $p$.\footnote{One may wonder why we allow for two experiments. In fact, linearity
implies that there exists a maximizer that puts all weight on a single
experiment. But to obtain an \emph{admissible} Markov strategy, using
two experiments is sometimes necessary. For example, if $p$ is an
absorbing belief, then admissibility requires that the stationary
strategy be used at that belief, requiring two experiments. See Appendix \ref{appendix:admissible} for details.} This result implies that
\begin{equation}
v(p)=\max_{(\alpha_{1},q_{1}),(\alpha_{2},q_{2})}\lambda p(1-p)\left[\alpha_{1}\frac{V(q_{1})-V(p)}{q_{1}-p}-\alpha_{2}\frac{V(p)-V(q_{2})}{p-q_{2}}-(\alpha_{1}-\alpha_{2})V^{\prime}(p)\right],\label{eq:simplified_HJB_two}
\end{equation}
subject to $\alpha_{1}+\alpha_{2}\le1$ and $q_{2}<p<q_{1}$.

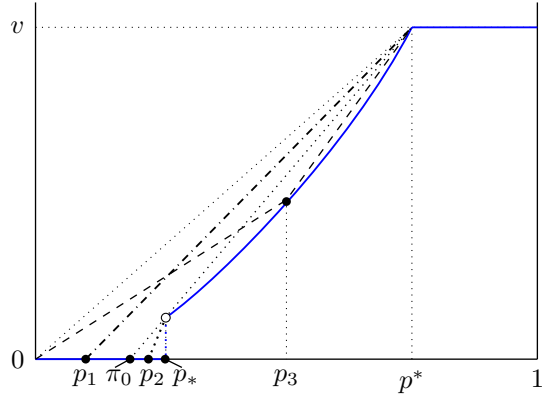
\begin{figure}
 \centering{}\beginpgfgraphicnamed{fig:HJBsimplified}
\begin{tikzpicture}[scale=1.1]
\draw[line width=0.5pt] (0, 4.3) -- (0,0);
\draw[line width=0.5pt] (0,0) -- (6,0);
\draw[line width=0.5pt] (6,4.3)-- (6,0);

\draw[dotted] (0,0) -- (4.5,4);
\draw[dotted] (0,4)--(6,4);
\draw[dotted] (4.5,0)--(4.5,4);


\draw[line width=0.7pt] (1.5572,0.5) .. controls (2.25,1) and (3.75,2.5447) .. (4.5,4.0000);

\draw[dotted,line width=0.7pt] (1.5572,0.5)--(1.5572,0);

\draw[line width=0.7pt] (0,0)--(1.5572,0);


\draw[line width=0.7pt] (4.5,4)--(6,4);

\fill (0,0) node[left] {\footnotesize{$0$}};
\fill (canvas cs:x=1.55cm,y=0cm) circle (1.5pt);
\draw[line width=0.5pt] (1.55,0)--(1.75,-0.1);
\fill (1.8,0) node[below] {\footnotesize{$p_{\ast}$}};

\fill (4.5,0) node[below] {\footnotesize{$p^{\ast}$}};
\fill (6,0) node[below] {\footnotesize{$1$}};
\fill (0,4) node[left] {\footnotesize{$v$}};

\fill (canvas cs:x=0.6cm,y=0cm) circle (1.5pt);
\fill (0.6,0) node[below] {\footnotesize{$p_{1}$}};
\draw[dashdotted,line width=0.7pt] (0.6,0)--(4.5,4);


\draw[dotted] (1.5572,0.5)--(1.5572,0);
\fill (canvas cs:x=1.35cm,y=0cm) circle (1.5pt);
\fill (1.4,0) node[below] {\footnotesize{$p_{2}$}};
\draw[dotted,line width=0.8pt] (1.35,0)--(1.5572,0.5);

\fill (canvas cs:x=3cm,y=1.9cm) circle (1.5pt);
\draw[dotted] (3,1.92)--(3,0);
\draw[dashed,line width=0.5pt] (0,0)--(3,1.92)--(4.5,4);
\fill (3,0) node[below] {\footnotesize{$p_{3}$}};

\fill (canvas cs:x=1.13cm,y=0cm) circle (1.5pt);
\fill (1,0) node[below] {\footnotesize{$\pi_{0}$}};
\draw[dotted,line width=0.5pt] (1.13,0)--(4.5,4);
\draw[line width=0.5pt] (1.13,0)--(0.95,-0.1);

\filldraw[fill=white] (1.5572,0.5) circle [radius=1.5pt];

\end{tikzpicture}
\endpgfgraphicnamed\caption{\label{fig:HJB_simplified} Optimal Poisson jump targets for different
values of $p$. The solid curve represents the sender's value function
in an SMPE with $p_{\ast}$ and $p^{\ast}$.}
\end{figure}

Part (b) of Proposition \ref{prop:persuasion_modes} states that if
$V(\cdot)$ satisfies the stated properties, which will be shown to
hold in equilibrium later, then there are only three candidates for
optimal Poisson jump targets, $0$, $p_{\ast}$, and $p^{\ast}$,
regardless of $p\in(0,p^{\ast})$. As illustrated in Figure \ref{fig:HJB_simplified},
the RHS of \eqref{eq:simplified_HJB_two} boils down to choosing $q_{1}>p$
to maximize the slope of $V$ between $q_{1}$ and $p$ (i.e., the
first fraction) or choosing $q_{2}<p$ to minimize the slope of $V$
between $q_{2}$ and $p$ (i.e., the second fraction). In the waiting
region, the former strategy leads to $q_{1}=p^{*}$ whereas the latter
strategy leads to $q_{2}=0$ (see $p_{3}$ and the dashed lines in
Figure \ref{fig:HJB_simplified}).\footnote{Note that $q_{1}>p^{\ast}$ yields a lower slope than $q_{1}=p^{\ast}$;
intuitively, the sender would be wasting her persuasion rate if she
targets above $p^{\ast}$. Meanwhile, when $p\in(p_{\ast},p^{\ast})$,
$q_{2}=p_{\ast}$ yields a higher slope than $q_{2}=0$, given $V(p_{\ast})/p_{\ast}\leq V^{\prime}(p_{\ast})$.} Similarly, if $p<p_{\ast}$ then $q_{2}=0$ is optimal and $q_{1}$
is either $p_{\ast}$ (see $p_{2}$ and the dotted line) or $p^{\ast}$
(see $p_{1}$ and the dash-dotted line).

Proposition \ref{prop:persuasion_modes} implies that the sender makes
use of the following three modes of persuasion at each $p<p^{\ast}$.
{\begin{itemize}
    \item Confidence building: \texorpdfstring{$R$}{R}-drifting experiment with jump target $0$.
    \item Confidence spending: \texorpdfstring{$L$}{L}-drifting experiment with jump target $q_{1}=p^{\ast}$ or possibly $q_{1}=p_{\ast}$ if $p<p_{\ast}$.
    \item Confidence preserving: stationary experiment with jump targets $q_{1}=p^{\ast}$ and $q_{2}=0$.
\end{itemize}}

Two aspects determine the sender's choice over these experiments in
her optimal strategy. First, strategies may differ in the distributions
over final posteriors they induce. In particular, they may differ
in the probability of persuasion (i.e., of the belief reaching $p^{*}$).
Second, and more interestingly, they may differ in the time it takes
for the sender to conclude persuasion. While the former feature has
been studied extensively by the static persuasion models, the latter
feature is novel here and is crucial for shaping the precise persuasion
dynamics.

To be concrete, compare the confidence-building strategy that uses
the $R$-drifting experiment (with jump target $0$) until the belief
reaches $p^{*}$, with the confidence-preserving strategy that uses
the stationary experiment (with jump targets $q_{1}=p^{\ast}$ and
$q_{2}=0$) until a jump occurs. Starting from any belief $p\in(p_{*},p^{*})$,
both strategies eventually lead to a posterior of $0$ or $p^{*}$,
with identical probabilities. Hence they yield the same outcome for
the two players, \emph{except for the time it takes} for the persuasion
process to conclude. Clearly, the sender wishes to minimize that time,
which explains her choice between the two modes of persuasion. Intuitively,
if the current belief is close to the persuasion target $p^{*}$,
then confidence building (i.e., $R$-drifting) takes less time on
average than confidence preserving (i.e., stationary), since the former
concludes persuasion within a short period of time, whereas the latter may take a long time and thus proves costly.\footnote{The expected persuasion costs associated with $R$-drifting and stationary strategies, which can be computed as illustrated in Footnotes \ref{fn:R_expected_waiting} and \ref{fn:S_expected_waiting}, are respectively given by
\begin{equation*}
    C_{+}(p;p^{\ast})=\frac{c}{\lambda}\left(p\log\left(\frac{p^{\ast}}{1-p^{\ast}}\frac{1-p}{p}\right)+1-\frac{p}{p^{\ast}}\right)\text{ and }C_{S}(p)=\frac{c}{\lambda}=\frac{2(p^{\ast}-p)}{p^{\ast}(1-p)}.
\end{equation*}
It can be shown that $C_{+}(p^{\ast};p^{\ast})=C_{S}(p^{\ast})$ and $2C_{+}^{\prime}(p^{\ast};p^{\ast})=C_{S}^{\prime}(p^{\ast})<0$; that is, as $p$ tends to $p^{\ast}$, the expected persuasion cost converges to $0$ faster under $R$-drifting than under stationary strategy.} The opposite is true, however, if the current belief is significantly
away from the persuasion target $p^{*}$. Intuitively, seeking persuasion
by an immediate success is more useful than slowly building up the
receiver's confidence in that case.

The confidence-spending strategy (which uses the $L$-drifting experiment
with jump target $p^{*}$) offers a similar trade-off as confidence preserving vis-a-vis confidence building.
If the current belief is far away from the persuasion target $p^{*}$,
confidence spending involves less time than confidence building. However,
there is another difference. If a success does not arise before the
belief falls to $p_{*}$, persuasion stops and the receiver chooses
$\ell$, before the belief reaches zero. By the familiar logic from
(static) Bayesian  persuasion, this leads to a suboptimal distribution
over posteriors. To avoid this, the sender may in some cases prefer
the confidence-building strategy, or in other cases switch from the
$L$-drifting experiment to the confidence-preserving strategy before
reaching $p_{\ast}$. As will be seen, the confidence-spending strategy
is also used in the stopping region $p<p^{*}$ as a ``Hail Mary pitch''
when the receiver is about to choose $\ell$ an instant later.

\subsection{Equilibrium Characterization\label{subsec:Equilibrium-Characterization}}

We now explain how the sender's equilibrium strategy deploys the three
modes of persuasion introduced in Section \ref{subsec:modes_persuasion},
and provide a full description of the unique SMPE strategy profile
for each set of parameter values and persuasion target $p^{*}$.

The structure of SMPE depends on two conditions. The first condition concerns how demanding the persuasion target $p^{\ast}$ is:
\begin{equation}
\tag{C1}p^{*}\leq\eta\approx0.943.\label{cond:permissive_persuasion_target}
\end{equation}
This condition determines whether the sender always prefers the $R$-drifting
strategy to the stationary strategy or not. The constant $\eta$ is
the largest value of $p^{*}$ such that the sender prefers the former
strategy to the latter for all $p<p^{*}$ (see Appendix \ref{par:Construction-of-eta} for a formal definition). Notice that this condition holds for $p^{*}$
not too large relative to $\hat{p}$; for instance, this is the case
when the sender's equilibrium strategy approximates the KG solution
(as long as $\hat{p}\leq\eta$).

The structure of the sender's equilibrium strategy also depends on
the following condition:
\begin{equation}
\tag{C2}v>U_{r}(p^{\ast})-U_{\ell}(p^{\ast}).\label{cond:low_receiver_rent}
\end{equation}
The left-hand side quantifies the sender's gains when she successfully
persuades the receiver and induces action $r$, while the right-hand
side represents the corresponding gains for the receiver.\footnote{\label{fn:relative_cost_discuss}As explained in Section \ref{sec:model} (see Footnote \ref{fn:asymmetric_costs}),
the payoffs of the two players are directly comparable, because their flow cost $c$ is normalized to be the same. With different flow costs, \eqref{cond:low_receiver_rent} has to be stated
using each player's payoff relative to their flow cost. In the extreme case when the sender's cost is zero but the receiver's is not, \eqref{cond:low_receiver_rent} necessarily holds, and the
equilibria characterized in Proposition \ref{prop:C2_hold} below
always exist. However, the sender is indifferent over all strategies
that yield the same (ex post) distribution of posteriors. Therefore,
the claim of uniqueness in Proposition \ref{prop:C2_hold} no longer
holds.} If \eqref{cond:low_receiver_rent} holds, then the sender has a stronger
incentive to experiment than the receiver has to listen, so the belief
$p_{\ast}$ below which some player wishes to stop is determined by
the receiver's incentives. Conversely, if \eqref{cond:low_receiver_rent}
fails, then the sender is less eager to experiment, and thus $p_{\ast}$
is determined by the sender's incentives.

We first provide an equilibrium characterization for the case where
\eqref{cond:low_receiver_rent} is satisfied.
\begin{prop}
\label{prop:C2_hold} Fix $p^{*}\in(\hat{p},1)$ and suppose that
$v>U_{r}(p^{\ast})-U_{\ell}(p^{\ast})$. For each $c>0$ sufficiently
small, there exists a unique SMPE such that the waiting region has
upper bound $p^{\ast}$. The waiting region is $W=[p_{\ast},p^{\ast})$
for some $p_{\ast}<\hat{p}$, and the sender's equilibrium strategy
is as follows:\footnote{We set $W=[p_{*},p^{*})$ to be a half-open interval, since for beliefs
$p<p_{*}$ close to $p_{*}$, the sender's best response is to target
$q=p_{*}$. Hence existence of the best response requires $p_{*}\in W$.}
\begin{enumerate}
\item[(a)] Suppose the belief is in the waiting region with $p\in[p_{*},p^{*})$.
\begin{enumerate}
\item[(i)] If $p^{\ast}\in(\hat{p},\eta)$, then the sender plays the $R$-drifting
strategy with left-jumps to $0$ for all $p\in[p_{\ast},p^{\ast})$.
\item[(ii)] If $p^{\ast}\in(\eta,1)$,\footnote{Notice that in the knife-edge case when $p^{\ast}=\eta$, there are
two SMPEs, one as in (a.i) and another as in (a.ii). In the latter,
however, $\overline{\pi}_{LR}=\xi$ and the $L$-drifting strategy
is not used in the waiting region. The two equilibria are payoff-equivalent
but exhibit very different dynamic behavior when $p_{0}\in[p_{\ast},\xi]$.} then there exist cutoffs $p_{\ast}<\xi<\overline{\pi}_{LR}<p^{\ast}$
such that for $p\in[p_{\ast},\xi)\cup(\overline{\pi}_{LR},p^{\ast})$,
the sender plays the $R$-drifting strategy with left-jumps to $0$;
for $p=\xi$, she uses the stationary strategy with jumps to $0$
and $p^{\ast}$; and for $p\in(\xi,\overline{\pi}_{LR}]$, she adopts
the $L$-drifting strategy with right-jumps to $p^{\ast}$.
\end{enumerate}
\item[(b)] Suppose the belief is outside the waiting region with $p<p_{*}$.
There exist cutoffs $0<\pi_{\ell L}<\pi_{0}<p_{*}$ such that for
$p\le\pi_{\ell L}$, the sender passes; for $p\in(\pi_{\ell L},\pi_{0})$,
she uses the $L$-drifting strategy with jumps to $q=p^{*}$; and
for $p\in[\pi_{0},p_{\ast})$, she uses the $L$-drifting strategy
with jumps to $q=p_{\ast}$.
\end{enumerate}
The lower bound $p_{*}$ of the waiting region converges to zero as
$c\rightarrow0$.
\end{prop}

Figure \ref{fig:C2_hold} below summarizes the sender's SMPE strategy
in Proposition \ref{prop:C2_hold}, depending on whether $p^{*}<\eta$
or not. If $p^{\ast}\in(\hat{p},\eta)$, then the sender uses only
$R$-drifting experiments in the waiting region $[p_{\ast},p^{\ast})$,
as depicted in the top panel of the figure. If $p^{\ast}>\eta$, then
the sender employs other strategies as well, as described in the bottom
panel of Figure \ref{fig:C2_hold}. For low beliefs close to $p_{*}$,
she starts with $R$-drifting (confindence-building) experiments but
switches to the stationary experiment when the belief reaches $\xi$.
For beliefs above $\xi$, but below $\overline{\pi}_{LR}$, she employs
$L$-drifting (confidence-spending) experiments and also switches
to the stationary experiment when the belief reaches $\xi$.

\begin{figure}[htp]
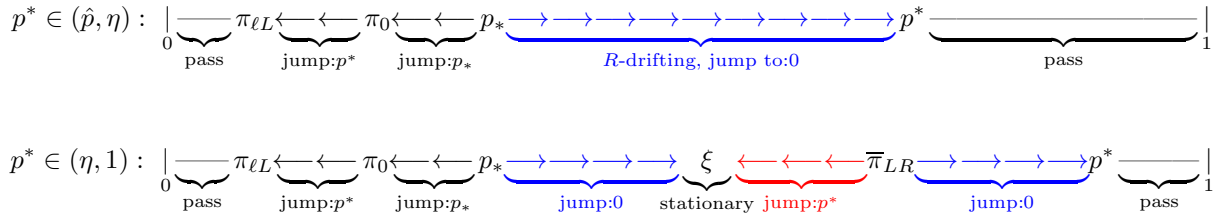

\begin{centering}
{\footnotesize{}{}{}{}{}{}{}{}{}{}{}{}
\[
p^{\ast}\in(\hat{p},\eta):~\underset{0}{|}\underbrace{\vphantom{\pi_{0}}\text{------}}_{\hspace{-20pt}\text{pass}\hspace{-20pt}}\pi_{\ell L}\underbrace{\vphantom{\pi_{0}}\hspace{-5pt}\longleftarrow\longleftarrow}_{\text{jump:}p^{*}}\pi_{0}\underbrace{\vphantom{\pi_{0}}\hspace{-5pt}\longleftarrow\longleftarrow}_{\text{jump:}p_{*}}p_{*}\textcolor{black}{\underbrace{{\longrightarrow\vphantom{\pi_{0}}\hspace{-5pt}\longrightarrow\longrightarrow\longrightarrow\longrightarrow\longrightarrow\longrightarrow\longrightarrow\longrightarrow}}_{\text{\ensuremath{R}-drifting, jump to:}0}}\,p^{*}\underbrace{\vphantom{\pi_{0}}\text{------------------------------}}_{\text{pass}}\underset{1}{|}
\]
}{\footnotesize\par}
\par\end{centering}
\begin{centering}

\par\end{centering}
\begin{centering}
\smallskip{}
 {\footnotesize{}{}{}{}{}{}{}{}{}{}{}{}
\[
p^{\ast}\in(\eta,1):~\underset{0}{|}\underbrace{\vphantom{\pi_{0}}\text{------}}_{\hspace{-20pt}\text{pass}\hspace{-20pt}}\pi_{\ell L}\underbrace{\vphantom{\pi_{0}}\hspace{-5pt}\longleftarrow\longleftarrow}_{\text{jump:}p^{*}}\pi_{0}\underbrace{\vphantom{\pi_{0}}\hspace{-5pt}\longleftarrow\longleftarrow}_{\text{jump:}p_{*}}p_{*}\textcolor{black}{\underbrace{{\longrightarrow\longrightarrow\longrightarrow\vphantom{\pi_{0}}\hspace{-5pt}\longrightarrow}}_{\text{jump:}0}}\underbrace{\vphantom{\pi_{0}}\hspace{-5pt}\xi\hspace{-5pt}}_{\text{\hspace{-0.3cm}stationary\hspace{-0.3cm}}}\textcolor{black}{ \underbrace{{\longleftarrow\vphantom{\pi_{0}}\hspace{-5pt}\longleftarrow\longleftarrow}}_{\text{jump:} p^{*}} }\overline{\pi}_{LR}\textcolor{black}{\underbrace{{\longrightarrow\vphantom{\pi_{0}}\hspace{-5pt}\longrightarrow\longrightarrow\longrightarrow}}_{\text{jump:}0}}p^{*}\underbrace{\vphantom{\pi_{0}}\text{---------}}_{\text{pass}}\underset{1}{|}
\]
}{\footnotesize\par}
\par\end{centering}
\begin{centering}

\par\end{centering}
\caption{The sender's SMPE strategies in Proposition \ref{prop:C2_hold}, that
is, when $v>U_{r}(p^{\ast})-U_{\ell}(p^{\ast})$.\label{fig:C2_hold}}
\end{figure}

To understand these different patterns, recall from Section \ref{subsec:modes_persuasion}
that the $R$-drifting experiment is particularly useful if it does
not take too long to build the receiver's confidence and move the
belief to $p^{*}$. This explains the use of $R$-drifting experiment
when $p$ is rather close to $p^{\ast}$, for $p\in[\overline{\pi}_{LR},p^{\ast})$
if $p^{\ast}\geq\eta$ and for all $p$ in the waiting region if $p^{\ast}<\eta$. If $p^{*}$ is above $\eta$, then for $p$
below $\overline{\pi}_{LR}$, other experiments become optimal. For
$p<\xi$, the sender starts by building confidence, but instead of
continuing with this strategy until $p^{*}$ is reached, she cuts
it short and switches to the stationary strategy when $\xi$ is reached.
At $\xi$, the arrival rate of a jump to $p^{*}$ in the stationary
experiment is sufficiently high to yield a faster persuasion (on average)
than it would take to gradually build confidence to $p^{\ast}$ using
the $R$-drifting strategy. For beliefs $p\in(\xi,\overline{\pi}_{LR})$,
a jump to $p^{*}$ arrives at a higher rate, so that it becomes optimal
to spend confidence and use only the $L$-drifting experiment, rather
than preserving confidence with the stationary experiment.

For an economic intuition, consider a salesperson courting a potentially
interested buyer. If the buyer needs only a bit more reassurance to
buy the product, then the salesperson should carefully build up the
buyer's confidence until the belief reaches $p^{*}$. The salesperson
may still ``slip off'' and lose the buyer (i.e., $p$ jumps down
to $0$). But most likely, the salesperson ``weathers'' that risk
and moves the buyer over the last hurdle (i.e., $q=p^{\ast}$ is reached).
This is exactly what our equilibrium persuasion dynamics describes
when $p_{0}$ is close to $p^{\ast}$. When the buyer does not require
a high degree of confidence to be persuaded ($p^{*}\le\eta$), building
up confidence is the optimal strategy for the salesperson whenever
the buyer is initially willing to listen (i.e., $p_{0}$ is in the
waiting region). By contrast, when $p^{*}>\eta$, the buyer requires
a lot of convincing and there are beliefs where the buyer is rather
uninterested (as in a ``cold call''). Then, the salesperson's optimal
strategy depends on how skeptical the buyer is initially. If $p_{0}\in[\overline{\pi}_{LR},p^{\ast})$,
then it is still an optimal strategy for the salesperson to build
up the buyer's confidence until $p^{\ast}$. If $p_{0}\in(p_{*},\xi)$,
the salesperson first tries to build confidence. If the buyer is still
listening when the belief reaches $\xi$, the seller becomes more
convinced that the buyer can be persuaded, and she starts using a
big pitch that would move the belief to $p^{*}$. For higher beliefs,
she is even more convinced that the buyer can be persuaded quickly,
so she ``spends confidence'' and concentrates all her efforts on
quickly persuading the receiver.

Condition \eqref{cond:low_receiver_rent} means that the lower bound
$p_{*}$ of the waiting region is determined by the receiver's incentive: $p_{\ast}$ is the point at which the receiver is indifferent between taking action $\ell$ immediately and waiting (i.e., $U_{\ell}(p_{\ast})=U(p_{\ast})$,
where $U(p)$ is the receiver's payoff from experimentation). Intuitively, \eqref{cond:low_receiver_rent} suggests that the receiver gains less from experimentation, and is thus less willing to continue, than the sender. Therefore, at the lower bound $p_{\ast}$, the receiver wants
to stop, even though the sender wants to continue persuading the receiver
(i.e., $V(p_{\ast})>0$).

When $p<p_{\ast}$, the sender plays only $L$-drifting experiments,
unless she prefers to pass (i.e., when $p<\pi_{\ell L}$). This is
intuitive, because the receiver takes action $\ell$ immediately unless
the sender generates an instantaneous jump, forcing the sender to
effectively make a ``Hail Mary'' pitch. It is intriguing, though,
that the sender's target posterior can be either $p_{\ast}$ or $p^{\ast}$,
depending on how close $p$ is to $p_{\ast}$: in the sales context
used above, if the buyer is fairly skeptical, then the salesperson
needs to use a big pitch. But, depending on how skeptical the buyer
is, she may try to get enough attention only for the buyer to stay
engaged (targeting $q=p_{\ast}$) or use an even bigger pitch to convince
the buyer to buy outright (targeting $q=p^{\ast}$). If $p$ is just
below $p_{\ast}$ (see $p_{2}$ in Figure \ref{fig:HJB_simplified}),
then the sender can jump into the waiting region at a high rate: recall
that the arrival rate of a jump to $p_{\ast}$ grows to infinity as
$p$ tends to $p_{\ast}$. In this case, it is optimal to target $p_{\ast}$,
thereby maximizing the arrival rate of Poisson jumps: the salesperson
is sufficiently optimistic about her chance of grabbing the buyer's
attention, so she only aims to make the buyer stay. If $p$ is rather
far away from $p_{\ast}$ (below $\pi_{0}$ such as $p_{1}$ in Figure
\ref{fig:HJB_simplified}), then the sender does not enjoy a high
arrival rate. In this case, it is optimal to maximize the sender's
payoff conditional on Poisson jumps, which she gets by targeting $p^{\ast}$:
the salesperson tries to sell her product right away and if it does
not succeed, then she just lets it go.

Next, we provide an equilibrium characterization for the case when
\eqref{cond:low_receiver_rent} is violated.
\begin{prop}
\label{prop:C2_fail} Fix $p^{\ast}\in(\hat{p},1)$ and assume that
$v\le U_{r}(p^{\ast})-U_{\ell}(p^{\ast})$. For each $c>0$ sufficiently
small, there exists a unique SMPE such that the waiting region has
upper bound $p^{\ast}$. The waiting region is $W=(p_{\ast},p^{\ast})$
for some $p_{*}<\hat{p}$, and the sender's equilibrium strategy is
as follows:\footnote{We set $W=(p_{*},p^{*})$ to be an open interval, since the sender
uses the $L$-drifting strategy for beliefs close to $p_{*}$. Including
$p_{*}$ would not lead to a well-defined stopping time and therefore
violates admissibility.}
\begin{enumerate}
\item Suppose the belief is in the waiting region with $p\in(p_{*},p^{*})$.
\begin{enumerate}
\item If $p^{\ast}\in(\hat{p},\eta)$, then there exists a cutoff $\pil_{LR}\in W$ such that for $p\in(\pil_{LR},p^{\ast})$, the sender uses the $R$-drifting
strategy with left-jumps to $0$; and for $p\in(p_{\ast},\pil_{LR})$,
she uses the $L$-drifting strategy with right-jumps to $p^{\ast}$.
\item If $p^{\ast}\in(\eta,1)$, then there exist cutoffs $p_{\ast}<\underline{\pi}_{LR}<\xi<\overline{\pi}_{LR}<p^{\ast}$
such that for $p\in[\underline{\pi}_{LR},\xi)\cup[\overline{\pi}_{LR},p^{\ast})$,
the sender plays the $R$-drifting strategy with left-jumps to $0$;
for $p=\xi$, she adopts the stationary strategy with jumps to $0$
or $p^{\ast}$; and for $p\in(p_{*},\underline{\pi}_{LR})\cup(\xi,\overline{\pi}_{LR})$,
she uses the $L$-drifting strategy with right-jumps to $p^{\ast}$.
\end{enumerate}
\item If the belief is outside the waiting region, the sender passes.
\end{enumerate}
The lower bound of the waiting region $p_{*}$ converges to zero as
$c$ tends to $0$.
\end{prop}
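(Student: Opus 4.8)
The plan is to follow the same route as the proof of Proposition~\ref{prop:C2_hold}: construct the sender's value function $V$ as the solution of the Hamilton--Jacobi--Bellman equation $v(p)=c$ on the waiting region, assembled from the three mode-specific ODEs singled out by Proposition~\ref{prop:persuasion_modes}, pin down the boundary $p_{\ast}$ and the interior cutoffs, and then check both players' incentives. The one economically substantive change is how the lower boundary is determined. Because \eqref{cond:low_receiver_rent} fails, the receiver's per-success gain $U_{r}(p^{\ast})-U_{\ell}(p^{\ast})$ exceeds the sender's gain $v$, so the receiver is the more eager party and it is the \emph{sender} who wants to quit first. Hence $p_{\ast}$ is fixed by the sender's indifference, $V(p_{\ast})=0$, instead of the receiver's indifference $U_{\ell}(p_{\ast})=\mathcal{U}(p_{\ast})$ that governed Proposition~\ref{prop:C2_hold}. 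Since $V(p_{\ast})=0$, a neighborhood of $p_{\ast}$ lies ``far'' from $p^{\ast}$ in value terms, so slow confidence-building is dominated there by the risky $L$-drifting experiment targeting $p^{\ast}$; this is precisely what generates the extra $L$-drifting stretch $(p_{\ast},\pi_{LR})$ in part~(a.i) and $(p_{\ast},\underline{\pi}_{LR})$ in part~(a.ii), absent in Proposition~\ref{prop:C2_hold}, and it is also why the sender merely passes for every $p<p_{\ast}$ in part~(b): a jump to $p_{\ast}$ is worthless because $V(p_{\ast})=0$, and a jump to $p^{\ast}$ is too unlikely to cover the flow cost.

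For the explicit construction I would impose $V(p^{\ast})=v$ at the top and solve the relevant ODE on each mode-region: on an $R$-drifting stretch $V$ solves $\lambda p(1-p)V'(p)-\lambda(1-p)V(p)=c$; on an $L$-drifting-to-$p^{\ast}$ stretch it solves $\lambda\frac{p(1-p)}{p^{\ast}-p}\bigl(v-V(p)\bigr)-\lambda p(1-p)V'(p)=c$; and at the absorbing belief $\xi$ arising when \eqref{cond:permissive_persuasion_target} fails, the drift terms cancel and $V(\xi)$ is fixed by the algebraic equation $\tfrac12\lambda\frac{\xi(1-\xi)}{p^{\ast}-\xi}\bigl(v-V(\xi)\bigr)-\tfrac12\lambda(1-\xi)V(\xi)=c$. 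Each interior cutoff is located by gluing the adjacent solutions where the two relevant experiments deliver the same flow benefit (value matching together with the indifference that makes $V$ continuously differentiable). The lower boundary is determined by the requirement that passing be optimal just below it, i.e.\ by the break-even of the $L$-drifting-to-$p^{\ast}$ experiment against a zero continuation value,
\[
\lambda\frac{p_{\ast}(1-p_{\ast})}{p^{\ast}-p_{\ast}}\,v=c,
\]
which is exactly the cutoff $\hat{\pi}_{\ell L}$ of Theorem~\ref{thm:persuasion-failure} with $\hat{p}$ replaced by $p^{\ast}$; one verifies directly that this, together with $V(p_{\ast})=0$, yields $V'(p_{\ast})=0$, so value matching and smooth fit hold jointly at $p_{\ast}$.

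The verification then has three parts. First, I would show the constructed $V$ is non-negative, increasing, and strictly convex on $(p_{\ast},p^{\ast}]$ with $V(p_{\ast})/p_{\ast}\le V'(p_{\ast})$ (which holds, since both sides are $0$ at $p_{\ast}$), so that Proposition~\ref{prop:persuasion_modes}(b) applies and the prescribed experiments are genuine best responses; in case~(a.ii) the ordering $\underline{\pi}_{LR}<\xi<\overline{\pi}_{LR}$ and the fact that $\xi$ is approached from both sides follow from the defining property of $\eta$ as the threshold at which the stationary experiment first overtakes $R$-drifting. Second, I would check the receiver's incentives: waiting must beat stopping throughout $(p_{\ast},p^{\ast})$, and because \eqref{cond:low_receiver_rent} fails the receiver's rent is strictly positive at $p_{\ast}^{+}$, so the receiver still strictly prefers to wait exactly where the sender is indifferent---this is what makes the sender the binding party and also forces $W$ to be \emph{open}, since including $p_{\ast}$ (where $V=0$ and the best response is to $L$-drift downward out of $W$) would violate admissibility (cf.\ Appendix~\ref{subsec:Continuous_Time_Game}). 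Uniqueness follows because, given $p^{\ast}$, the boundary conditions and the gluing conditions leave no free constant.

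Finally, $p_{\ast}\to0$ as $c\to0$ is immediate from the break-even formula, whose left-hand side vanishes as $p_{\ast}\to0$. I expect the main obstacle to be the global shape analysis in case~(a.ii): proving that the pieced-together $V$ is strictly convex and increasing across as many as four interior switch points while simultaneously establishing the ordering $p_{\ast}<\underline{\pi}_{LR}<\xi<\overline{\pi}_{LR}<p^{\ast}$ and that $\xi$ is a genuine absorbing point toward which the belief drifts from both sides. Everything else reduces to the ODE and envelope bookkeeping already developed for Proposition~\ref{prop:C2_hold}, together with the receiver-value comparison that is slack here precisely because \eqref{cond:low_receiver_rent} fails.
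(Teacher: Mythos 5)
Your overall architecture does track the paper's: $p_{*}$ is the sender's break-even belief $\pi_{\ell L}$ solving $c=\lambda p_{*}(1-p_{*})v/(p^{*}-p_{*})$ (with $V(p_{*})=V'(p_{*})=0$ as in \eqref{eq:pi_ell_V'_L(p_*)}), the waiting region must be open for admissibility, the sender passes below $p_{*}$, and $p_{*}\to0$ follows from the break-even formula. But your gluing prescription at the interior switch points is a genuine error: you cannot locate $\pi_{LR}$, $\underline{\pi}_{LR}$, $\overline{\pi}_{LR}$ by ``value matching together with the indifference that makes $V$ continuously differentiable.'' At those cutoffs the equilibrium value function has a \emph{strict convex kink}: by the Crossing Lemma \ref{lem:Crossing}, when an $L$-drifting and an $R$-drifting solution meet at a belief where their common value strictly exceeds $V_{S}(p)$ — which Lemmas \ref{lem:magic_number} and \ref{lem:tangency_at_xi1} guarantee at these points — the $R$-drifting branch has strictly larger slope, so $V'(x_{-})<V'(x_{+})$. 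Indeed, a point at which both experiments deliver flow benefit $c$ with a \emph{common} derivative is characterized exactly by $V(p)=V_{S}(p)$, i.e., it is the tangency point $\xi_{1}$; imposing $C^{1}$ fit at the $LR$-cutoffs over-determines the system and the construction would fail. The correct conditions are value matching alone ($V_{L}=V_{R}$, $V_{L}=V_{RS}$, $V_{LS}=V_{R}$), after which the verification requires showing the kinked candidate is a viscosity solution of \eqref{eq:HJB_general} (Lemma \ref{lem:unimprovability_at_kinks}); smooth fit holds only at $\xi_{1}$ and at $p_{*}$.

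The second gap is uniqueness. Your claim that ``boundary conditions and gluing conditions leave no free constant'' only establishes uniqueness \emph{within} the conjectured class in which $p_{*}$ is set by sender indifference. The substantive step in the paper is ruling out the other equilibrium class: Lemma \ref{lem:two_equilibrium_classes} shows (via a ``full-attention'' relaxation that bounds both players' values from above) that any SMPE has $p_{*}\geq\max\{\phi_{\ell L},\pi_{\ell L}\}$, and that a strict inequality forces $V(p_{*})>0$ together with $U(p_{*})=U_{\ell}(p_{*})$. When \eqref{cond:low_receiver_rent} fails one must then show this second configuration is impossible: since the game would end only at $0$ or $p^{*}$, both players bear the same expected delay cost $C(p_{*})$, and $V(p_{*})>0$ with $U(p_{*})=U_{\ell}(p_{*})$ yields the contradiction $C(p_{*})<\frac{p_{*}}{p^{*}}v\leq\frac{p_{*}}{p^{*}}\left(U_{r}(p^{\ast})-U_{\ell}(p^{\ast})\right)=C(p_{*})$ (Lemma \ref{lem:Unique_eqm_class_impatient_sender}). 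Your heuristic that ``the sender is the more eager party's opposite, so she quits first'' gestures at this but is not a proof; without the common-delay-cost comparison you have not excluded SMPEs with a receiver-determined lower boundary, so the ``unique SMPE'' claim remains open in your sketch. (Your receiver verification is also lighter than needed — e.g., $U_{L}(p)>U_{r}(p)$ on the $L$-drifting stretch requires showing $\pi_{LR}\to0$, hence $\pi_{LR}<\hat{p}$, as $c\to0$ — but those are fillable details.)
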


\begin{figure}
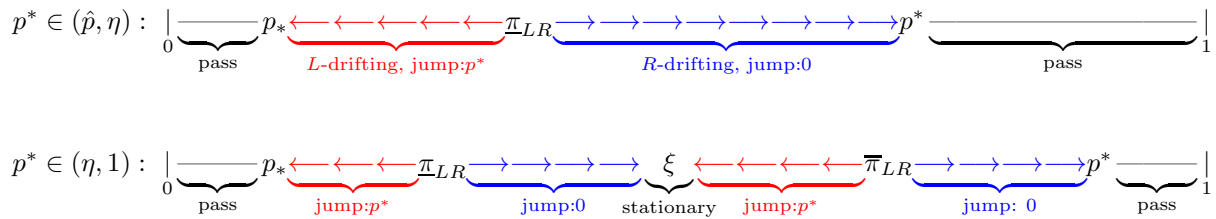

{\footnotesize{}{}{}{}{}{}{}{}{}{}{}{}
\[
p^{\ast}\in(\hat{p},\eta):~\underset{0}{|}\underbrace{\vphantom{\pi_{0}}\text{---------}}_{\hspace{-20pt}\text{pass}\hspace{-20pt}}p_{*}\textcolor{black}{ \underbrace{{\longleftarrow\vphantom{\pi_{0}}\hspace{-5pt}\longleftarrow\longleftarrow\longleftarrow\longleftarrow}}_{L\text{-drifting, jump:} p^{*}} }\pil_{LR}\textcolor{black}{\underbrace{{\longrightarrow\vphantom{\pi_{0}}\hspace{-5pt}\longrightarrow\longrightarrow\longrightarrow\longrightarrow\longrightarrow\longrightarrow\longrightarrow}}_{\text{\ensuremath{R}-drifting, jump:} 0}}p^{*}\underbrace{\vphantom{\pi_{0}}\text{------------------------------}}_{\text{pass}}\underset{1}{|}
\]
}{\footnotesize\par}

\smallskip{}
 {\footnotesize{}{}{}{}{}{}{}{}{}{}{}{}
\[
p^{\ast}\in(\eta,1):~\underset{0}{|}\underbrace{\vphantom{\pi_{0}}\text{---------}}_{\hspace{-20pt}\text{pass}\hspace{-20pt}}p_{*}\textcolor{black}{ \underbrace{{\longleftarrow\vphantom{\pi_{0}}\hspace{-5pt}\longleftarrow\longleftarrow}}_{\text{jump:} p^{*}} }\underline{\pi}_{LR}\textcolor{black}{\underbrace{{\longrightarrow\longrightarrow\longrightarrow\vphantom{\pi_{0}}\hspace{-5pt}\longrightarrow}}_{\text{jump:} 0}}\underbrace{\vphantom{\pi_{0}}\hspace{-5pt}\xi\hspace{-5pt}}_{\text{\hspace{-0.3cm}stationary\hspace{-0.3cm}}}\textcolor{black}{ \underbrace{{\vphantom{\pi_{0}}\hspace{-5pt}\longleftarrow\longleftarrow\longleftarrow\longleftarrow}}_{\text{jump:} p^{*}} }\overline{\pi}_{LR}\textcolor{black}{\underbrace{{\longrightarrow\vphantom{\pi_{0}}\hspace{-5pt}\longrightarrow\longrightarrow\longrightarrow}}_{\text{jump: } 0}}p^{*}\underbrace{\vphantom{\pi_{0}}\text{---------}}_{\text{pass}}\underset{1}{|}
\]
}{\footnotesize\par}

\caption{The sender's SMPE strategy in Proposition \ref{prop:C2_fail}, that
is, when $v\protect\leq U_{r}(p^{\ast})-U_{\ell}(p^{\ast})$.\label{fig:C2_fail} }
\end{figure}

Figure \ref{fig:C2_fail} describes the persuasion dynamics in Proposition \ref{prop:C2_fail}. There are two main differences from Proposition \ref{prop:C2_hold}.
First, if $p<p_{\ast}$ then the sender simply passes, whereas in
Proposition \ref{prop:C2_hold}, the sender uses $L$-drifting experiments
when $p\in(\pi_{\ell L},p_{\ast})$. Second, when $p$ is just above
$p_{\ast}$, the sender adopts $L$-drifting experiments, and thus
the game may stop at $p_{\ast}$. By contrast, in Proposition \ref{prop:C2_hold},
the sender always plays $R$-drifting experiments just above $p_{\ast}$,
and the game never ends with the belief reaching $p_{\ast}$. Both
of these differences are precisely due to the failure of \eqref{cond:low_receiver_rent}:
if $v\le U_{r}(p^{\ast})-U_{\ell}(p^{\ast})$ then the sender is less
willing to continue than the receiver, and thus $p_{\ast}$ is determined
by the sender's participation constraint (i.e., $V(p_{\ast})=0$).
Therefore, the sender has no incentive to experiment once $p$ falls
below $p_{\ast}$.

When $p$ is just above $p_{\ast}$, the sender goes for a big pitch
by targeting $p^{*}$ with $L$-drifting experiments. The sender does
not mind losing the buyer's confidence in the process, since the violation
of \eqref{cond:low_receiver_rent} means that, as the belief nears
$p_{*}$, she has very little motivation left for persuading the receiver
even though the latter remains willing to listen. By contrast, when
\eqref{cond:low_receiver_rent} holds (as in Proposition \ref{prop:C2_hold}),
as the belief nears $p_{*}$, the receiver loses interest in listening,
but the sender still sees a significant value in staying ``in the
game.'' Hence, the sender tries to build, instead of running down,
the receiver's confidence in that case.

\section{Concluding Discussions\label{sec:conclusion}}

We conclude by discussing how our results depend on several modelling assumptions and suggesting a few directions for future research.

\paragraph*{Binary actions and states.} We have considered the canonical Bayesian persuasion problem with two states and two actions. Some of our results {clearly} depend on specific features of the problem. However, our main economic insights hold more generally. In fact, it is often straightforward to modify our technical analysis for other persuasion problems.

To be concrete, consider an extension in which the receiver has one additional action $M$ and the players' payoffs are as depicted in Figure \ref{fig:three_actions}. Specifically, $M$ is the receiver's optimal action when the belief belongs to the intermediate range $[p^{\dagger},\hat p]$, and the sender earns $\underline{v}$ if the receiver takes action $M$. Assume $\underline{v}<p^{\dagger}v/\hat p$, so that the KG solution is still to induce two posteriors, $0$ and $\hat p$, whenever $p_{0}<\hat p$.

\begin{figure}
\raggedright{}\centering{}\beginpgfgraphicnamed{three_actions}
\begin{tikzpicture}[scale=1]

\draw[line width=0.5pt] (0, 4.3) -- (0,0) -- (6,0) -- (6,4.3);

\fill (0,4) node[left] {\footnotesize{$v$}};
\fill (0,2) node[left] {\footnotesize{$p^{\dagger}v/\hat p$}};

\fill (0,1) node[left] {\footnotesize{$\underline{v}$}};

\draw[loosely dotted] (4,0)--(4,4);
\draw[loosely dotted] (0,4)--(4,4);
\draw[loosely dotted] (0,2)--(2,2)--(2,0);
\draw[loosely dotted] (0,1)--(2,1);
\draw[loosely dotted] (1,0)--(1,1);

\draw[line width=0.7pt] (0,0) -- (2,0);
\draw[line width=0.7pt,dotted] (2,0) -- (2,1);
\draw[line width=0.7pt] (2,1) -- (4,1);
\draw[line width=0.7pt,dotted] (4,1) -- (4,4);
\draw[line width=0.7pt] (4,4) -- (6,4);

\draw[dashed,line width=0.5pt,color=black] (0,0) -- (4,4) -- (6,4);

\fill (0,0) node[below] {\footnotesize{$0$}};
\fill (2,0) node[below] {\footnotesize{$p^{\dagger}$}};
\fill (1,-0.1) node[below] {\footnotesize{$p_{0}$}};
\fill (4,0) node[below] {\footnotesize{$\hat p$}};
\fill (6,0) node[below] {\footnotesize{$1$}};

\draw[fill=black] (4,4) circle (2pt);
\draw[fill=black] (0,0) circle (2pt);
\draw[fill=black] (1,1) circle (2pt);


\draw (3,4.6) node {\textbf{Sender}};

\draw[xshift=7.5cm] (0, 4.3) -- (0,0) -- (6,0) -- (6,4.3);

\fill[xshift=7.5cm] (0,3.5) node[left] {\footnotesize{$u_{\ell}^{L}$}};
\fill[xshift=7.5cm] (0,0) node[left] {\footnotesize{$u_{r}^{L}$}};
\fill[xshift=7.5cm] (0,7/3) node[left] {\footnotesize{$u_{M}^{L}$}};

\fill[xshift=7.5cm] (6,0.5) node[right] {\footnotesize{$u_{\ell}^{R}$}};
\fill[xshift=7.5cm] (6,4) node[right] {\footnotesize{$u_{r}^{R}$}};
\fill[xshift=7.5cm] (6,17/6) node[right] {\footnotesize{$u_{M}^{R}$}};

\draw[xshift=7.5cm,dashed,color=gray] (0,0)--(6,4);
\draw[xshift=7.5cm,dashed,color=gray] (0,3.5)--(6,0.5);
\draw[xshift=7.5cm,dashed,color=gray] (0,7/3)--(6,17/6);

\draw[xshift=7.5cm,line width=0.7pt] (0,3.5) --(2,5/2) -- (4,8/3)-- (6,4);

\draw[xshift=7.5cm] (3,4.6) node {\textbf{Receiver}};

\fill[xshift=7.5cm] (0,0) node[below] {\footnotesize{$0$}};
\fill[xshift=7.5cm] (2,0) node[below] {\footnotesize{$p^{\dagger}$}};
\fill[xshift=7.5cm] (1,-0.1) node[below] {\footnotesize{$p_{0}$}};
\fill[xshift=7.5cm] (4,0) node[below] {\footnotesize{$\hat p$}};
\fill[xshift=7.5cm] (6,0) node[below] {\footnotesize{$1$}};

\draw[xshift=7.5cm,loosely dotted] (2,0)--(2,5/2);
\draw[xshift=7.5cm,loosely dotted] (4,0)--(4,8/3);
\draw[xshift=7.5cm,dotted] (1,0)--(1,79/24);

\draw[xshift=7.5cm,fill=black] (0,7/2) circle (2pt);
\draw[xshift=7.5cm,fill=black] (2,5/2) circle (2pt);
\draw[xshift=7.5cm,fill=black] (4,8/3) circle (2pt);

\draw[xshift=7.5cm,fill=white] (1,3) circle (2pt);

\draw[xshift=7.5cm,black,dashed] (0,3.5)--(4,8/3)--(6,4);

\draw[xshift=7.5cm,fill=white] (1,79/24) circle (2pt);


\end{tikzpicture}
\endpgfgraphicnamed\caption{\label{fig:three_actions} Payoffs from static persuasion when there is a middle action, $M$. Solid curves:
payoffs without persuasion (information). Dashed curve: the sender's
expected payoff in the KG solution.}
\end{figure}

An important change from our baseline environment is that the receiver does enjoy rents from the KG solution; observe that in the right panel of Figure \ref{fig:three_actions}, the dashed curve strictly exceeds the solid curve whenever $p\in(0,\hat p)$. In this case, it is possible to construct an SMPE {\it exactly} implementing the KG solution with $p^{\ast}=\hat p$, for $c$ sufficiently small. More importantly, Theorem \ref{thm:persuasion-failure} no longer holds: if $c$ is sufficiently small then the receiver prefers to wait when the sender plays an $L$-drifting experiment targeting $\hat p$ at $p_{0}(<\hat p)$.\footnote{A tempting conjecture may be that  {the no-persuasion} equilibrium is unsustainable if {the} KG solution offers strictly positive rents to the receiver. This need not be the case. If $p_0$ is {slightly below} $p^{\dagger}$, then it becomes credible (in the sense of satisfying our refinement) that the receiver stops immediately and the sender uses the $L$-drifting experiment with jump target $p^{\dagger}$ (and not $p^*$).} Meanwhile, Theorem \ref{thm:folk} remains valid: for $c$ sufficiently small and $p^{\ast}$ exceeding $\hat p$, the SMPE we construct for our baseline environment in Section \ref{sec:persuasion_dynamics} continues to be an SMPE in this extended problem. Therefore, our arguments in Section \ref{sec:folk} apply unchanged.

Relaxing the binary-state assumption raises a few significant challenges---such as defining the set of feasible experiments and analyzing a system of partial differential equations---which prevents us from providing a tight and comprehensive equilibrium characterization. Nevertheless, at least conceptually, it is not hard to see how our main insights would extend to the environment with more than two states. The no-persuasion equilibrium in Theorem \ref{thm:persuasion-failure} would exist if and only if the receiver never earns strictly positive (instantaneous) rents from the sender's optimal flow experiment. By contrast, if there is a (lump-sum) Blackwell experiment that strictly benefits both players, then the resulting outcome would be approximated by equilibria of the dynamic Persuasion model.

\paragraph*{Other features of the model.}\label{subsec:other_features}

We have restricted attention to Markov perfect equilibria, which by definition do not rely on incentives provided by off-path punishments. Certainly, other (non-Markov) equilibria could be used so as to enlarge the set of sustainable payoffs.\footnote{As is well known, it is technically challenging to define a game in continuous time without Markov restrictions \citep[see, e.g.][]{Simon1989}. Our subsequent discussion should be understood as referring to the limit of discrete-time equilibria.} Then, it seems plausible that as the players' persuasion costs vanish, one could implement all individually rational payoffs, including the dotted region in Figure \ref{fig:folk2}. For prior beliefs $p_{0}<\hat{p}$, this is indeed the case, because the no-persuasion equilibrium in Theorem \ref{thm:persuasion-failure} can be used to most effectively control the sender's incentives. For prior beliefs $p_{0}>\hat{p}$, however, no clear punishment equilibrium is available; note that for $p_{0}>\hat{p}$, the no-persuasion equilibrium \emph{maximizes} the sender's payoff. This suggests that our construction
of MPEs cannot be replaced by arguably simpler constructions that
rely on off-path punishments. Indeed, we conjecture that for $p_{0}>\hat{p}$, Theorem \ref{thm:folk} and Corollary \ref{prop:payoff_vectors} characterize the full set of equilibrium payoffs.

Our model assumes flow persuasion costs rather than discounting. This
assumption simplifies the analysis, mainly by additively separating
persuasion benefits from persuasion costs. Still, it has no qualitative
impact on our main results. Specifically, if we include both flow
costs and discounting in the analysis, then the resulting SMPEs would
converge to those of our current model as discounting becomes negligible.
If we consider only discounting (without flow costs), then the persuasion
dynamics needs some modification. Among other things, the sender has
no reason to voluntarily stop experimentation, and thus the persuasion
dynamics will be similar to that of Proposition \ref{prop:C2_hold}
(as opposed to that of Proposition \ref{prop:C2_fail}).\footnote{Specifically, the lower bound $p_{\ast}$ of the waiting region will
be determined by the receiver's incentives. In addition, at the lower
bound $p_{\ast}$, so as to stay within the waiting region, the sender
will play either $R$-drifting experiments or the stationary strategy.
This latter fact implies that if the game starts from $p_{0}\in[p_{\ast},p^{\ast})$,
then it will end only when the belief reaches either $0$ or $p^{\ast}$,
and thus the persuasion probability will always be equal to $p_{0}/p^{\ast}$.} Still, our main economic lessons will continue to apply: all three
theorems in Section \ref{sec:folk} would continue to hold.\footnote{The proofs of Theorems \ref{thm:persuasion-failure} and \ref{thm:infeasible_MPE_payoffs}
can be readily modified. For Theorem \ref{thm:folk}, it is easy to
show that the main economic logic behind it (namely, ``the power
of beliefs'' explained at the end of Section \ref{sec:illustate})
holds unchanged with discounting.} Furthermore, the relative advantages of the three main modes of persuasion
remain unchanged, so the persuasion dynamics are in many cases similar
to those described in Section \ref{sec:persuasion_dynamics}.

Our continuous-time game has a straightforward discrete-time analogue and can be interpreted as its limit. In a discrete-time model, however, it becomes important whether the receiver's per-period listening cost is independent of the amount of information the sender generates or proportional to it. In the former (independent) case, our ``power-of-beliefs'' logic no longer holds: if the current belief $p$ is just below the persuasion target $p^{\ast}$ then the receiver's gains from waiting one more period are close to $0$, in which case she would prefer to stop at $p<p^{\ast}$. Thus, any equilibrium with persuasion target  $p^{\ast}>\hat p$ would unravel, leaving $p^{\ast}=\hat p$ as the only feasible persuasion target. In our baseline model, this renders the no-persuasion equilibrium the unique SMPE. However, if the KG solution provides positive rents for the receiver, as exemplified in the case with three actions depicted in Figure \ref{fig:three_actions}, any persuasion target $p^{\ast}>\hat p$ can still be supported as SMPE as $c$ tends to $0$. Meanwhile, if the receiver's listening cost is proportional to the amount of new information, he would still be willing to wait, no matter how close $p$ is to $p^{\ast}$. Then, all our analysis and results continue to hold in the discrete-time analogue even in the baseline model.\footnote{The same logic applies when there is discounting in terms of the period length $\Delta$. If $\Delta$ is independent of the amount of new information, then all persuasive SMPEs with $p^{\ast}>\hat p$ unravel. However, if $\Delta$ is proportional to the amount of information---a sensible assumption if $\Delta$ describes information processing time---then such unraveling does not occur, and our analysis goes through unchanged.}

Our model focuses on generalized Poisson experiments to accommodate rich and flexible information choice. By contrast, an alternative
such as the drift-diffusion model does not allow for such richness.
For example, in \citet{Henry2017}, the sender samples from a fixed
exogenous process, without choosing the type of experiment. Nevertheless,
the logic that gives rise to our Theorem \ref{thm:folk}---namely, the incentivizing power of equilibrium beliefs---applies equally well to such models
(see Footnote \ref{fn:Brownian-Motion}).

\paragraph*{{Directions for future research.}}

The key features of our model are that real information takes time to generate, and that neither the sender nor the receiver has commitment
power over future actions. There are several avenues along which one
could vary these features. For example, one may consider a model in
which the sender faces the same flow information constraint as in
our model but has full commitment power over her dynamic strategy:
given our discussion in Section \ref{sec:illustate}, it is straightforward
that the sender can approximately implement the KG outcome. However,
it is non-trivial to characterize the sender's optimal dynamic strategy. Alternatively, one could further relax the commitment power by allowing
the receiver to observe only the outcome of the flow experiment, but
not the experiment itself.

More broadly, the rich persuasion dynamics found in our model owe
a great deal to the general class of Poisson experiments we allow
for. At first glance, allowing for the information to be chosen from
such a rich class of experiments \emph{at each point in time} might
appear extremely complex to analyze, and a clear analysis might seem
unlikely. Yet, the model produced a remarkably precise characterization
of the sender's optimal choice of information---namely, not just
\emph{when} to stop  {providing} information but more importantly \emph{what type} of information to {generate}. This modeling innovation may fruitfully apply to other dynamic settings.

\appendix

\section{Further Characterization on Feasible Experiments\label{appendix:flow_bound}}

This appendix formally proves Lemma \ref{lem:info} and also provides an alternative belief-based characterization for the set $\mathcal{P}^{\ast}$ of feasible experiments.

\begin{proof}[Proof of Lemma \ref{lem:info}]
Fix $\langle p_t\rangle \in \mathcal{P}^{\ast}$ and any $t \in\mathbb{R}_{+}$. For each $q\neq p$, let $\gamma(q,p)$ denote the unconditional arrival rate of posterior belief $q$ given $p_{t-}=p$. For these values to be well-defined, it is necessary and sufficient that the associated conditional likelihoods $(\lambda^{L}(q,p),\lambda^{R}(q,p))$ satisfy
\begin{equation*}
    q=\frac{p\lambda^{R}(q,p)}{(1-p)\lambda^{L}(q,p)+p\lambda^{R}(q,p)}\text{ and }\gamma (q,p)=p\lambda^R(q,p)+(1- p)\lambda^L(q,p).
\end{equation*}
Solving this system of equations, we obtain
\begin{equation*}
    \lambda^{R}(q,p)=\gamma(q,p)\frac{q}{p}\text{ and }\lambda^{L}(q,p)=\gamma(q,p)\frac{1-q}{1-p}.
\end{equation*}
Then, our information constraint can be written as
\begin{equation}\label{eq:info_constraint_lem1}
\sum_{q\neq p}|\lambda^{R}(q,p)-\lambda^{L}(q,p)|=\sum_{q\neq p}\gamma(q,p)\Big | \frac{q}{p}-\frac{1-q}{1-p}\Big |=\sum_{q\neq p}\gamma(q,p)\frac{|q-p|}{p(1-p)}\leq \lambda.
\end{equation}
For each $q$, define $\alpha(q):=\gamma(q,p)\frac{|q-p|}{\lambda p(1-p)}$. Then, the above constraint can be equivalently written as $\sum_{q\neq p}\alpha(q)\leq 1$, and the arrival rate of posterior $q$ given $p$ is given by $\gamma(q,p)=\alpha(q) \frac{ \lambda p(1-p)}{|q-p|}$.

Let $\dot{p}$ denote the instantaneous change of $\langle p_t\rangle$ conditional on no jump. Since $\langle p_t\rangle$ is a martingale, $\sum_{q\neq p}\gamma(q,p)(q-p)+\dot{p}=0$, so $\dot{p}$ satisfies
\begin{eqnarray*}
    \dot{p}&=&-\sum_{q\neq p}\gamma(q,p)(q-p)=-\sum_{q>p}\gamma(q,p)(q-p)-\sum_{q<p}\gamma(q,p)(q-p)\\
    &=&-\sum_{q>p}\gamma(q,p)|q-p|+\sum_{q<p}\gamma(q,p)|q-p|\\
    &=&-\sum_{q>p}\alpha(q)\lambda p(1-p)+\sum_{q<p}\alpha(q)\lambda p(1-p)\\
    &=&-\left(\sum_{q>p}\alpha(q)-\sum_{q<p}\alpha(q)\right)\lambda p(1-p).
\end{eqnarray*}
\end{proof}

Next, we provide an additional characterization of $\mathcal{P}^{\ast}$ based on a measure of information. Let $\langle p_{t} \rangle$ denote a regular martingale process in $\mathcal{P}$. For each $t$, $p_{t-}:=\lim_{t^{\prime}\uparrow t}p_{t^{\prime}}$, and $q\neq p_{t-}$, let $\gamma(q,p_{t-})$ denote the rate at which the belief jumps from $p_{t-}$ to $q$; formally,
\begin{equation*}
    \gamma(q,p_{t-}):=\lim_{dt\to 0}\frac{\mathbb{P}[p_{t}=q|p_{t-dt}]}{dt}.
\end{equation*}
We measure the amount of flow information of $\langle p_{t} \rangle$ at each point in history by
\begin{equation*}
    \mathcal{I}(p_{t-}):=\sum_{q\neq p_{t-}}\gamma(q,p_{t-})|p_{t}-p_{t-}|.
\end{equation*}
In other words, our information measure $\mathcal{I}(p_{t-})$ quantifies the total absolute change of the belief process at each point in time.

By \eqref{eq:info_constraint_lem1} in the proof of Lemma \ref{lem:info}, our information constraint can be written as
\begin{equation*}
    \sum_{q\neq p}|\lambda^{R}(q,p)-\lambda^{L}(q,p)|=\sum_{q\neq p}\gamma(q,p)\frac{|q-p|}{p(1-p)}\leq \lambda\Leftrightarrow \mathcal{I}(p)\leq \lambda p(1-p).
\end{equation*}
This implies that the set $\mathcal{P}^{\ast}$ of feasible experiments can be equivalently defined as
\begin{equation*}
    \mathcal{P}^{\ast}:=\{\langle p_{t} \rangle\in\mathcal{P}:\mathcal{I}(p_{t-})\leq \lambda p_{t-}(1-p_{t-})\text{ for all $t$ and $p_{t-}$}\}.
\end{equation*}
In other words, we consider belief processes whose aggregate change at each point in history is bounded by $\lambda p_{t-}(1-p_{t-})$; note that $p(1-p)$ is equal to the variance of the the Bernoulli random variable $p$. The bound's dependence on $p_{t-}$ is natural given that $p_{t-}\in[0,1]$ and $p_{t-}=0,1$ represents perfect information from which no belief change should be feasible; more generally, it captures an intuitive idea that the sender can move the receiver's belief more, the more uncertain the state is.

\section{Admissible Strategies\label{appendix:admissible}}

This appendix completes the definition of our continuous-time game by defining admissible strategies for the sender. We note that this appendix is similar to Appendix B.1 of \citet{klein/rady:11}: the two models have the same underlying technical issues and natural resolutions to them.\footnote{The difference is that the technical problems arise in their model because the evolution of beliefs is jointly controlled by two players, while in our model, it is because the sender can choose from a large set of Poisson experiments.}

Recall that the sender's strategy is a measurable function $\sigma^{S}$ that assigns a flow experiment $\sigma^{S}(p)=\left(\alpha(q;p)\right)_{q\in[0,1]}$ to each belief $p\in[0,1]$. As noted, the strategy induces a belief process satisfying:
\begin{equation}\label{eq:pt_integral}
\dot{p}_{t}=-\beta(p_{t})\lambda p_{t}(1-p_{t}),
\end{equation}
where
\begin{equation*}
    \beta(p):=\sum_{q>p}\alpha(q;p)-\sum_{q<p}\alpha(q;p).
\end{equation*}
Note that $p_{t}$ moves leftward if $\beta(p_{t})>0$ and rightward if $\beta(p_{t})<0$.

\begin{defn}\label{def:sender_admissible}
A measurable function $\sigma^{S}$ is an admissible strategy for the sender if for all $p_{0}\in[0,1]$, there exists a solution to \eqref{eq:pt_integral}.
\end{defn}

To see the role of  Definition \ref{def:sender_admissible}, first observe that for $\sigma^{S}$ with a relatively simple structure, we can find an explicit solution to \eqref{eq:pt_integral}. For example, if the sender plays only the $R$-drifting experiment, then $\beta(p)=-1$ for all $p$, in which case $p_{t}=\frac{p_{0}e^{\lambda t}}{p_{0}e^{\lambda t}+1-p_{0}}$. If the sender plays only the stationary experiment, then $\beta(p)=0$ for all $p$, in which case $p_{t}=p_{0}$. Of course, the differential equation \eqref{eq:pt_integral} cannot be solved explicitly in general. One may utilize a sufficient condition on $\beta(\cdot)$: for example, it suffices that $\beta(\cdot)$ is continuous or satisfies Carath\'{e}odory conditions \citep[see][]{goodman:70}. For our purpose, however, imposing such a sufficient condition is unnecessarily restrictive. Therefore, we require only that there is a solution to \eqref{eq:pt_integral}.  More precisely, we shall require a couple of conditions, one of which is necessary and the other is of no material consequence. This approach is valid, since the equilibrium with these weaker conditions will ensure that  \eqref{eq:pt_integral} is well defined for all $p\in [0,1]$.

To explain the necessary condition that is relevant for our context, consider, for example, a strategy such that the sender plays the $R$-drifting experiment targeting $0$ (so $\beta(p)=-1$) whenever $p\leq p_{0}$ and the $L$-drifting experiment targeting $1$ (so $\beta(p)=1$) whenever $p>p_{0}$. As depicted in the left panel of Figure \ref{fig:admissible}, the belief moves toward $p$ whether it is below or above $p$, so  $\beta(p_0)=-1$ results in  \eqref{eq:pt_integral} being ill-defined at $p_0$.  In fact, $p_{t}$ should stay constant if starting from $p_{0}$. Hence, admissibility requires  $\sigma^{S}(p_{0})$ to satisfy $\beta(p_{0})=0$.

\begin{figure}
\raggedright{}\centering{}\beginpgfgraphicnamed{admissible}
\begin{tikzpicture}[scale=1]

\draw[line width=0.5pt] (0,0)--(6,0);
\draw[fill=black] (0,0) circle (2pt);
\draw[fill=black] (6,0) circle (2pt);
\fill (0,0) node[left] {\footnotesize{$0$}};
\fill (6,0) node[below] {\footnotesize{$1$}};
\fill (6,0) node[right] {\footnotesize{$p$}};

\draw[fill=white] (3,0) circle (2pt);

\draw[line width=0.5pt,dotted] (3,-1.2)--(0,-1.2)--(0,1.2);
\fill (0,-1.2) node[left] {\footnotesize{$\beta(p)$}};

\draw[line width=0.5pt] (6,1.2)--(3,1.2);
\draw[line width=0.5pt,dotted] (3,-1.2)--(3,1.2);
\draw[line width=0.5pt] (3,-1.2)--(0,-1.2);
\fill (3.2,0) node[below] {\footnotesize{$p_{0}$}};

\draw[fill=white] (3,1.2) circle (2pt);
\draw[fill=white] (3,-1.2) circle (2pt);

\draw[>={Stealth[length=8pt,width=4pt]},->>,line width=0.7pt] (0,0)--(2.9,0);

\draw[>={Stealth[length=8pt,width=4pt]},->>,line width=0.7pt] (6,0)--(3.1,0);

\draw[xshift=7.5cm,line width=0.5pt] (0,0)--(6,0);
\draw[xshift=7.5cm,fill=black] (0,0) circle (2pt);
\draw[xshift=7.5cm,fill=black] (6,0) circle (2pt);
\fill[xshift=7.5cm] (0,0) node[left] {\footnotesize{$0$}};
\fill[xshift=7.5cm] (6,0) node[below] {\footnotesize{$1$}};
\fill[xshift=7.5cm] (6,0) node[right] {\footnotesize{$p$}};

\draw[xshift=7.5cm,line width=0.5pt,dotted] (3,-1.2)--(0,-1.2)--(0,1.2);
\fill[xshift=7.5cm] (0,1.2) node[left] {\footnotesize{$\beta(p)$}};

\draw[xshift=7.5cm,line width=0.5pt] (0,1.2)--(3,1.2);
\draw[xshift=7.5cm,line width=0.5pt,dotted] (3,-1.2)--(3,1.2);
\draw[xshift=7.5cm,line width=0.5pt] (3,-1.2)--(6,-1.2);

\draw[xshift=7.5cm,fill=white] (3,1.2) circle (2pt);
\draw[xshift=7.5cm,fill=white] (3,-1.2) circle (2pt);

\draw[xshift=7.5cm,>={Stealth[length=8pt,width=4pt]},->>,line width=0.7pt] (3,0.1)--(0,0.1);

\draw[xshift=7.5cm,fill=white] (3,0.1) circle (2pt);
\fill[xshift=7.5cm] (3.2,-0.1) node[below] {\footnotesize{$p_{0}$}};

\draw[xshift=7.5cm,>={Stealth[length=8pt,width=4pt]},->>,line width=0.7pt] (3,-0.1)--(6,-0.1);

\draw[xshift=7.5cm,fill=white] (3,-0.1) circle (2pt);

\end{tikzpicture}
\endpgfgraphicnamed\caption{\label{fig:admissible}The left panel depicts the case where the integral equation \eqref{eq:pt_integral} does not have a solution, while the right panel shows the case where \eqref{eq:pt_integral} has multiple solutions.}
\end{figure}

We next consider a condition that is not necessary for  \eqref{eq:pt_integral} to be well defined, but is sensible as a selection rule when \eqref{eq:pt_integral} admits  multiple solutions. Consider, for example, a strategy such that the sender plays the $L$-drifting experiment targeting $1$ (so $\beta(p)=1$) whenever $p\leq p_{0}$ and the $R$-drifting experiment targeting $0$ (so $\beta(p)=-1$) whenever $p>p_{0}$ (see the right panel of Figure \ref{fig:admissible}). Since the former case includes $p_{0}$, it is natural that starting from $p_{0}$, the belief moves leftward according to
\begin{equation*}
    p_{t}=\frac{pe^{-\lambda t}}{pe^{-\lambda t}+(1-p)}.
\end{equation*}
However, since $p_{t}=p_{0}$ only when $t=0$, the following is also a solution to \eqref{eq:pt_integral}:
\begin{equation*}
    p_{t}=\frac{pe^{\lambda t}}{pe^{\lambda t}+(1-p)}.
\end{equation*}
Whenever this multiplicity arises, we select the most natural one which would be obtained from the discrete-time approximation. This selection, however, is inconsequential for our equilibrium characterization, because at a point where this selection issue arises (such as $\underline{\pi}_{LR}$ or $\overline{\pi}_{LR}$ in Propositions \ref{prop:C2_hold} and \ref{prop:C2_fail}), we can arbitrarily specify the sender's strategy; the selection forces us to adopt a particular belief path, but does not restrict the sender's strategy in any way.

\section{Proofs of Propositions \ref{prop:C2_hold} and \ref{prop:C2_fail}\label{sec:Proofs-of-Propositions}}

The proofs are presented in several sections. Throughout, we take
$p^{\ast}\in(\hat{p},1)$ as given and construct the corresponding
equilibria. Section \ref{sec:Eq-payoffs} constructs the value functions
that correspond to the equilibrium strategies in Propositions \ref{prop:C2_hold}
and \ref{prop:C2_fail}. Sections \ref{subsec:sender_verify} and
\ref{sec:Verifying-the-Receiver's} respectively verify the sender's
and the receiver's incentives. Uniqueness of SMPE is proven in Online
Appendix \ref{sec:Uniqueness-of-SMPE}. A brief sketch is provided
in Section \ref{subsec:uniqueness}.

\subsection{Constructing Equilibrium Value Functions\label{sec:Eq-payoffs}}

We first compute the players' value functions under alternative persuasion
strategies; they will be used to compute the players' equilibrium
payoffs. In what follows, we take it for granted that the receiver
takes an action immediately if the belief reaches either $0$ or $p^{\ast}$.
We also assume that the receiver waits while the sender plays each
persuasion strategy in this subsection.

\paragraph*{ODEs for $R$-drifting and $L$-drifting.}

For any $p\in(0,p^{*})$, let $N_{\varepsilon}(p)$ denote a small
open neighborhood of $p$. Suppose that for any belief $p$ in $N_{\varepsilon}(p)$,
the sender plays the $R$-drifting experiment with jump target $0$.
Then, the sender's value function $V_{+}(p)$ and the receiver's value
function $U_{+}(p)$ satisfy the following ODEs:\footnote{The ODEs can be obtained heuristically in the same way as the Hamilton-Jacobi-Bellman
equation. The subscripts, ``$+$'' and ``$-$'', represent the
direction of belief drifting in the absence of Poisson jumps.}
\begin{equation}
c=\lambda p(1-p)\left(\frac{-V_{+}(p)}{p}+V_{+}^{\prime}(p)\right)\quad\text{ and }\quad c=\lambda p(1-p)\left(\frac{u_{\ell}^{L}-U_{+}(p)}{p}+U_{+}^{\prime}(p)\right).\label{eq:ODEs_right_drifting}
\end{equation}
Similarly, suppose for any belief $p$ in $N_{\varepsilon}(p)$ the
sender plays the $L$-drifting experiment with jump target $p^{\ast}$.
Then, the players' value functions, $V_{-}(p)$ and $U_{-}(p)$, satisfy
\begin{equation}
c=\lambda p(1-p)\left(\frac{v-V_{-}(p)}{p^{\ast}-p}-V_{-}^{\prime}(p)\right)\text{ and }\;\,c=\lambda p(1-p)\left(\frac{U_{r}(p^{\ast})-U_{-}(p)}{p^{\ast}-p}-U_{-}^{\prime}(p)\right).\label{eq:ODEs_left_drifting}
\end{equation}

\paragraph*{$R$-drifting strategy:}

Suppose the sender plays $R$-drifting experiments until the belief
reaches $p^{\ast}$. In this case, the players' payoffs are obtained
as the solutions to \eqref{eq:ODEs_right_drifting} with boundary
conditions $V_{+}(p^{\ast})=v$ and $U_{+}(p^{\ast})=U_{r}(p^{\ast})$,
respectively. We obtain
\[
V_{R}(p)=\frac{p}{p^{*}}v-C_{+}(p;p^{*})\quad\text{ and }\quad U_{R}(p)=\frac{p^{*}-p}{p^{*}}u_{\ell}^{L}+\frac{p}{p^{*}}U_{r}(p^{*})-C_{+}(p;p^{\ast}),
\]
where $C_{+}(p;q):=\left(p\log\left(\frac{q}{1-q}\frac{1-p}{p}\right)+1-\frac{p}{q}\right)\frac{c}{\lambda}$
represents the expected cost of using $R$-drifting experiments until
the belief moves from $p$ to either $0$ or $q$.

\paragraph*{Stationary strategy:}

Suppose the sender uses the stationary experiment with jump targets
$0$ and $p^{*}$ at $p$. Then, the players' value functions, $V_{S}(p)$
and $U_{S}(p)$, are respectively given by
\begin{equation}
V_{S}(p)=\frac{p}{p^{\ast}}v-C_{S}(p)\quad\text{ and }\quad U_{S}(p)=\frac{p^{*}-p}{p^{*}}u_{\ell}^{L}+\frac{p}{p^{*}}U_{r}(p^{*})-C_{S}(p),\label{eq:stationary_sol}
\end{equation}
where $C_{S}(p):=\frac{2c(p^{\ast}-p)}{\lambda p^{\ast}(1-p)}$ represents
the expected cost of playing the stationary strategy.\footnote{Under the stationary strategy, the total arrival rate of Poisson jumps
is equal to $\lambda_{S}(p)=\frac{\lambda}{2}(1-p)+\frac{\lambda}{2}\frac{p(1-p)}{p^{*}-p}=\frac{\lambda}{2}\frac{p^{\ast}(1-p)}{p^{\ast}-p}$.
$C_{S}(p)$ is equal to $c$ times the expected arrival time $1/\lambda_{S}(p)$.}

\paragraph*{$RS$ strategy ($R$-drifting followed by stationary):}

Suppose the sender plays the $R$-drifting strategy until $q(>p)$
and then switches to the stationary strategy. Then, the players' value
functions solve \eqref{eq:ODEs_right_drifting} with boundary conditions
$V_{+}(q)=V_{S}(q)$ and $U_{+}(q)=U_{S}(q)$, yielding
\[
V_{RS}(p;q)=\frac{p}{p^{*}}v-C_{+}(p;q)-\frac{p}{q}C_{S}(q)\text{ and }U_{RS}(p;q)=\frac{p^{*}-p}{p^{*}}u_{\ell}^{L}+\frac{p}{p^{*}}U_{r}(p^{*})-C_{+}(p;q)-\frac{p}{q}C_{S}(q).
\]
Note that $p/q$ is the probability that the belief moves from $p$
to $q$ (whereupon the sender switches to the stationary strategy).

\paragraph*{$LS$ strategy ($L$-drifting followed by stationary):}

Suppose the sender plays the $L$-drifting strategy until $q(<p)$
and then switches to the stationary strategy. Then, the players' value
functions solve \eqref{eq:ODEs_left_drifting} with boundary conditions
$V_{-}(q)=V_{S}(q)$ and $U_{-}(q)=U_{S}(q)$, resulting in
\begin{align*}
V_{LS}(p;q) & =\frac{p}{p^{\ast}}v-C_{-}(p;q)-\frac{p^{*}-p}{p^{*}-q}C_{S}(q)\text{ and}\\
U_{LS}(p;q) & =\frac{p^{*}-p}{p^{*}}u_{\ell}^{L}+\frac{p}{p^{*}}U_{r}(p^{*})-C_{-}(p;q)-\frac{p^{*}-p}{p^{*}-q}C_{S}(q),
\end{align*}
where $C_{-}(p;q):=-\frac{p^{\ast}-p}{p^{\ast}(1-p^{\ast})}\left(p^{\ast}\log\frac{1-q}{1-p}+(1-p^{\ast})\log\frac{q}{p}-\log\frac{p^{\ast}-q}{p^{\ast}-p}\right)\frac{c}{\lambda}$
denotes the expected cost of playing $L$-drifting experiments until
the belief drifts down from $p$ to $q(<p)$.

\paragraph*{Crossing lemma.}

The following lemma provides potential crossing patterns among the
value functions and plays a crucial role in the subsequent analysis.
\begin{lem}[Crossing Lemma]
\label{lem:Crossing} Let $V_{+}(p)$ and $V_{-}(p)$ be solutions
to \eqref{eq:ODEs_right_drifting} and \eqref{eq:ODEs_left_drifting},
respectively.
\begin{enumerate}
\item Let $p^{*}<8/9$. For all $p<p^{*}$, if $V_{+}(p)=V_{S}(p)$, then
$V_{+}^{\prime}(p)<V_{S}^{\prime}(p)$. Similarly, if $V_{-}(p)=V_{S}(p)$
then $V_{-}^{\prime}(p)<V_{S}^{\prime}(p)$.
\item Let $p^{*}\ge8/9$, and define $\xi_{1}:=\frac{3p^{\ast}}{4}-\sqrt{\left(\frac{3p^{\ast}}{4}\right)^{2}-\frac{p^{\ast}}{2}}$,
and $\xi_{2}:=\frac{3p^{\ast}}{4}+\sqrt{\left(\frac{3p^{\ast}}{4}\right)^{2}-\frac{p^{\ast}}{2}}$.
\begin{enumerate}
\item For all $p<p^{*}$, if $V_{+}(p)=V_{S}(p)$, then $V_{+}^{\prime}(p)=V_{S}^{\prime}(p)$
if and only if $p\in\{\xi_{1},\xi_{2}\}$, and $V_{+}^{\prime}(p)>V_{S}^{\prime}(p)$
if and only if $p\in(\xi_{1},\xi_{2})$;
\item For all $p<p^{*}$, if $V_{-}(p)=V_{S}(p)$, then $V_{-}^{\prime}(p)=V_{S}^{\prime}(p)$
if and only if $p\in\{\xi_{1},\xi_{2}\}$, and $V_{-}^{\prime}(p)>V_{S}^{\prime}(p)$
if and only if $p\in(\xi_{1},\xi_{2})$.
\end{enumerate}
\item For all $p<p^{*}$, if $V_{+}(p)=V_{-}(p)$, then $\text{sign}\left(V_{+}^{\prime}(p)-V_{-}^{\prime}(p)\right)=\text{sign}\left(V_{-}(p)-V_{S}(p)\right)$.
\end{enumerate}
All parts also hold for the receiver's value functions $U_{+}(\cdot)$,
$U_{-}(\cdot)$, and $U_{S}(\cdot)$.
\end{lem}
\begin{proof}
We focus on the sender's value functions, as the same proofs apply
to the receiver. From \eqref{eq:ODEs_right_drifting}, \eqref{eq:ODEs_left_drifting},
and \eqref{eq:stationary_sol}, we can obtain expressions for $V_{+}^{\prime}(p)$,
$V_{-}^{\prime}(p)$, and $V_{S}^{\prime}(p)$. Combining these with
$V_{+}(p)=V_{S}(p)$ and $V_{-}(p)=V_{S}(p)$, we obtain
\[
V_{+}^{\prime}(p)-V_{S}^{\prime}(p)=V_{-}^{\prime}(p)-V_{S}^{\prime}(p)=-\frac{c(2p^{2}-3p^{\ast}p+p^{\ast})}{\lambda p^{\ast}p(1-p)^{2}}\gtreqqless0\Leftrightarrow-2p^{2}+3p^{\ast}p-p^{\ast}\gtreqqless0.
\]
For $p^{*}<8/9,$ the quadratic expression in the last inequality
is always negative, which proves part $(a)$. For $p^{*}\ge8/9$,
the quadratic expression has two real roots, $\xi_{1}$ and $\xi_{2}$,
and is positive if and only if $p\in(\xi_{1},\xi_{2})$. This proves
(b).

Similarly, using $V_{+}(p)=V_{-}(p)$, we have
\[
V_{+}^{\prime}(p)-V_{-}^{\prime}(p)=\frac{p^{\ast}}{p(p^{\ast}-p)}\left(V_{-}(p)-V_{S}(p)\right),
\]
which leads to (c).
\end{proof}

\paragraph*{Construction of $\xi$.}

While $\xi$ is part of the equilibrium only for $p^{\ast}>\eta$,
we define it generally. For $p^{*}\ge8/9$ we set $\xi:=\xi_{1}$
and for $p^{*}<8/9$ we set $\xi:=p^{*}$. We define it in this way
to ensure that $V_{RS}(p;\xi)$ meets $V_{S}(p)$ from above at $p=\xi$
(as $p$ rises toward $\xi$). In particular, together with the Crossing
Lemma \ref{lem:Crossing}.(b), this means that for any $p<\xi$, $V_{RS}(p;\xi)$
is above $V_{S}(p)$, and for $p^{*}\ge8/9$, these two functions
have the same slope at $p=\xi$. This will play a crucial role later.

\paragraph*{Construction of $\eta$.\label{par:Construction-of-eta}}

The parameter $\eta$ is the value of $p^{\ast}\ge8/9$ such that
$V_{R}(\xi(p^{*}))=V_{S}(\xi(p^{*}))$.\footnote{\label{fn:eta_well-defined} To show that $\eta$ is well-defined,
we can define a function $g:(8/9,1)\to\mathbb{R}$ by $g(p^{*}):=V_{R}(\xi(p^{*}))-V_{S}(\xi(p^{*}))$
so that $g(\eta)=0$. It can be verified that $g^{\prime}(p^{\ast})>0$
for all $p^{\ast}\in(8/9,1)$.} (We make the dependence of $\xi$ on $p^{*}$ explicit here, and
also note that the functions $V_{R}(\cdot)$ and $V_{S}(\cdot)$ depend
on $p^{*}$ directly.) Solving this equation yields $p^{\ast}=\eta\approx0.943$.
We make the following observations for a later purpose:
\begin{lem}
\label{lem:significance_eta}~
\begin{enumerate}
\item If $p^{\ast}<\eta$ then $V_{R}(p)>V_{S}(p)$ for all $p\in(0,p^{\ast})$.
\item If $p^{\ast}=\eta$ then $V_{R}(p)\geq V_{S}(p)$ for all $p\in(0,p^{\ast})$,
with equality only when $p=\xi$.
\item If $p^{\ast}>\eta$ then $V_{R}(\xi)<V_{S}(\xi)$.
\end{enumerate}
The same results hold for $U_{R}(\cdot)$ and $U_{S}(\cdot)$.
\end{lem}
\begin{proof}
We focus on the sender's value functions, as the same proofs apply
to the receiver. Using the explicit solutions of $V_{R}(p)$ and $V_{S}(p)$,
we can see that $V_{S}(0)<V_{R}(0)$, $V_{S}(p^{\ast})=V_{R}(p^{\ast})$,
and $V_{S}^{\prime}(p^{\ast})>V_{R}^{\prime}(p^{\ast})$. Therefore,
either $V_{S}(p)$ stays weakly below $V_{R}(p)$ for all $p<p^{\ast}$,
or $V_{S}(p)$ crosses $V_{R}(p)$ at least twice (from below and
then from above). By Lemma \ref{lem:Crossing}.(b), the latter occurs
only if $V_{S}(p)$ crosses $V_{R}(p)$ from below at some $p<\xi$,
and then second time from above at some $p^{\prime}\in(\xi,\xi_{2})$,
which is equivalent to $V_{R}(\xi)<V_{S}(\xi)$. The desired result follows since $V_{R}(\xi(p^{\ast}))-V_{S}(\xi(p^{\ast}))$
changes the sign only once at $p^{*}=\eta$ (see Footnote \ref{fn:eta_well-defined}).
\end{proof}

\paragraph*{Pasted strategies.}

Given $\xi$, we combine alternative strategies as follows. For any
$p\leq p^{\ast}$, we define
\[
\widehat{V}(p):=\left\{ \begin{array}{ll}
V_{RS}(p;\xi) & \mbox{ if }p<\xi\\
V_{S}(\xi) & \mbox{ if }p=\xi\\
V_{LS}(p;\xi) & \mbox{ if }p\in[\xi,p^{*}],
\end{array}\right.\text{ and }\widehat{U}(p):=\left\{ \begin{array}{ll}
U_{RS}(p;\xi) & \mbox{ if }p<\xi\\
U_{S}(\xi) & \mbox{ if }p=\xi\\
U_{LS}(p;\xi) & \mbox{ if }p\in[\xi,p^{*}].
\end{array}\right.
\]
We next define $\widetilde{V}(p):=\max\{V_{R}(p),\widehat{V}(p)\}\text{ and }\widetilde{U}(p):=\max\{U_{R}(p),\widehat{U}(p)\}$.
We make several useful observations in the following lemma.

\begin{lem}
\label{lem:pasted_value_properties}~
\begin{enumerate}
\item Both $\widetilde{V}(p)$ and $\widetilde{U}(p)$ are strictly convex
in $p$ over $[0,p^{\ast}]$.
\item If $p^{\ast}\leq\eta$ then $\widetilde{V}(p)=V_{R}(p)$ and $\widetilde{U}(p)=U_{R}(p)$
for all $p\in[0,p^{\ast}]$.
\item If $p^{\ast}>\eta$ then there exists $\overline{\pi}_{LR}\in(\xi,p^{\ast})$
such that $\widetilde{V}(p)=\widehat{V}(p)$ and $\widetilde{U}(p)=\widehat{U}(p)$
for $p\leq\overline{\pi}_{LR}$ and $\widetilde{V}(p)=V_{R}(p)$ and
$\widetilde{U}(p)=U_{R}(p)$ for $p\in[\overline{\pi}_{LR},p^{\ast}]$.
\item $\widetilde{V}(p)\geq V_{S}(p)$ for all $p<p^{\ast}$, and the inequality
is strict for $p\neq\xi$.
\end{enumerate}
\end{lem}
\begin{proof}
The same proof applies to both players, so we focus on the sender's
value functions. Recall that for $p^{*}<8/9$ we have $\xi=p^{*}$
so that $\widehat{V}(p)=V_{RS}(p;p^{*})=V_{R}(p)$, which implies
(b). Since $V_{R}(p)$ is strictly convex, (a) holds as well. In what
follows, we consider $p^{*}\ge8/9$, in which case $\widehat{V}(p)\neq V_{R}(p)$.

(a) Since $\widetilde{V}(p)$ is the upper envelope of two functions
and $V_{R}(p)$ is strictly convex over $[0,p^{\ast}]$, it suffices
to prove that $\widehat{V}(p)$ is also strictly convex over $[0,p^{\ast}]$.
Both $V_{RS}(p;\xi)$ and $V_{LS}(p;\xi)$ are strictly convex over
their respective supports, and $\widehat{V}(p)$ is continuously differentiable
at the pasting point $\xi$. The latter holds because $V_{RS}(\xi;\xi)=V_{LS}(\xi;\xi)=V_{S}(\xi)$
implies $V'_{RS}(\xi;\xi)=V'_{LS}(\xi;\xi)$ by Lemma \ref{lem:Crossing}.(c).

(b) If $p^{*}<\eta$, $V_{RS}(\xi;\xi)=V_{S}(\xi)<V_{R}(\xi)$ by
Lemma \ref{lem:significance_eta}.(a). Together with the fact that
both $V_{RS}(p;\xi)$ and $V_{R}(p)$ satsify the ODE \eqref{eq:ODEs_right_drifting},
this implies that $\widehat{V}(p)=V_{RS}(p;\xi)<V_{R}(p)$ for all
$p\leq\xi$.\footnote{It is easy to see that \eqref{eq:ODEs_right_drifting} satisfies the
Lipschitz condition for uniqueness on $(0,p^{*})$.} For $p\in(\xi,p^{\ast}]$, observe that $V_{LS}(\xi;\xi)=V_{S}(\xi)<V_{R}(\xi)$
(Lemma \ref{lem:significance_eta}.(a)), $V_{LS}(p^{\ast};\xi)=V_{R}(p^{\ast})$,
and $V_{LS}^{\prime}(p^{\ast};\xi)>V_{R}^{\prime}(p^{\ast})$. Therefore,
either $\widehat{V}(p)=V_{LS}(p;\xi)<V_{R}(p)$ for all $p\in(\xi,p^{\ast})$,
or $V_{LS}(\cdot;\xi)$ crosses $V_{R}(\cdot)$ from below at least
once at some $p\in(\xi,p^{\ast})$. In the latter case, we must have
$V_{R}^{\prime}(p)=V_{+}^{\prime}(p)<V_{-}^{\prime}(p)=V_{LS}^{\prime}(p;\xi)$.
Then, by Lemma \ref{lem:Crossing}.(c), $V_{R}(p)=V_{LS}(p;\xi)=V_{-}(p)<V_{S}(p)$,
contradicting Lemma \ref{lem:significance_eta}.(a).

The result for $p^{\ast}=\eta$ follows from a continuity argument:
both $\widehat{V}(p)$ and $V_{R}(\cdot)$ change continuously in
$p^{\ast}$. Since $\widehat{V}(p)<V_{R}(p)$ for all $p<p^{\ast}$
whenever $p^{\ast}<\eta$, it must be that $\widehat{V}(p)\leq V_{R}(p)$
for all $p<p^{\ast}$ when $p^{\ast}=\eta$. This concludes the proof
for part (b).

For parts (c) and (d), the following claim is useful:
\begin{claim}
\label{claim:Vhat}Suppose $p^{*}\ge8/9$.
\begin{enumerate}
\item[(i)] $\widehat{V}(p)\ge V_{S}(p)$ for all $p\in(0,\xi_{2}]$, with strict
inequality for $p\neq\xi$.
\item[(ii)] $V_{R}(p)>V_{S}(p)$ for all $p\in[\xi_{2},p^{*})$.
\end{enumerate}
\end{claim}
\begin{proof}
(i) Consider first $p<\xi(=\xi_{1})$. We have to show that $\widehat{V}(p)=V_{RS}(p;\xi)>V_{S}(p)$.
To see this, pick $q<\xi$. Then by Lemma \ref{lem:Crossing}.(b).(i),
$V_{RS}(p;q)$ stays above $V_{S}(p)$ for $p<q$ and $V_{RS}(p;\xi)>V_{RS}(p;q)$
for all $q<\xi$. The same logic applies to $V_{LS}(p;\xi)$ for $p\in(\xi,\xi_{2}]$.
For part (ii), we check that $V_{R}(p^{*})=V_{S}(p^{*})$ and $V'_{R}(p^{*})<V'_{S}(p^{*})$.
Lemma \ref{lem:Crossing}.(b).(i) then implies that $V_{R}(p)$ and
$V_{S}(p)$ cannot intersect at $p\ge\xi_{2}$.
\end{proof}
For part (c), we first show that $V_{RS}(p;\xi)>V_{R}(p)$ for $p\le\xi$.
If $p^{\ast}>\eta$ then $V_{RS}(\xi;\xi)=V_{S}(\xi)>V_{R}(\xi)$
 (Lemma \ref{lem:significance_eta}.(c)), which immediately implies
that $\widehat{V}(p)=V_{RS}(p;\xi)>V_{R}(p)$ for all $p\leq\xi$.
Next, for $p\in(\xi,p^{\ast}]$, observe that $V_{LS}(\xi;\xi)=V_{S}(\xi)>V_{R}(\xi)$;
and $V_{LS}(p;\xi)<V_{R}(p)$ for $p=p^{*}-\varepsilon$, since $V_{LS}(p^{\ast};\xi)=V_{R}(p^{\ast})$,
and $V_{LS}^{\prime}(p^{\ast};\xi)>V_{R}^{\prime}(p^{\ast})$. This
means that $V_{LS}(\cdot;\xi)$ crosses $V_{R}(\cdot)$ at least once
in $(\xi,p^{\ast})$. To show that there is a unique crossing point
$\overline{\pi}_{LR}$, note that Claim \ref{claim:Vhat} implies
that at any crossing point $p\in(\xi,p^{*})$, $V_{LS}(p;\xi)=\widehat{V}(p)=V_{R}(p)>V_{S}(p)$,
and hence by Lemma \ref{lem:Crossing}.(c), $V_{LS}(p;\xi)$ can cross
$V_{R}(p)$ only from above. Therefore, there is a unique crossing
point.

(d) If $p^{\ast}\leq\eta$, then the result is immediate from Lemmas
\ref{lem:significance_eta}.(a) and \ref{lem:pasted_value_properties}.(b).
If $p^{\ast}>\eta$ the result is immediate from Lemma \ref{lem:pasted_value_properties}.(c)
and Claim \ref{claim:Vhat}.
\end{proof}

\subsubsection{Equilibrium payoffs and construction of $p_{\ast}$ in Proposition
\ref{prop:C2_hold}}

\label{subsubsec:P2_equil}

When (\ref{cond:low_receiver_rent}) holds, we define $p_{\ast}$
as the belief $\phi_{\ell R}$ at which the receiver is indifferent
between waiting and stopping with action $\ell$; that is, we set
$p_{*}:=\phi_{\ell R}$ where $\phi_{\ell R}$ is defined by\footnote{\label{fn:p_ast_well_def_C2hold}To see that $\phi_{\ell R}$ is well
defined, observe that, whether $p^{\ast}\leq\eta$ or $p^{\ast}>\eta$,
$\lim_{p\to0}\tilde{U}(p)=u_{\ell}^{L}-\frac{c}{\lambda}<u_{\ell}^{L}=U_{\ell}(0)$,
while $\widetilde{U}(p^{\ast})=U_{r}(p^{*})>U_{\ell}(p^{*})$ (because
$p^{\ast}>\hat{p}$). In addition, $\tilde{U}(p)$ is strictly convex
over $[0,p^{\ast}]$ (Lemma \ref{lem:pasted_value_properties}.(a)),
while $U_{\ell}(p)$ is linear. Therefore, $\widetilde{U}(p)$ crosses
$U_{\ell}(p)$ from below only once.}
\begin{equation}
U_{\ell}(\phi_{\ell R})=\widetilde{U}(\phi_{\ell R}).\label{eq:receiver_indifference}
\end{equation}
We focus on the case in which $c$ is sufficiently small. In the limit
as $c\rightarrow0$, $\widetilde{U}(p)=\frac{p^{*}-p}{p^{*}}u_{\ell}^{L}+\frac{p}{p^{*}}U_{r}(p^{*})>U_{r}(p)$
for all $p$. Therefore, there exists $c_{1}>0$ such that $p_{*}=\phi_{\ell R}<\hat{p}$
for all $c\le c_{1}$. We assume that $c\le c_{1}$ in the sequel.
The following Lemma shows that the sender's payoff is positive at
$p_{*}$ if Condition \eqref{cond:low_receiver_rent} holds.
\begin{lem}
\label{lem:Vtilde(phi_ellL)} $\widetilde{V}(\phi_{\ell R})>0$ if
and only if Condition \eqref{cond:low_receiver_rent} holds.
\end{lem}
\begin{proof}
By \eqref{eq:receiver_indifference}, we have
\[
\widetilde{V}(\phi_{\ell R})=\frac{\phi_{\ell R}}{p^{\ast}}v+\widetilde{U}(\phi_{\ell R})-\left(\frac{p^{\ast}-\phi_{\ell R}}{p^{\ast}}u_{\ell}^{L}+\frac{\phi_{\ell R}}{p^{\ast}}U_{r}(p^{\ast})\right)\overset{\eqref{eq:receiver_indifference}}{=}\frac{\phi_{\ell R}}{p^{\ast}}\left(v-(U_{r}(p^{\ast})-U_{\ell}(p^{\ast})\right),
\]
where the first equality holds because both players incur the same
costs, so that $\widetilde{V}(p)-\frac{p}{p^{\ast}}v=\widetilde{U}(p)-\left(\frac{p^{\ast}-p}{p^{\ast}}u_{\ell}^{L}+\frac{p}{p^{\ast}}U_{r}(p^{\ast})\right)$
whenever $p\in(0,p^{\ast}]$. The last expression is positive if and
only if \eqref{cond:low_receiver_rent} holds.
\end{proof}
We set the players' value functions as follows:
\[
V(p):=\left\{ \begin{array}{ll}
0 & \mbox{ if }p\in[0,p_{*})\\
\widetilde{V}(p) & \mbox{ if }p\in[p_{*},p^{*})\\
v & \mbox{ if }p\ge p^{*},
\end{array}\right.\quad\text{ and }\quad U(p):=\left\{ \begin{array}{ll}
U_{\ell}(p) & \mbox{ if }p\in[0,p_{*})\\
\widetilde{U}(p) & \mbox{ if }p\in[p_{*},p^{*})\\
U_{r}(p) & \mbox{ if }p\ge p^{*}.
\end{array}\right.
\]

\begin{lem}
\label{lem:prop_V_C2_hold}When \eqref{cond:low_receiver_rent} holds,
$V(p)$ is nonnegative and nondecreasing for all $p\in[0,1]$.
\end{lem}
\begin{proof}
Since $\widetilde{V}(\cdot)$ is convex on $[0,p^{*}]$, $\widetilde{V}(0)=-c/\lambda$,
and $\widetilde{V}(p_{*})\ge0$ by Lemma \ref{lem:Vtilde(phi_ellL)},
$\widetilde{V}(\cdot)$ is increasing on $[p_{*},p^{*}]$. Hence $V(\cdot)$
is nondecreasing on $[0,1]$, and nonnegative since $V(0)=0$.
\end{proof}

\subsubsection{Equilibrium payoffs and construction of $p_{\ast}$ in Proposition
\ref{prop:C2_fail}}

\label{subsubsec:P3_equil}

When \eqref{cond:low_receiver_rent} fails, the same construction
as above does not work; for example, $\widetilde{V}(p_{\ast})<0$
by Lemma \ref{lem:Vtilde(phi_ellL)}. The right construction requires
us to consider another $L$-drifting strategy.

\paragraph*{$L0$ strategy ($L$-drifting followed by passing):}

Suppose the sender continues to play the $L$-drifting experiment
until the belief reaches $q(<p)$ and then she stops experimenting
altogether (passes). The resulting value functions are the solutions
to \eqref{eq:ODEs_left_drifting} with boundary conditions $V_{-}(q)=0$
and $U_{-}(q)=U_{\ell}(q)$, which yields
\[
V_{L0}(p;q):=\frac{p-q}{p^{*}-q}v-C_{-}(p;q)\text{ and }U_{L0}(p;q):=\frac{p^{*}-p}{p^{*}-q}U_{\ell}(q)+\frac{p-q}{p^{*}-q}U_{r}(p^{*})-C_{-}(p;q).
\]
Note that this strategy leads to $q$ with probability $\frac{p^{*}-p}{p^{*}-q}$
and $p^{\ast}$ with probability $\frac{p-q}{p^{*}-q}$.

\paragraph*{Construction of $p_{\ast}$.}

Let $\pi_{\ell L}$ denote the lowest value of $q\in(0,\hat{p})$
such that
\begin{equation}
V_{L0}'(q;q)\geq0\Leftrightarrow\frac{\lambda q(1-q)}{p^{\ast}-q}v\geq c\Leftrightarrow q\geq\pi_{\ell L}:=\frac{1}{2}+\frac{c}{2\lambda v}-\sqrt{\left(\frac{1}{2}+\frac{c}{2\lambda v}\right)^{2}-\frac{cp^{\ast}}{\lambda v}}.\label{eq:def_pi_ellL}
\end{equation}
In words, $\pi_{\ell L}$ is the lowest belief at which the sender
is willing to play the $L0$ strategy even for an instance. When \eqref{cond:low_receiver_rent}
fails, we set $p_{\ast}:=\pi_{\ell L}$. Clearly, $\lim_{c\to0}p_{\ast}=0$.
We set $c_{2}>0$ such that $p_{\ast}=\pi_{\ell L}<\hat{p}$ for all
$c\leq c_{2}$ and assume $c\leq c_{2}$ hereafter.
\begin{lem}
\label{lem:p_ast_properties}Suppose \eqref{cond:low_receiver_rent}
fails, and $p_{\ast}=\pi_{\ell L}$. There exists $c_{3}>0$ such
that for all $c\leq c_{3}$:
\begin{enumerate}
\item $\widetilde{V}(p_{\ast})<0$;
\item There exists $\underline{\pi}_{LR}\in(p_{\ast},\min\{\hat{p},\xi\})$
such that $V_{L0}(p;p_{\ast})\geq\widetilde{V}(p)$ if and only if
$p\leq\underline{\pi}_{LR}$.
\end{enumerate}
\end{lem}
\begin{proof}
For each $p^{\ast}$, there exists $c_{3}^{1}>0$ such that $p_{\ast}<\xi$
and $V_{S}(\xi)>0$ for all $c\leq c_{3}^{1}$. In the sequel, we
 assume that $c<c_{3}:=\min\{c_{3}^{1},c_{3}^{2}\}$, where $c_{3}^{2}$
is defined in the proof for (b).

(a) Suppose $p^{\ast}\leq\eta$ so that $\widetilde{V}(p)=V_{R}(p)$
for all $p\leq p^{\ast}$. Since \eqref{eq:def_pi_ellL} holds with
equality at $q=\pi_{\ell L}=p_{*}$, we can substitute $\lambda v/c$
in the explicit solution for $V_{R}(p_{*})$ and get
\[
\widetilde{V}(p_{\ast})=V_{R}(p_{*})<0\Leftrightarrow\log\left(\frac{p^{\ast}}{1-p^{\ast}}\frac{1-p_{\ast}}{p_{\ast}}\right)>\frac{p^{\ast}-p_{\ast}}{p^{\ast}(1-p_{\ast})}.
\]
Define $f_{1}(p):=\log\left(\frac{p^{\ast}}{1-p^{\ast}}\frac{1-p}{p}\right)-\frac{p^{\ast}-p}{p^{\ast}(1-p)}$.
The above inequality holds since $f_{1}(p^{\ast})=0$ and $f_{1}^{\prime}(p)<0$
for all $p<p^{\ast}$. If $p^{\ast}>\eta$, then $\widetilde{V}(p)=V_{RS}(p;\xi)$
for all $p\leq\xi$. In this case,
\[
\widetilde{V}(p_{\ast})<0\Leftrightarrow\frac{2p_{\ast}(p^{\ast}-\xi)}{p^{\ast}\xi(1-\xi)}+p_{\ast}\log\left(\frac{\xi}{1-\xi}\frac{1-p_{\ast}}{p_{\ast}}\right)+1-\frac{p_{\ast}}{\xi}>\frac{p^{\ast}-p_{\ast}}{p^{\ast}(1-p_{\ast})}.
\]
Define $f_{2}(p):=\frac{2p(p^{\ast}-\xi)}{p^{\ast}\xi(1-\xi)}+p\log\left(\frac{\xi}{1-\xi}\frac{1-p}{p}\right)+1-\frac{p}{\xi}-\frac{p^{\ast}-p}{p^{\ast}(1-p)}$.
The desired result ($f_{2}(p_{\ast})>0$) holds, because $f_{2}(0)=0$,
$f_{2}(\xi)>0$, and $f_{2}$ is concave over $p\in(0,\xi]$.

(b) We begin by showing that there exists $c_{3}^{2}>0$ such that
for $c<c_{3}^{2}$, $V_{L0}(x;p_{*})<\widetilde{V}(x)$, where $x\in\{\hat{p},\xi\}$.
Since $\widetilde{V}(p)\ge V_{S}(p)$ (Lemma \ref{lem:pasted_value_properties}.(d)),
it suffices to show $V_{L0}(x;p_{*})<V_{S}(x)$. Indeed, we have $V_{L0}(x;p_{*})-V_{S}(x)=\left(\frac{x-p_{*}}{p^{*}-p_{*}}-\frac{x}{p^{*}}\right)v+C_{S}(x)-C_{-}(x;p_{*})<C_{S}(x)-C_{-}(x;p_{*})$,
since $C_{S}(x)/c$ is independent of $c$ and $C_{-}(x;p_{*})/c\rightarrow\infty$
as $c\rightarrow0$.\footnote{This is because $p_{*}\rightarrow0$ as $c\rightarrow0$ so that for
the $L0$ strategy the expected waiting time from any starting point
$x$ becomes infinite if the state is $L$.}

By Lemma \ref{lem:p_ast_properties}.(a) we have $V_{L0}(p_{*};p_{*})=0>\widetilde{V}(p_{*})$.
Since for $c<c_{3}^{2}$, $V_{L0}(\min\{\hat{p},\xi\};p_{*})<\widetilde{V}(\min\{\hat{p},\xi\})$,
there exists an intersection of $V_{L0}(p;p_{*})$ and $\widetilde{V}(p)$
at some $p\in(p_{*},\min\{\hat{p},\xi\})$. In the remainder of the
proof we show that $V_{L0}(\cdot;p_{*})$ can cross $\widetilde{V}(\cdot)$
only from above, which establishes uniqueness of the intersection
on the whole interval $(p_{*},p^{*})$.

We first consider $p^{*}<\eta$. In this case $\widetilde{V}(p)=V_{R}(p)$
and Lemma \ref{lem:significance_eta} implies that $V_{R}(p)>V_{S}(p)$.
Then, by Lemma \ref{lem:Crossing}.(c), $V_{L0}(p;p_{*})$ can cross
$\widetilde{V}(p)$ only from above.

Second, consider $p^{*}\ge\eta$. Since $\widetilde{V}(p)=V_{LS}(p;\xi)$
for $p\in[\xi,\overline{\pi}_{LR}]$ and both $V_{L0}$ and $V_{LS}$
satisfy \eqref{eq:ODEs_left_drifting}, no intersection can occur
in the interval $[\xi,\overline{\pi}_{LR}]$. Outside this interval
$\widetilde{V}(p)$ satisfies \eqref{eq:ODEs_right_drifting} and
$\widetilde{V}(p)>V_{S}(p)$ by Lemma \ref{lem:pasted_value_properties}.(d).
Therefore, again Lemma \ref{lem:Crossing}.(c) implies that $V_{L0}(p;p_{*})$
can cross $\widetilde{V}(p)$ only from above.
\end{proof}

\paragraph{Equilibrium payoffs.}

The equilibrium value functions are given as follows:
\[
V(p):=\left\{ \begin{array}{ll}
0 & \mbox{ if }p\in[0,p_{*}),\\
V_{L0}(p;p_{\ast}) & \mbox{ if }p\in[p_{*},\underline{\pi}_{LR})\\
\widetilde{V}(p) & \mbox{ if }p\in[\underline{\pi}_{LR},p^{\ast})\\
v & \mbox{ if }p\ge p^{*},
\end{array}\right.\text{ and }U(p):=\left\{ \begin{array}{ll}
U_{\ell}(p) & \mbox{ if }p\in[0,p_{*})\\
U_{L0}(p;p_{*}) & \mbox{ if }p\in[p_{*},\underline{\pi}_{LR})\\
\widetilde{U}(p) & \mbox{ if }p\in[\underline{\pi}_{LR},p^{*})\\
U_{r}(p) & \mbox{ if }p\ge p^{*}.
\end{array}\right.
\]

\begin{lem}
\label{lem:V_positive_convex} When \eqref{cond:low_receiver_rent}
fails, $V(\cdot)$ is nonnegative and nondecreasing on $[0,p^{\ast}]$,
and strictly convex on $[p_{*},p^{*}]$
\end{lem}
\begin{proof}
Lemma \ref{lem:p_ast_properties}.(b) implies that $V(p)=\max\{V_{L0}(p;p_{\ast}),\widetilde{V}(p)\}$
over $[p_{\ast},p^{\ast}]$. This is strictly convex since it is the
maximum of two strictly convex functions. Strict convexity of $V_{L0}(\cdot)$
on $[p_{\ast},p^{\ast}]$ is routine to verify; we had already shown
convexity of $\widetilde{V}(p)$ in Lemma \ref{lem:pasted_value_properties}.(a).
Finally, by \eqref{eq:def_pi_ellL}, $V(p)$ is continuously differentiable
at $p_{*}=\pi_{\ell L}$ and therefore convex on $[0,p^{*}]$. This
also implies that $V(p)$ is nondecreasing.
\end{proof}

\subsection{Verifying the Sender's Incentives\label{subsec:sender_verify}}

We show that for each $p^{\ast}$, the sender's strategy is a best
response if the buyer waits if and only if $p\in W$.\footnote{Recall that $W=[p_{\ast},p^{\ast})$ in Proposition \ref{prop:C2_hold},
and $W=(p_{\ast},p^{\ast})$ in Proposition \ref{prop:C2_fail}.} To this end, we must show that in the waiting region the sender's
equilibrium value function solves the Hamilton-Jacobi-Bellmann (HJB)
equation:\footnote{\label{fn:verification}More formally, since $V(p)$ has kinks, we
show that it is a \emph{viscosity solution} of \eqref{eq:HJB_general}.
Together with $V(p)>0$, this is necessary and sufficient for optimality
of the sender's strategy. For necessity see Theorem 10.8 in \citet{Oksendal2009}.
While we are not aware of a statement of sufficiency that covers precisely
our model, the arguments in \citet{Soner1986} can be easily extended
to show sufficiency.}
\begin{equation}
\tag{HJB}\max_{\alpha(\cdot;p)}\sum_{q\neq p}\alpha(q;p)v(p;q)=c,\label{eq:HJB_general}
\end{equation}
where $v(p;q)$ is as defined in Section \ref{subsec:modes_persuasion}. Outside the
waiting region, the sender's value is independent of her strategy.
Still, our refinement requires that her strategy maximize her instantaneous payoff normalized by $dt$; that is, her choice of experiment should solve
\begin{equation}
\tag{\text{R}ef}\max_{\alpha(\cdot;p)}\sum_{q\neq p}\alpha(q;p)v(p;q)-\boldsymbol{1}_{\left\{\sum\alpha(q;p_{t})>0\right\} }c.\label{eq:refinement}
\end{equation}

Proposition \ref{prop:persuasion_modes}.(b) implies that if $V(p)$
meets certain conditions, then we can restrict attention to Poisson
experiments with jump targets, $0$, $p_{\ast}$, and $p^{\ast}$,
which greatly simplifies both \eqref{eq:HJB_general} and \eqref{eq:refinement}.
Here, we show that our equilibrium value function $V(\cdot)$ satisfies
all properties required by Proposition \ref{prop:persuasion_modes}.(b),
namely that it is nonnegative, increasing, and strictly convex on
$(p_{\ast},p^{\ast}]$, and $V(p_{\ast})/p_{\ast}\leq V^{\prime}(p_{\ast})$.
If \eqref{cond:low_receiver_rent} holds, the first two properties
hold by Lemma \ref{lem:prop_V_C2_hold}, while strict convexity of $V(p)$ follows from Lemma \ref{lem:pasted_value_properties}.(a) and $V(p)=\widetilde{V}(p)$ for $p\in[p_{\ast},p^{\ast}]$. The last property also holds because $\widetilde{V}(p)$ is convex and $\lim_{p\to0}\widetilde{V}(p)=-c/\lambda<0$. If \eqref{cond:low_receiver_rent} fails, the first three properties follow from Lemma \ref{lem:V_positive_convex} and $V(p_{\ast})/p_{\ast}\leq V^{\prime}(p_{\ast})$ also holds, because $p_{\ast}=\pi_{\ell L}>0$ and $V(\pi_{\ell L})=V^{\prime}(\pi_{\ell L})=0$.

\paragraph*{Stopping region.}

We first apply Proposition \ref{prop:persuasion_modes}.(b) to the
stopping region and verify \eqref{eq:refinement}. For $p\geq p^{\ast}$,
the result is immediate from Proposition \ref{prop:persuasion_modes}.(b).(iii).
Now consider $p$ below $p_{\ast}$. Proposition \ref{prop:persuasion_modes}.(b).(ii) implies that the sender has three choices: two $L$-drifting experiments with jump
target $p_{\ast}$ or $p^{\ast}$, and simply passing. This reduces
\eqref{eq:refinement} to

\[
\max_{\alpha_{*},\alpha^{*}\ge0}\lambda p(1-p)\left[\alpha_{*}\frac{V(p_{*})}{p_{*}-p}+\alpha^{*}\frac{v}{p^{*}-p}\right]-c(\alpha_{*}+\alpha^{*})\text{ subject to }\alpha_{*}+\alpha^{*}\le1.
\]

(i) Proposition \ref{prop:C2_fail}: If \eqref{cond:low_receiver_rent}
fails, then $V(p_{\ast})=0$ so that $\alpha_{*}=0$ is optimal. The
coefficient of $\alpha^{*}$ is $\lambda vp(1-p)/(p^{*}-p)-c$. By
\eqref{eq:def_pi_ellL}, this is negative for all $p<p_{\ast}=\pi_{\ell L}$,
so $\alpha^{*}=0$ is optimal. Therefore, for all $p\in[0,p_{\ast}]$,
passing---the sender's strategy as specified in Proposition \ref{prop:C2_fail}---satisfies
\eqref{eq:refinement}.

(ii) Proposition \ref{prop:C2_hold}: If \eqref{cond:low_receiver_rent}
holds, then as discussed in Section \ref{subsec:modes_persuasion}
and depicted in Figure \ref{fig:HJB_simplified} there exists a cutoff
$\pi_{0}<p_{\ast}$ such the coefficient of $\alpha_{*}$ is greater
than the coefficient of $\alpha^{*}$ if and only if $p>\pi_{0}$.\footnote{Specifically $\pi_{0}$ satisfies
\[
\frac{V(p_{\ast})-V(\pi_{0})}{p_{\ast}-\pi_{0}}=\frac{V(p_{\ast})-V(\pi_{0})}{p_{\ast}-\pi_{0}}\Leftrightarrow\frac{V(p^{\ast})}{p^{\ast}-\pi_{0}}=\frac{v}{p^{\ast}-\pi_{0}}\Leftrightarrow\pi_{0}=\frac{p_{\ast}v-p^{\ast}V(p_{\ast})}{v-V(p_{\ast})}.
\]
} The following Lemma shows that $\pi_{\ell L}<\pi_{0}$.
\begin{lem}
\label{lem:pi_ellL_below_pi_0}If \eqref{cond:low_receiver_rent}
holds, then $\pi_{\ell L}<\pi_{0}$.
\end{lem}
\begin{proof}
Let $\pi_{\ell R}$ be the value of $p$ such that $\widetilde{V}(p)=0$.
We show that $\pi_{\ell L}<\pi_{\ell R}<\pi_{0}$. The latter inequality
is immediate from the strict convexity of $\widehat{V}(\cdot)$ on
$[0,p^{\ast}]$ (Lemma \ref{lem:pasted_value_properties}.(a)) and
the definition of $\pi_{0}$. For the former inequality, it suffices
to show that $\widetilde{V}(\pi_{\ell L})<0$, which is shown as in
the proof of Lemma \ref{lem:p_ast_properties}.(a).
\end{proof}
As in the case of Proposition \ref{prop:C2_fail}, passing satisfies
\eqref{eq:refinement} for $p\le\pi_{\ell L}$. Moreover, we have
shown that $\alpha^{*}=1$ satisfies \eqref{eq:refinement} for $p\in(\pi_{\ell L},\pi_{0})$
and $\alpha_{*}=1$ satisfies it for $p\in[\pi_{0},p_{*})$. Therefore,
the sender's strategy in Proposition \ref{prop:C2_hold} satisfied
\eqref{eq:refinement} for all $p<p_{*}$.

\paragraph*{Waiting region.}

Applying Proposition \ref{prop:persuasion_modes}.(b).(i) to $p\in W$,
\eqref{eq:HJB_general} simplifies to
\begin{equation}
\tag{HJB-S}c=\lambda p(1-p)\max_{\alpha\in[0,1]}\left[\alpha\frac{v-V(p)}{p^{*}-p}-(1-\alpha)\frac{V(p)}{p}-\left(2\alpha-1\right)V'(p)\right].\label{eq:HJB_S}
\end{equation}
Our goal is to show that the value function $V(p)$ satisfies this
equation at every $p\in W$. The key argument is the following unimprovability
lemma:
\begin{lem}[Unimprovability]
\label{lem:unimprovability}~
\begin{enumerate}
\item If $V_{+}(p)$ satisfies \eqref{eq:ODEs_right_drifting} and $V_{+}(p)\ge V_{S}(p)$
at $p\in[0,p^{*})$, then $V_{+}(p)$ satisfies \eqref{eq:HJB_S}
at $p$. If $V_{+}(p)>V_{S}(p)$, then $\alpha=0$ is the unique maximizer
in \eqref{eq:HJB_S}.
\item If $V_{-}(p)$ satisfies \eqref{eq:ODEs_left_drifting} and $V_{-}(p)\ge V_{S}(p)$
at $p\in[0,p^{*})$, then $V_{-}(p)$ satisfies \eqref{eq:HJB_S}
at $p$. If $V_{-}(p)>V_{S}(p)$, then $\alpha=1$ is the unique maximizer
in \eqref{eq:HJB_S}.
\end{enumerate}
\end{lem}
\begin{proof}
(a) Substituting $V'(p)=V'_{+}(p)$ from \eqref{eq:ODEs_right_drifting},
\eqref{eq:HJB_S} simplifies to

\[
\max_{\alpha\in[0,1]}\left[-\frac{p^{\ast}}{(p^{\ast}-p)p}(V(p)-V_{S}(p))\right]\alpha=0.
\]
If $V(p)-V_{S}(p)\ge0$, $\alpha=0$ is a maximizer, so the above
condition holds. Further, if $V(p)>V_{S}(p)$, then $\alpha=0$ is
the unique maximizer. The proof for (b) is similar.
\end{proof}
By Lemmas \ref{lem:pasted_value_properties}.(d) and \ref{lem:p_ast_properties}.(b),
$V(p)\geq V_{S}(p)$ holds for all $p\in(p_{\ast},p^{\ast})$. Therefore,
the Unimprovability Lemma \ref{lem:unimprovability} implies that
$V(p)$ satisfies \eqref{eq:HJB_general} for all points where it
is differentiable. At the remaining points $\underline{\pi}_{LR}$,
and $\overline{\pi}_{LR}$, the value function satisfies \eqref{eq:ODEs_right_drifting}
and \eqref{eq:ODEs_left_drifting}, respectively, if we replace $V_{+}'$
by the right derivative and $V_{-}'$ by the left derivative. As in
the proof of the Unimprovability Lemma \ref{lem:unimprovability}
this implies that \eqref{eq:HJB_general} continues to hold if we
insert directional derivatives. Using this observation, together with
the fact $V(p)$ is convex at the points $\underline{\pi}_{LR}$ and
$\overline{\pi}_{LR}$ where it has kinks, it can be shown that $V(p)$
is a viscosity solution of \eqref{eq:HJB_general}, which is sufficient
for optimality of the sender's strategy in the waiting region (see
Footnote \ref{fn:verification} above).

\subsection{Verifying the Receiver's Incentives\label{sec:Verifying-the-Receiver's}}

We now prove the optimality of the receiver's strategy for each belief
$p$, taking as given the sender's strategy. If the sender passes,
which occurs when $p\leq\pi_{\ell L}$ or $p\geq p^{\ast}$, then
the receiver gains nothing from waiting. Since $\pi_{\ell L}\leq p_{\ast}<\hat{p}$
(assuming $c\leq\min\{c_{1},c_{2}\}$) and $p^{\ast}>\hat{p}$, the
receiver chooses $\ell$ if $p\leq\pi_{\ell L}$ and $r$ if $p\ge p^{\ast}$.

Consider next the region $(\pi_{\ell L},p^{\ast})$ on which the sender
does not pass. For this region, we prove that given the sender's strategy,
the receiver's strategy solves her optimal stopping problem in the
dynamic programming sense. By standard verification theorems, it is
sufficient for optimality that the receiver's equilibrium payoff $U(p)$
satisfies the following HJB conditions for all $p$:\footnote{\label{fn:rec_verification}The receiver's value function $U(p)$
is not continuously differentiable at $p_{*}$ (in case \eqref{cond:low_receiver_rent}
holds), $\pil_{LR}$, and $\pih_{LR}$. At these non-smooth points,
we replace $U'(p)$ in \eqref{eq:HJB_2_experiments-rec1} by the right
derivative $U'(p_{+})$, which is the directional derivative \emph{in
the direction of the belief dynamics given by the sender's strategy}.
With this modification, \eqref{eq:HJB_2_experiments-rec1} is well
defined for all $p$.

By standard verification theorems, the conditions \eqref{eq:HJB_2_experiments-rec1}
and \eqref{eq:HJB_2_experiments-rec2} are sufficient for optimality
if $U(p)$ is continuously differentiable. To see that sufficiency
also holds for the receiver's problem, note that we can verify the
receiver's strategy separately for intervals which are closed under
the belief dynamics given by the sender's strategy. For example if
\eqref{cond:low_receiver_rent} holds and $p^{*}\ge\eta$, we can
partition $(\pi_{\ell L},p^{\ast})$ into $P=\{(\pi_{\ell L},p_{*})$,
$[p_{*},\pih_{LR})$, $[\pih_{LR},p^{*})\}$. If the prior belief
is in one of these intervals, the posterior will never leave it unless
a Poisson jump occurs, and the continuation value after a jump can
be taken as fixed. This means that we can verify the optimality of
the receiver's strategy separately for each interval; since $U(p)$
is continuously differentiable on each of the intervals, the standard
verification theorems apply.}
\begin{equation}
\tag{R1}c\ge\lambda p(1-p)\left[\alpha(p)\frac{U(q(p))-U(p)}{q(p)-p}+(1-\alpha(p))\frac{u_{\ell}^{L}-U(p)}{p}-\left(2\alpha(p)-1\right)U'(p)\right],\label{eq:HJB_2_experiments-rec1}
\end{equation}
and
\begin{equation}
\tag{R2}U(p)\ge\max\{U_{\ell}(p),U_{r}(p)\},\label{eq:HJB_2_experiments-rec2}
\end{equation}
and at least one condition holds with equality. Here, $(\alpha(p),q(p))$
represents the sender's strategy as specified in Propositions \ref{prop:C2_hold}
and \ref{prop:C2_fail}, respectively.\footnote{Specifically, $\alpha(p)=0$ if the sender plays the $R$-drifting
experiment; $(\alpha(p),q(p))=(1,q)$ if she plays the $L$-drifting
experiment with jump target $q$; and $(\alpha(p),q(p))=(1/2,p^{\ast})$
if she plays the stationary strategy.}

\paragraph*{Waiting region.}

Suppose $p\in W$. For all points where the receiver's equilibrium
payoff function $U(p)$ is differentiable, by construction, it satisfies
\eqref{eq:HJB_2_experiments-rec1} with equality.\footnote{At kinks, $U(p)$ satisfies \eqref{eq:HJB_2_experiments-rec1} if
$U'(p)$ is replaced by $U'(p_{+})$ (see footnote \ref{fn:rec_verification}).} Hence, it suffices to prove (\ref{eq:HJB_2_experiments-rec2}). We
first show that at $p^{*}$ the slope of $U(p)$ is less than or equal
to the slope of $U_{r}(p)$. To this end, observe
\[
U^{\prime}(p^{\ast})=U_{R}^{\prime}(p^{\ast})=\frac{U_{r}(p^{\ast})-u_{\ell}^{L}}{p^{\ast}}+\frac{c}{\lambda p^{\ast}(1-p^{\ast})}=U_{r}^{\prime}(p^{\ast})-\frac{u_{\ell}^{L}-u_{r}^{L}}{p^{\ast}}+\frac{c}{\lambda p^{\ast}(1-p^{\ast})}.
\]
Since $u_{\ell}^{L}>u_{r}^{L}$, we have $U^{\prime}(p^{\ast})\leq U_{r}^{\prime}(p^{\ast})$
whenever $c\leq c_{4}:=(1-p^{\ast})(u_{\ell}^{L}-u_{r}^{L})$.

(i) Proposition \ref{prop:C2_hold}: When \eqref{cond:low_receiver_rent}
holds, $U(\cdot)$ is convex on $[p_{\ast},p^{\ast}]$ since $U(p)=\widetilde{U}(p)$
for $p\in[p_{\ast},p^{\ast})$ and $\widetilde{U}(\cdot)$ is convex
on $[0,p^{\ast}]$ (Lemma \ref{lem:pasted_value_properties}.(a)).
Together with $U^{\prime}(p^{\ast})\leq U_{r}^{\prime}(p^{\ast})$,
this implies that $U(p)\ge U_{r}(p)$ for all $p\in[p_{\ast},p^{\ast}]$,
provided that $c\leq c_{4}$. We have argued in Footnote \ref{fn:p_ast_well_def_C2hold}
that $U(p)\ge U_{\ell}(p)$ for all $p\in[p_{\ast},p^{\ast}]$. Therefore
\eqref{eq:HJB_2_experiments-rec2} holds for all $p\in[p_{*},p^{*})$.

(ii) Proposition \ref{prop:C2_fail}: We begin by showing that $U_{L0}(p,p_{*})>U_{\ell}(p)$
for all $p\in(p_{*},p^{*})$. Since $U_{L0}(p_{*},p_{*})=U_{\ell}(p_{*})$,
we have
\begin{eqnarray*}
U_{L0}^{\prime}(p_{\ast};p_{\ast}) & = & \frac{U_{r}(p^{\ast})-U_{\ell}(p_{\ast})}{p^{\ast}-p_{\ast}}-\frac{c}{\lambda p_{\ast}(1-p_{\ast})}\geq\frac{U_{\ell}(p^{\ast})-U_{\ell}(p_{\ast})}{p^{\ast}-p_{\ast}}+\frac{v}{p^{\ast}-p_{\ast}}-\frac{c}{\lambda p_{\ast}(1-p_{\ast})}\\
 & = & \frac{U_{\ell}(p^{\ast})-U_{\ell}(p_{\ast})}{p^{\ast}-p_{\ast}}=u_{\ell}^{R}-u_{\ell}^{L}=U_{\ell}^{\prime}(p_{\ast}),
\end{eqnarray*}
where the inequality holds since \eqref{cond:low_receiver_rent} fails,
and the second equality follows from \eqref{eq:def_pi_ellL} and $p_{*}=\pi_{\ell L}$.
Together with the fact that $U_{L0}(\cdot;p_{\ast})$ is convex on
$[p_{\ast},p^{\ast}]$, this implies that $U_{L0}(p,p_{*})>U_{\ell}(p)$
for all $p\in(p_{*},p^{*})$.

For $p\in(p_{*},\pil_{LR})$, $U(p)=U_{L0}(p,p_{*})$. By Lemma \ref{lem:p_ast_properties}.(b),
$\underline{\pi}_{LR}\leq\hat{p}$, provided that $c\leq c_{3}$.
Hence $U_{r}(p)<U_{\ell}(p)$ for $p<\pil_{LR}$, and \eqref{eq:HJB_2_experiments-rec2}
holds since $U(p)=U_{L0}(p,p_{*})>U_{\ell}(p)$ for $p\in(p_{*},\pil_{LR})$.

Next suppose $p\in[\pil_{LR},p^{*})$. Here $U(p)=\widetilde{U}(p)$
and by the same arguments as in (i) we have $\widetilde{U}(p)>U_{r}(p)$.
To show that $\widetilde{U}(p)>U_{\ell}(p)$ it suffices to show that
$\widetilde{U}(p)-U_{L0}(p;p_{*})>0$. Since the sender and the receiver
incur the same cost for each strategy, we can rewrite this difference
as
\begin{align*}
\widetilde{U}(p)-U_{L0}(p;p_{\ast}) & =\widetilde{V}(p)-V_{L0}(p;p_{\ast})+\frac{p_{*}(p^{*}-p)}{p^{*}(p^{*}-p_{*})}\left(U_{r}(p^{*})-U_{\ell}(p^{*})-v\right)>0
\end{align*}
The inequality holds since by Lemma \ref{lem:p_ast_properties}.(b),
$\widetilde{V}(p)-V_{L0}(p;p_{*})\ge0$ for $p\ge\pil_{LR}$; and
$U_{r}(p^{*})-U_{\ell}(p^{*})-v>0$ if \eqref{cond:low_receiver_rent}
is violated.

\paragraph*{The stopping region with $p\in(\pi_{\ell L},p_{\ast})$.}

If \eqref{cond:low_receiver_rent} fails, then $p_{\ast}=\pi_{\ell L}$,
so this case does not arise. The proof of Proposition 3 is thus complete.

Now suppose that \eqref{cond:low_receiver_rent} holds and $p\in(\pi_{\ell L},p_{\ast})$.
In this case, $U(p)$ satisfies \eqref{eq:HJB_2_experiments-rec2}
with equality, so it suffices to show \eqref{eq:HJB_2_experiments-rec1}.
Consider first $p\in[\pi_{0},p_{*})$. For these beliefs, the sender
adopts the $L$-drifting experiment with jump target $p_{\ast}$,
that is, $(\alpha(p),q(p))=(1,p_{\ast})$. Plugging this into \eqref{eq:HJB_2_experiments-rec1}
and using the fact that $U(p)=U_{\ell}(p)$ for all $p\leq p_{\ast}$,
the right-hand side of \eqref{eq:HJB_2_experiments-rec1} is equal
to zero so that \eqref{eq:HJB_2_experiments-rec1} is satisfied.

Finally, consider $p\in[\pi_{\ell L},\pi_{0})$, at which the sender
plays the $L$-drifting experiment with jump target $p^{\ast}$, so
$(\alpha(p),q(p))=(1,p^{\ast})$. Since $U(p)=U_{\ell}(p)$ for all
$p\leq p_{\ast}$, \eqref{eq:HJB_2_experiments-rec1} reduces to
\[
\lambda p(1-p)\left[\frac{U_{r}(p^{\ast})-U_{\ell}(p)}{p^{\ast}-p}-U_{\ell}^{\prime}(p)\right]=\frac{\lambda p(1-p)}{p^{\ast}-p}(U_{r}(p^{\ast})-U_{\ell}(p^{\ast}))\leq c,
\]
which is equivalent to $p\le\phi_{\ell L}$, where $\phi_{\ell L}$
is the unique value of $p$ such that
\[
\frac{\lambda p(1-p)}{p^{\ast}-p}(U_{r}(p^{\ast})-U_{\ell}(p^{\ast}))=c.
\]
The following lemma shows that that $\phi_{\ell L}\ge\pi_{0}$ if
$c\le c_{5}$ for some $c_{5}>0$. It then follows that if $c\leq\min\{c_{1},\ldots,c_{5}\}$,
the receiver has no incentive to deviate from his prescribed strategy
in Proposition \ref{prop:C2_hold}, completing the proof.
\begin{lem}
\label{lem:receiver_incentive_below_pi0}Suppose \eqref{cond:low_receiver_rent}
holds. There exists $c_{5}>0$ such that if $c\leq c_{5}$ then $\phi_{\ell L}\ge\pi_{0}$.
\end{lem}
\begin{proof}
Let $\Delta U:=U_{r}(p^{*})-U_{\ell}(p^{*})$. Since $\phi_{\ell L}$
is the lowest $p$ such that $\frac{p(1-p)\lambda\Delta U}{p^{\ast}-p}\geq c$,
\begin{equation}
\pi_{0}\leq\phi_{\ell L}\quad\iff\quad\frac{\pi_{0}(1-\pi_{0})}{p^{\ast}-\pi_{0}}\frac{\lambda}{c}\Delta U<1.\label{eq:phi_below_pi0_inequal}
\end{equation}
It suffices to show that this inequality holds in the limit as $c\to0$.
Recall that
\[
\frac{V(p^{\ast})}{p^{\ast}-\pi_{0}}=\frac{v}{p^{\ast}-\pi_{0}}\quad\iff\quad\pi_{0}=\frac{p^{\ast}}{v-V(p_{\ast})}\left(\frac{p_{\ast}}{p^{\ast}}v-V(p_{\ast})\right)=\frac{p^{\ast}}{v-V(p_{\ast})}\tilde{C}(p_{\ast}),
\]
where $\widetilde{C}(p_{\ast})=\frac{p_{\ast}}{p^{\ast}}v-\widetilde{V}(p_{\ast})$
denotes the total persuasion costs incurred when $p=p_{\ast}=\phi_{\ell R}$
and \eqref{cond:low_receiver_rent} holds. By the definition of $\widetilde{V}(p_{\ast})$,
$\widetilde{C}(p_{\ast})$ can be written as
\[
\widetilde{C}(p_{\ast})=C_{+}(p_{\ast};q_{R})+\frac{p_{\ast}}{q_{R}}C_{S}(q_{R})=\left(p_{\ast}\log\left(\frac{q_{R}}{1-q_{R}}\frac{1-p_{\ast}}{p_{\ast}}\right)+1-\frac{p_{\ast}}{q_{R}}+\frac{p_{\ast}}{q_{R}}\frac{2(p^{\ast}-q_{R})}{p^{\ast}(1-q_{R})}\right)\frac{c}{\lambda},
\]
where $q_{R}:=p^{\ast}$ if $p^{\ast}\leq\eta$ and $q_{R}:=\xi$
if $p^{\ast}>\eta$. Importantly, as $c\to0$, we have $p_{\ast}\to0$,
$\widetilde{C}(p_{\ast})\to0$ and $\widetilde{C}(p_{\ast})\frac{\lambda}{c}\to1$.
It follows that $\pi_{0}\to0$ and $\pi_{0}\frac{\lambda}{c}\to\frac{p^{\ast}}{v}$,
so
\[
\frac{\pi_{0}(1-\pi_{0})}{p^{\ast}-\pi_{0}}\frac{\lambda}{c}\Delta U\to\frac{\Delta U}{v}<1,
\]
where the inequality is due to \eqref{cond:low_receiver_rent}. This
completes the proof.
\end{proof}

\subsection{SMPE Uniqueness given $p^{\ast}$}

\label{subsec:uniqueness}

Fix any $p^{*}$. To show that for $c$ sufficiently small, the strategy
profiles in Propositions \ref{prop:C2_hold} and \ref{prop:C2_fail}
are the unique SMPEs, we prove that any other choice of $p_{\ast}$
than specified in \ref{subsubsec:P2_equil} and \ref{subsubsec:P3_equil}
(i.e., $p_{\ast}\neq\phi_{\ell R}$ if \eqref{cond:low_receiver_rent}
holds and $p_{\ast}\neq\pi_{\ell L}$ if \eqref{cond:low_receiver_rent}
fails) cannot yield an SMPE. This requires a full characterization
of the sender's optimal dynamic strategy given any lower bound $p_{\ast}$
and upper bound $p^{\ast}$, and a thorough examination of the receiver's
incentives in the stopping region as well as in the waiting region.
The former closely follows our construction and analysis of the equilibrium
value functions in \ref{sec:Eq-payoffs} and \ref{subsec:sender_verify},
and the latter follows closely \ref{sec:Verifying-the-Receiver's}.
We relegate the full proof to the Online Appendix \ref{sec:Uniqueness-of-SMPE}.

\begin{singlespacing} \setlength{\bibsep}{1pt}  \bibliographystyle{economet}
\bibliography{bibckm}
 \end{singlespacing}

\newpage{}

\setcounter{page}{1}
\begin{center}
\textbf{\huge{}{}{}Supplemental Material}\textbf{\large{}{}{}}{\large{}{}}\\
 {\large{}{} }\textbf{\large{}{}{}(for online publication)}{\large\par}
\par\end{center}

\begin{center}

\par\end{center}

\smallskip{}

\section{Uniqueness of SMPE for given $p^{*}$ \label{sec:Uniqueness-of-SMPE}}

We prove that for each fixed $p^{\ast}\in(\hat{p},1)$, the equilibrium
in Propositions \ref{prop:C2_hold} and \ref{prop:C2_fail} is the
unique SMPE in each case, provided that $c$ is sufficiently small
(i.e., $c\leq\min\{c_{1},\ldots,c_{5}\}$). We first characterize
the sender's best response given any lower bound $p_{\ast}$ of the
waiting region and then show that it can be part of an equilibrium
if and only if it is as specified in Propositions \ref{prop:C2_hold}
and \ref{prop:C2_fail}, respectively.

In the following we use $p_{*}$ to denote an exogenously given lower
bound of $W$, which may be different from the lower bound $\phi_{\ell R}$
in Proposition \ref{prop:C2_hold} or $\pi_{\ell L}$ in Proposition
\ref{prop:C2_fail}. We will also use $V(p)$ and $U(p)$ to denote
generic value functions for the sender and receiver that are obtained
from the sender's best response to a given waiting region with bounds
$p_{*}$ and $p^{*}$. Hence in the following $V(p)$ and $U(p)$
may be different from the functions defined in Sections \ref{subsubsec:P2_equil}
and \ref{subsubsec:P3_equil}. Throughout we assume that $c\le\min\{c_{1},\ldots,c_{5}\}$.

\paragraph*{A necessary condition.}

One crucial observation is that at $p_{\ast}$, either $V(p_{\ast})=0$
or $U(p_{\ast})=U_{\ell}(p_{\ast})$---that is, at least one player
should not expect a strictly positive net expected payoff from continuing.
Toward a contradiction, suppose that $V(p_{\ast})>0$ and $U(p_{\ast})>U_{\ell}(p_{\ast})$.\footnote{\label{fn:right-continuity}Note that in equilibrium, $V(p)$ must
be right-continuous at $p_{*}$, i.e., $V(p_{*})=V(p_{*+})=\lim_{p\downarrow p_{*}}V(p)$.
If $V(p_{*+})=0$ this is obvious. Next suppose $V(p_{*+})>0$. For
$p\in W$, the value of any strategy of the sender is continuous in
$p$ and hence $V(p)$ must be continuous in the waiting region. Therefore,
a lack of right continuity at $p_{*}$ can only arise if $p_{*}\notin W$,
so that $V(p_{*})=0$. But if $V(p_{*+})>0$ and $V(p_{*})=0$, then
for $p<p_{*}$ close to $p_{*}$, there is no strategy for the sender
that satisfies \eqref{eq:refinement}, hence such a discontinuity
cannot arise in equilibrium. We thus conclude that either $V(p_{*})=V(p_{*+})=0$,
or $V(p_{*})=V(p_{*+})>0$ and in the latter case $p_{*}\in W$. Noting
right-continuity at $p_{*}$ makes the necessary condition much stronger
and is key to the arguments below.} In this case, if $p$ is just below $p_{\ast}$ then, as when $p\in(\pi_{0},\phi_{\ell R})$
in Proposition \ref{prop:C2_hold}, the sender's flow payoff is maximized
by her playing the $L$-drifting experiment with jump target $p_{\ast}$.
But then, since $U(p_{\ast})>U_{\ell}(p_{\ast})$, the sender has
no incentive to stop at $p$, contradicting that $p<p_{*}$ is not
in the waiting region.

Now we proceed by characterizing the sender's value $V(p)$ and the
receiver's value $U(p)$, if the sender plays a best response. In
particular we characterize $V(p_{*})$ and $U(p_{*})$ which will
enable us to use the necessary condition to narrow down possible equilibrium
values of $p_{*}$.

\paragraph*{The sender's best response in the waiting given (any) $p_{\ast}$.}

Let $\overline{V}(\cdot)$ and $\overline{U}(\cdot)$ denote the value
functions given in Section \ref{subsubsec:P3_equil}; that is, $\overline{V}(\cdot)$
and $\overline{U}(\cdot)$ represent the equilibrium value functions
for Proposition \ref{prop:C2_fail}. They play an important role in
the subsequent analysis, because they coincide with the players' payoffs
in a hypothetical situation where given $p^{\ast}$, the sender chooses
both her dynamic strategy and the boundary of the waiting region $p_{*}$, ignoring the receiver's incentives. This implies that in any SMPE with fixed upper bound $p^{*}$, the
sender's payoff can never exceed $\overline{V}(p)$. The following
result is then immediate.
\begin{lem}
\label{lem:p_ast_lower_bound}Fix $p^{*}$. Then in any SMPE $p_{*}\ge\pi_{\ell L}$.
\end{lem}
\begin{proof}
Suppose instead that $p_{*}<\pi_{\ell L}$. Then the sender's value
is $V(p)=\overline{V}(p)=0$ for $p\in(p_{*},\pi_{\ell L})$ which
is achieved by passing. Any other strategy leads to a negative value,
so passing is the unique best response. Given this, the receiver's
best response is to take action $\ell$ for $p\in(p_{*},\pi_{\ell L})$,
contradicting the hypothesis that the infimum of the waiting region
is $p_{*}<\pi_{\ell L}$.
\end{proof}
Next we characterize the sender's best response if $p_{*}>\pi_{\ell L}$.
We begin with cases where it coincides with the strategy prescribed
in Proposition \ref{prop:C2_fail}
\begin{lem}
\label{lem:p_ast_R_region}Fix $p^{*}$ and suppose that $p_{\ast}$
lies in the region where the sender either plays the stationary strategy
($\xi$ for $p^{\ast}>\eta$), or plays the $R$-drifting experiment
in Proposition \ref{prop:C2_fail}. Then, the sender's value over
$W$ of her best response to $p_{*}$ is given by $\Vh(p)$ and the
receiver's value is given by $\Uh(p)$.
\end{lem}
\begin{proof}
If the prior is $p_{0}\in(p_{\ast},p^{\ast}]$ and the sender mimics
her equilibrium strategy from Proposition \ref{prop:C2_fail}, then
the receiver will stop at the same time as if $p_{*}=\pi_{\ell L}$.
Therefore the sender's payoff is equal to $\overline{V}(p_{0})$ which
is an upper bound for her optimal payoff. Hence the strategy remains
a best response. To show that the receiver's value is given by $\Uh(p_{*})$
we must also characterize the sender's best response (and not just
her value). For $p\notin\{\xi,\pil_{LR},\pih_{L}\}$ the Unimprovability
Lemma \ref{lem:unimprovability} implies that the sender has a unique
best response in Proposition \ref{prop:C2_fail}, since by Lemmas
\ref{lem:significance_eta} and \ref{lem:pasted_value_properties}.(c),
$\Vh(p)>V_{S}(p)$ for $p\neq\xi$, hence we get uniqueness also if
$p_{*}\notin\{\xi,\pil_{LR},\pih_{L}\}$. For $p_{*}=\xi$ uniqueness
of the sender's best response follows since choosing $\alpha>1/2$
at $p_{*}$ yields a value of zero for the sender, and $\alpha<1/2$
violates admissibility since the sender uses the $L$-drifting experiment
for $p>\xi$. If $p_{*}\in\{\pil_{LR},\pih_{LR}\}$, non-uniqueness
in Proposition \ref{prop:C2_fail} arises because the sender is indifferent
between the $L0$ and $RS$ strategies (or $LS$ and $R$ at $\pih_{LR}$).
This is no longer the case if $p_{*}\in\{\pil_{LR},\pih_{LR}\}$ since
the $L0$ or $LS$ strategies are no longer feasible, so that uniqueness
obtains. Since the sender has a unique best response given by the
strategy from Proposition \ref{prop:C2_fail}, the receiver's value
from the sender's best response is given by $\Uh(p_{*})$.
\end{proof}
This lemma immediately allows us to apply the necessary condition.
Recall that $\Vh(p)>0$ and $\Uh(p)>\mathcal{U}(p)$ for all $p\in(\pi_{\ell L},p^{*}]$.
Hence if $p_{*}$ is in the region where the sender either plays the
stationary strategy ($\xi$ for $p^{\ast}>\eta$), or plays the $R$-drifting
experiment in Proposition \ref{prop:C2_fail},  $\Vh(p_{*})>0$ and
$\Uh(p_{*})>U_{\ell}(p_{*})$.\footnote{As argued in Footnote \ref{fn:right-continuity}, we must have $p_{*}\in W$
since $\Vh(p_{*+})>0$, and hence $\Vh(p_{*})>0$ and $\Uh(p_{*})>U_{\ell}(p_{*})$.} Therefore $p_{*}$ in this region leads to a violation of the necessary
condition.

Next we consider the case where $p_{\ast}$ lies in the region where
the sender plays an $L$-drifting experiment in Proposition \ref{prop:C2_fail}.
This is the case if $p_{*}\in(\pi_{\ell L},\pil_{LR})$; and when
$p^{*}\ge\eta$ also if $p_{*}\in(\xi,\pih_{LR})$. In this case,
the sender cannot simply replicate her strategy in Proposition \ref{prop:C2_fail}.
For example, suppose $p_{*}\in(\pi_{\ell L},\pil_{LR})$ and $p_{0}\in(p_{*},\pil_{LR})$.
If the sender used the $L0$ strategy, the receiver would stop when
the belief drifts to $p_{*}$ while in Proposition \ref{prop:C2_fail}
he would wait until the belief reaches $\pi_{\ell L}$. Therefore,
the sender's best response may be different from the strategy in Proposition
\ref{prop:C2_fail}.

The following lemma reports a set of observations about the best response
that will allow us to use the necessary condition for an SMPE. To
state this precisely, let $\pi_{\ell R}$ be the unique value such
that $\widetilde{V}(\pi_{\ell R})=0$.\footnote{From our previous results, we know (i) $\pi_{\ell R}\in(\pi_{\ell L},\underline{\pi}_{LR})$
(Lemma \ref{lem:p_ast_properties}.(b)); (ii) if \eqref{cond:low_receiver_rent}
holds, then $\pi_{\ell R}<\phi_{\ell R}$ (Lemma \ref{lem:Vtilde(phi_ellL)})
and $\pi_{\ell R}<\pi_{0}$ (Lemma \ref{lem:pasted_value_properties}.(a));
and (iii) if \eqref{cond:low_receiver_rent} fails, then $\pi_{\ell R}\geq\phi_{\ell R}$
(Lemma \ref{lem:Vtilde(phi_ellL)}).}
\begin{lem}
\label{lem:p_ast_L_region}Let $V(p)$ and $U(p)$ denote the players'
expected payoffs from the sender's best response to a fixed lower
bound $p_{*}$ of the waiting region. Then there exists $c_{6}>0$
such that for all $c\le c_{6}$:
\begin{enumerate}
\item If $p^{\ast}\ge\eta$ and $p_{\ast}\in(\xi,\hat{p})$, then $V(p_{\ast})\geq V_{S}(p_{\ast})>0$
and $U(p_{\ast})\ge U_{S}(p_{\ast})>U_{\ell}(p_{\ast})$.
\item If $p_{\ast}\in(\pi_{\ell L},\pi_{\ell R})$ then $V(p)=V_{L0}(p;p_{\ast})$
and $U(p)=V_{L0}(p;p_{\ast})$ for all $p\in[p_{\ast},\pi_{\ell R})$.
\item If $p_{\ast}\in[\pi_{\ell R},\underline{\pi}_{LR})$ then $V(p_{*})=\widetilde{V}(p_{*})$
and $U(p_{*})=\widetilde{U}(p_{*})$.
\end{enumerate}
\end{lem}
\begin{proof}
If $p_{*}>\xi$, let $c_{6}>0$ be chosen such that for all $c\le c_{6}$,
$V_{S}(p)>0$ for all $p\in[p_{*},p^{*})$, and $U_{S}(p_{\ast})>\max\{U_{\ell}(p_{\ast}),U_{r}(p_{\ast})\}$
.

The sender's best response can be constructed in a similar way as
in Section \ref{sec:Eq-payoffs}, taking the lower bound $p_{\ast}$
as a constraint. If $p_{*}\in(\pi_{\ell L},\xi)$, the $L0$ strategy
now uses $p_{*}\neq\pi_{\ell L}$ as a stopping bound and the value
is given by $V_{L0}(p;p_{*})$ as before. No further modifications
are needed in this case. If $p_{*}>\xi$, we replace the $LS$ strategy
by a modified version which we denote $LS_{*}$. According to this
strategy, the sender uses the $L$-drifting experiment for $p>p_{*}$
and switches to the stationary strategy when the belief drifts to
$p_{*}$. The value of this strategy is given by $V_{LS}(p;p_{*})$.
We also set $\widehat{V}_{*}(p)=V_{LS}(p;p_{*})$ if $p_{*}>\xi$.
With this we can then characterize the value of the sender's best
response as $\widetilde{V}_{*}(p)=\max\{V_{R}(p),\widehat{V}_{*}(p)\}=\max\{V_{R}(p),V_{LS}(p;p_{*})\}$.
Verification of the sender's best response proceeds using similar
steps to those in Section \ref{subsec:sender_verify}.

(a) The value of the sender's best response in this case is given
by $V(p)=\max\{V_{LS}(p;p_{*}),V_{R}(p)\}$. If $V_{R}(p_{*})>V_{S}(p_{*})$
this implies $V(p)=V_{R}(p)$ since $V_{R}(p)>V_{S}(p)$ by Lemma
\ref{lem:Crossing}.(b) and therefore $V_{LS}(p;p_{*})$ cannot cross
$V_{S}(p)$ from below by Lemma \ref{lem:Crossing}.(c). With $V(p)=V_{R}(p)>V_{S}(p)$,
the Unimprovability Lemma \ref{lem:unimprovability} implies that
there is a unique best response, the $R$-drifting strategy, and therefore
$U(p_{*})=U_{R}(p_{*})>U_{S}(p_{\ast})>U_{\ell}(p_{\ast})$

If $V_{R}(p_{*})\le V_{S}(p_{*})$, we must have $p_{*}<\xi_{2}$
since $V_{R}(p)>V_{S}(p)$ for $p\in[\xi_{2},p^{*})$ by Lemma \ref{lem:Crossing}.(b).
Also by Lemma \ref{lem:Crossing}.(b), $V_{LS}(p;p_{*})>V_{S}(p)$
for $p\in(p_{*},\xi_{2}]$. Hence $V(p)>V_{S}(p)$ for all $p\in(p_{*},p^{*}$).
By the Unimprovability Lemma \ref{lem:unimprovability} this implies
that there is a unique best response for all $p\in[p_{*},p^{*}]$
except at the belief where $V_{LS}(p;p_{*})$ and $V_{R}(p)$ intersect
(which could occur at $p_{*}$), and in the latter case $U_{LS}(p;p_{*})=U_{R}(p)$
since the sender and the receiver incur the same cost. Hence $U(p_{*})=U_{LS}(p_{*};p_{*})=U_{S}(p_{*})>U_{\ell}(p_{\ast})$,

(b) In this case, $V(p)=V_{L0}(p;p_{*})>0>\widetilde{V}(p)>V_{S}(p)$
for $p\in(\pi_{\ell L},\pi_{\ell R})$. Therefore, the Unimprovability
Lemma \ref{lem:unimprovability} implies that there is a unique best
response for $p\in(\pi_{\ell L},\pi_{\ell R})$ and the the sender
uses the $L0$ strategy with stopping bound $p_{*}$. Therefore we
have $U(p)=V_{L0}(p;p_{\ast})$ for all $p\in[p_{\ast},\pi_{\ell R})$.

(c) In this case $V(p)=\widetilde{V}(p)>V_{S}(p)$ for $p\in[p_{*},\xi)$,\footnote{To see this note that $\widetilde{V}(p_{*})\ge V_{L0}(p_{*};p_{*})=0$
and $\widetilde{V}(p)>V_{S}(p)$ for $p\in[p_{*},\xi)$. Hence by
the Crossing Lemma \ref{lem:Crossing}.(d), $V_{L0}(p;p_{*})<\widetilde{V}(p)$
for all $p\in(p_{*},\xi)$. } and the Unimprovability Lemma \ref{lem:unimprovability} implies
that there is a unique best response: the sender uses the $RS$-strategy
if $p^{*}>\eta$ and the $R$-drifting strategy if $p^{*}<\eta$.\footnote{In the case $p^{*}=\eta$, both the sender and the receiver are indifferent
between the $RS$-strategy and the $R$-drifting strategy.} Therefore we have $U(p_{*})=\widetilde{U}(p_{*})$
\end{proof}
By Lemmas \ref{lem:p_ast_lower_bound}, \ref{lem:p_ast_R_region},
and \ref{lem:p_ast_L_region}.(a), we must have $\pi_{\ell L}\le p_{*}<\pil_{LR}$
if $c<\min\{c_{1},\ldots,c_{6}\}$. Now we further narrow down possible
equilibrium values of $p_{*}$ and show that in Proposition \ref{prop:C2_hold}
we must have $p_{*}=\phi_{\ell R}$, and in Proposition \ref{prop:C2_fail}
we must have $p_{*}=\pi_{\ell L}$. With that it only remains to show
uniqueness of the sender's equilibrium strategy which follows from
the Unimprovability Lemma.

\paragraph*{Proposition \ref{prop:C2_hold}.}

Proposition \ref{prop:C2_hold} concerns the case where \eqref{cond:low_receiver_rent}
holds. First we rule out $p_{*}\in(\pi_{\ell R},\underline{\pi}_{LR})\setminus\{\phi_{\ell R}\}$.
If $p_{*}\in(\pi_{\ell R},\underline{\pi}_{LR})$, $V(p_{*})=\widetilde{V}(p_{*})$
and $U(p_{*})=\widetilde{U}(p_{*})$ by Lemma \ref{lem:p_ast_L_region}.(c).
Since $p_{*}>\pi_{\ell R}$, $V(p_{*})=\widetilde{V}(p_{*})>0$ and
hence the necessary condition for an SMPE implies that $U(p_{*})=\widetilde{U}(p_{*})=U_{\ell}(p_{\ast})$.
But this condition only holds when $p_{*}=\phi_{\ell R}$, in which
case the equilibrium is as specified in the Proposition \ref{prop:C2_hold}.

Next, we rule out $p_{\ast}\in(\pi_{\ell L},\pi_{\ell R})$. We begin
by showing that $p_{*}<\pi_{\ell R}$ implies $p_{*}<\phi_{\ell L}$
if $c\le c_{5}$. To see this, recall that by Lemma \ref{lem:pi_ellL_below_pi_0},
$\pi_{0}<\phi_{\ell L}$ if $c\le c_{5}.$ Since $\widetilde{V}(p)$
is convex, the construction of $\pi_{0}$ therefore implies $\widetilde{V}(\phi_{\ell L})>0$.
On the other hand $\widetilde{V}(p_{*})<0$ if $p_{*}<\pi_{\ell R}$.
Therefore $p_{*}<\pi_{\ell R}$ implies $p_{*}<\phi_{\ell L}$.

Now we proceed given that $p_{\ast}<\phi_{\ell L}$: By Lemma \ref{lem:p_ast_L_region}.(b),
$p_{\ast}\in(\pi_{\ell L},\pi_{\ell R})$ implies that $U(p)=U_{L0}(p;p_{\ast})$
for all $p\in[p_{\ast},\pi_{\ell R})$. Therefore, $U(p)=U_{L0}(p;p_{\ast})<\max\{U_{\ell}(p),U_{r}(p)\}$
for $p\in(p_{\ast},\phi_{\ell L})$ which means that it is not optimal
for the receiver to wait for beliefs $p\in(p_{\ast},\phi_{\ell L})$
in the waiting region. Therefore $p_{\ast}\in(\pi_{\ell L},\phi_{\ell L})$
cannot arise in equilibrium.

\paragraph*{Proposition \ref{prop:C2_fail}.}

Suppose \eqref{cond:low_receiver_rent} fails. We first rule out $p_{*}\in(\pi_{\ell R},\underline{\pi}_{LR})$.
If $p_{*}\in(\pi_{\ell R},\underline{\pi}_{LR})$, then Lemma \ref{lem:p_ast_L_region}.(c)
implies $V(p_{*})=\widetilde{V}(p_{*})>0$ and with \eqref{cond:low_receiver_rent}
failing, this implies $U(p_{\ast})>U_{\ell}(p_{\ast})$. Hence, $p_{*}\in(\pi_{\ell R},\underline{\pi}_{LR})$
leads to a violation of the above necessary condition.

Next we rule out $p_{*}\in(\pi_{\ell L},\pi_{\ell R}]$. If $p_{*}\in(\pi_{\ell L},\pi_{\ell R}]$,
then by Lemma \ref{lem:p_ast_L_region}.(b) and (c), $V(p_{*})=0$.
Hence, for any $p\in(\pi_{\ell L},p_{*})$, by the refinement, the
sender uses the $L$-drifting experiment with jumps to $p^{*}$. Since
\eqref{cond:low_receiver_rent} fails, by the same argument as in
the analysis of the waiting region in Section \ref{sec:Verifying-the-Receiver's},
the receiver prefers to wait. This contradicts $p<p_{*}$.

We have ruled out all values for $p_{*}$ except $p_{*}=\pi_{\ell L}$.
Hence, the equilibrium specified in Proposition \ref{prop:C2_fail}
is unique (up to tie breaking at $\pil_{LR}$ and $\pih_{LR}$, and
if $p^{*}=\eta$). Uniqueness follows from the Unimprovability Lemma
\ref{lem:unimprovability} since $V(p)>V_{S}(p)$ for all $p\in[p_{*},p^{*})$.

\end{document}